\renewcommand{\vec}[1]{\boldsymbol{#1}}
\renewcommand{\subset}{\subseteq}
\renewcommand{\ln}{\log}
\newcommand{\oneplus}{V_1^+}
\newcommand{\oneminusminus}{V_1^{--}}
\newcommand{\zerominus}{V_0^-}
\newcommand{\zeroplus}{V_0^+}
\newcommand\minf{m_{\mathrm{inf}}}
\newcommand\malg{m_{\mathrm{alg}}}
\newcommand\vX{\vec X}
\newcommand\vY{\vec Y}
\newcommand\vm{{\vec m}}
\newcommand\vA{\vec A}
\newcommand\G{\vec G}
\newcommand\vH{\vec H}
\newcommand\KL[2]{D_{\mathrm{KL}}\bc{{{#1}\|{#2}}}}
\newcommand\SIGMA{\vec\sigma}
\newcommand\pl[1]{\textcolor{black}{#1}}
\newcommand\mhk[1]{\textcolor{black}{#1}}
\newcommand\cA{\mathcal{A}}
\newcommand\cB{\mathcal{B}}
\newcommand\cG{\Gamma}
\newcommand\cE{\mathcal{E}}
\newcommand\cN{\mathcal{N}}
\newcommand\cS{\mathcal{S}}
\newcommand\cT{\mathcal{T}}
\newcommand\cM{\mathcal{M}}
\newcommand\cP{\mathcal{P}}
\newcommand\cV{\mathcal{V}}
\newcommand\cW{\mathcal{W}}
\def\cR{{\mathcal R}}
\def\cE{{\mathcal E}}
\newcommand\eul{\mathrm{e}}
\newcommand\eps{\varepsilon}
\newcommand\Erw{\mathbb{E}}
\newcommand{\vecone}{\vec{1}}
\newcommand{\Bin}{{\rm Bin}}
\newcommand{\Mult}{{\rm Mult}}
\newcommand{\bink}[2] {{\binom{#1}{#2}}}
\newcommand\bc[1]{\left({#1}\right)}
\newcommand\cbc[1]{\left\{{#1}\right\}}
\newcommand\bcfr[2]{\bc{\frac{#1}{#2}}}
\newcommand{\bck}[1]{\left\langle{#1}\right\rangle}
\newcommand\brk[1]{\left\lbrack{#1}\right\rbrack}
\newcommand\scal[2]{\bck{{#1},{#2}}}
\newcommand\abs[1]{\left|{#1}\right|}
\newcommand\RR{\mathbb{R}}
\newcommand{\Whp}{W.h.p.}
\newcommand{\whp}{w.h.p.}
\newcommand\pr{\mathbb{P}} 
\renewcommand\Pr{\pr} 
\newcommand\Lem{Lemma}
\newcommand\Prop{Proposition}
\newcommand\Thm{Theorem}
\newcommand\Cor{Corollary}
\newcommand\Sec{Section}
\newtheorem{definition}{Definition}[section]
\newtheorem{theorem}[definition]{Theorem}
\newtheorem{lemma}[definition]{Lemma}
\newtheorem{proposition}[definition]{Proposition}
\newtheorem{corollary}[definition]{Corollary}
\newcommand{\floor}[1]{\left\lfloor#1\right\rfloor}
\newcommand{\ceil}[1]{\left\lceil#1\right\rceil}
\newcommand{\ind}[1]{{\pmb 1}\{#1\}}
\def\pr{{\mathbb P}}
\newcommand{\remove}[1]{}
\newcommand{\be}{\begin{equation}}
	\newcommand{\bel}[1]{\begin{equation}\lab{#1}\ }
		\newcommand{\ee}{\end{equation}}
	\newcommand{\bea}{\begin{eqnarray}}
		\newcommand{\eea}{\end{eqnarray}}
	\newcommand{\bean}{\begin{eqnarray*}}
		\newcommand{\eean}{\end{eqnarray*}}
	\newcommand{\mone}{\vm_1}
	\newcommand{\mzero}{\vm_0}
\begin{document}

	\title{Information-theoretic and algorithmic thresholds for group testing}
	
	\thanks{Supported by DFG CO 646/3 and Stiftung Polytechnische Gesellschaft. An extended abstract of this work appeared in the 2019 ICALP proceedings. A revised version is to appear in IEEE Transactions on Information Theory (Copyright (c) 2017 IEEE DOI: 10.1109/TIT.2020.3023377)}

	\author{Amin Coja-Oghlan, Oliver Gebhard, Max Hahn-Klimroth, Philipp Loick}
	
	\address{Amin Coja-Oghlan, {\tt acoghlan@math.uni-frankfurt.de}, Goethe University, Mathematics Institute, 10 Robert Mayer St, Frankfurt 60325, Germany.}
	
	\address{Oliver Gebhard, {\tt gebhard@math.uni-frankfurt.de}, Goethe University, Mathematics Institute, 10 Robert Mayer St, Frankfurt 60325, Germany.}
	
	\address{Max Hahn-Klimroth, {\tt hahnklim@math.uni-frankfurt.de}, Goethe University, Mathematics Institute, 10 Robert Mayer St, Frankfurt 60325, Germany.}
	
	\address{Philipp Loick, {\tt loick@math.uni-frankfurt.de}, Goethe University, Mathematics Institute, 10 Robert Mayer St, Frankfurt 60325, Germany.}
	
	\begin{abstract}
	In the group testing problem we aim to identify a small number of infected individuals within a large population.
We avail ourselves to a procedure that can test a group of multiple individuals, with the test result coming out positive iff at least one individual in the group is infected.
With all tests conducted in parallel, what is the least number of tests required to identify the status of all individuals?
In a recent test design [Aldridge et al.\ 2016] the individuals are assigned to test groups randomly \pl{with replacement}, with every individual joining an \pl{almost} equal number of groups.
We pinpoint the sharp threshold for the number of tests required in this randomised design so that it is information-theoretically possible to infer the infection status of every individual.
Moreover, we analyse two efficient inference algorithms. 
These results settle conjectures from [Aldridge et al.\ 2014, Johnson et al.\ 2019].
	\end{abstract}

	\maketitle

\section{Introduction}\label{Sec_intro}

\subsection{Background and motivation}
The group testing problem goes back to the work of Dorfman from the 1940s~\cite{Dorfman_1943}.
Among a large population a few individuals are infected with a rare disease.
The objective is to identify the infected individuals effectively.
At our disposal we have a testing procedure capable of not merely testing one individual, but several.
The test result will be positive if \pl{at least one individual} in the test group is infected, and negative otherwise;
all tests are conducted in parallel.
We are at liberty to assign a single individual to several test groups.
The aim is to devise a test design that identifies the status of every single individual correctly while requiring as small a number of tests as possible.
A recently proposed test design allocates the individuals to tests randomly ~\cite{Aldridge_2017,Aldridge_2014, Aldridge_2016,Johnson_2019,Mezard2008}.
To be precise, given integers $n,m,\Delta>0$ we create a random bipartite multi-graph 
by choosing independently for each of the $n$ vertices $x_1,\ldots,x_n$ `at the top' $\Delta$ neighbours among the $m$ vertices $a_1,\ldots,a_m$ `at the bottom' uniformly at random with replacement.
The vertices $x_1,\ldots,x_n$ represent the individuals, the $a_1,\ldots,a_m$ represent the test groups and an individual joins a test group iff the corresponding vertices are adjacent (see Figure \ref{fig_FigureGTexample}).
The wisdom behind this construction is that the expansion properties of the random bipartite graph precipitate virtuous correlations, facilitating inference.
Given $n$ and (an estimate of) the number $k$ of infected individuals, what is the least $m$ for which, with a suitable choice of $\Delta$, the status of every individual can be inferred correctly from the test results with high probability?Like in many other inference problems the answer comes in two instalments.
First, we might ask for what $m$ it is {\em information-theoretically} possible to detect the infected individuals.
In other words, regardless of computational resources, do the test results contain enough information in principle to identify the infection status of every individual?
Second, for what $m$ does this problem admit {\em efficient algorithms}?
The first main result of this paper resolves the information-theoretic question completely.
Specifically, Aldridge, Johnson and Scarlett~\cite{Aldridge_2016} obtained a function $\minf=\minf(n,k)$ such that for any fixed $\eps>0$ 
the inference problem is information-theoretically infeasible if $m<(1-\eps)\minf$. They conjectured that this bound is tight, i.e., that for $m>(1+\eps)\minf(n,k)$ there is an (exponential) algorithm that correctly identifies the infected individuals with high probability.
We prove this conjecture.  Furthermore, concerning the algorithmic question, Johnson, Aldridge and Scarlett~\cite{Johnson_2019} obtained a function $\malg=\malg(n,k)$ that exceeds $\minf$ by a constant factor \pl{for small $k$} such that for $m>(1+\eps)\malg$ certain efficient algorithms successfully identify the infected individuals with high probability.
They conjectured that {\tt SCOMP}, their most sophisticated algorithm, actually succeeds for smaller values of $m$.
We refute this conjecture and show that {\tt SCOMP} \mhk{asymptotically} fails to outperform a much simpler algorithm called {\tt DD}.
A technical novelty of the present work is that we investigate the group testing problem from a new perspective.
While most prior contributions rely either on elementary calculations and/or information-theoretic arguments~\cite{Aldridge_2014, Aldridge_2016, Johnson_2019,Scarlett_2016}, here we bring to bear techniques from the theory of random constraint satisfaction problems~\cite{ANP,MM}.
\newpage

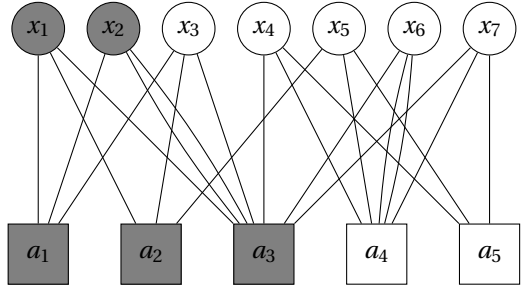
\begin{figure}
\captionsetup{margin=0.01cm}
\centering
\begin{tikzpicture}[scale=1]
\node[circle, draw, minimum width=0.7cm, fill=gray] (x0) at (0, 0) {$x_1$};
\node[circle, draw, minimum width=0.7cm, fill=gray] (x1) at (1,0) {$x_2$};
\node[circle, draw, minimum width=0.7cm] (x2) at (2,0) {$x_3$};
\node[circle, draw, minimum width=0.7cm] (x3) at (3,0) {$x_4$};
\node[circle, draw, minimum width=0.7cm] (x4) at (4,0) {$x_5$}; 
\node[circle, draw, minimum width=0.7cm] (x5) at (5,0) {$x_6$};
\node[circle, draw, minimum width=0.7cm] (x6) at (6,0) {$x_7$};

\node[rectangle, draw, minimum width=0.8cm, minimum height=0.8cm, fill=gray] (a1) at (0, -3) {$a_1$};
\node[rectangle, draw, minimum width=0.8cm, minimum height=0.8cm, fill=gray] (a2) at (1.5,-3) {$a_2$};
\node[rectangle, draw, minimum width=0.8cm, minimum height=0.8cm, fill=gray] (a3) at (3, -3) {$a_3$};
\node[rectangle, draw, minimum width=0.8cm, minimum height=0.8cm] (a4) at (4.5, -3) {$a_4$};
\node[rectangle, draw, minimum width=0.8cm, minimum height=0.8cm] (a5) at (6,-3) {$a_5$};

\path[draw] (x0) -- (a1);
\path[draw] (x0) -- (a2);
\path[draw] (x0) -- (a3);
\path[draw] (x1) -- (a1);
\path[-] (x1) edge [bend left=5] (a3);
\path[-] (x1) edge [bend right=5] (a3);
\path[draw] (x2) -- (a1);
\path[draw] (x2) -- (a3);
\path[draw] (x2) -- (a2);
\path[draw] (x3) -- (a3);
\path[draw] (x3) -- (a4);
\path[draw] (x3) -- (a5);
\path[draw] (x4) -- (a2);
\path[draw] (x4) -- (a5);
\path[draw] (x4) -- (a4);
\path[draw] (x5) -- (a3);
\path[-] (x5) edge [bend left=5] (a4);
\path[-] (x5) edge [bend right=5] (a4);
\path[draw] (x6) -- (a3);
\path[draw] (x6) -- (a5);
\path[draw] (x6) -- (a4);
\end{tikzpicture}
\caption{The graph illustrates a small example of a group testing instance, with the individuals $x_1,\ldots,x_7$\\ at the top and the tests $a_1,\ldots,a_5$ at the bottom.
Infected individuals and positive tests are coloured in grey.}
\label{fig_FigureGTexample}
\end{figure}

Indeed, group testing can be viewed naturally as a constraint satisfaction problem:
the tests provide the constraints and the task is to find all possible ways of assigning a status (`infected' or `not infected') to the $n$ individuals in a way consistent with the given test results.
Since the allocation of individuals to tests is random, this question is similar in nature to, e.g., the random $k$-SAT problem that asks for a Boolean assignment that satisfies a random collection of clauses~\cite{nae,yuval,KostaSAT,DSS3}.
%Apart from obtaining the aforementioned new results, this novel perspective allows for short proofs of results that were established more laboriously in prior work.
It also puts the group testing problem in the same framework as the considerable body of recent work on other inference problems on random graphs such as the stochastic block model (e.g.,~\cite{AbbeSurvey,CKPZ,Decelle,CrisSurvey,Mossel,LF}) or decoding from pooled data \cite{alaoui_2019, alaoui_2019_2}.

We proceed to state the main results of the paper precisely, followed by a detailed discussion of the prior literature on group testing.
The proofs of the information-theoretic and algorithmic bounds follow in \ref{Sec_InfThUpper}, \Sec~\ref{Sec_InfThLower}, and \ref{Sec_alg}. \pl{The technical details can be found in the appendix.}

\subsection{The information-theoretic threshold}
Throughout the paper we labour under the assumptions commonly made in the context of group testing; we will revisit their merit in \Sec~\ref{Sec_related}.
Specifically, we assume that the number $k$ of infected individuals satisfies $k\sim n^\theta$ for a fixed $0<\theta<1$ \footnote{While we write that $k \sim n^{\theta}$ for the sake of brevity, our results immediately extend to the case $k \sim Cn^\theta$ for some constant $C$.}.
Moreover, let $\SIGMA\in\{0,1\}^{\{x_1,\ldots,x_n\}}$ be a vector of Hamming weight $k$ chosen uniformly at random.
The (one-)entries of $\SIGMA$ indicate which of the $n$ individuals are infected.
Moreover, let $\G=\G(n,m,\Delta)$ signify the aforementioned random bipartite graph \pl{with multi-edges}.
Then $\SIGMA$ induces a vector $\hat\SIGMA\in\{0,1\}^{\{a_1,\ldots,a_m\}}$ that indicates which of the $m$ tests come out positive.
To be precise, $\hat\SIGMA_i=1$ iff test $a_i$ is adjacent to an individual $x_j$ with $\SIGMA_{x_j}=1$.
For what $m$ is it possible to recover $\SIGMA$ from $\G,\hat\SIGMA$?
(Throughout the paper all logarithms are base $\eul$.)

\begin{theorem}\label{Thm_inf}
Suppose that $0<\theta<1$, $k \sim n^{\theta}$ and $\eps>0$ and let
\begin{align*}
\minf=\minf(n,\theta)&=\frac{k \log\bc{n/k}}{\min\cbc{1,\frac{1-\theta}\theta\log2}\log2}.
\end{align*}
\begin{enumerate}[(i)]
\item If $m>(1+\eps)\minf(n,\theta)$, then there exists an algorithm that given $\G,\hat\SIGMA$ outputs $\SIGMA$ with high probability.
\item If $m<(1-\eps)\minf(n,\theta)$, then there does not exist any algorithm that given $\G,\hat\SIGMA,k$ outputs $\SIGMA$ with a non-vanishing probability.
\end{enumerate}
\end{theorem}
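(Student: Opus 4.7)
The plan is to analyse the posterior of $\SIGMA$ given $(\G,\hat\SIGMA)$ directly via the planted model. Because $\SIGMA$ is uniform on weight-$k$ vectors and $\hat\SIGMA$ is a deterministic function of $(\G,\SIGMA)$, the posterior of $\SIGMA$ is uniform on
\[
\cS(\G,\hat\SIGMA)=\cbc{\tau\in\{0,1\}^n:\tau\text{ has Hamming weight }k\text{ and induces the outcomes }\hat\SIGMA}.
\]
Any algorithm recovering $\SIGMA$ from $(\G,\hat\SIGMA,k)$ with non-vanishing probability must have $|\cS|=1$ with non-vanishing probability, while if $|\cS|=1$ w.h.p.\ then exhaustive search over weight-$k$ vectors succeeds. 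Both parts of \Thm~\ref{Thm_inf} thus reduce to estimating the number $Z=|\cS(\G,\hat\SIGMA)|-1$ of \emph{impostors} in the planted model; part (ii) essentially recovers the argument of Aldridge, Johnson and Scarlett, so the principal novelty lies in part (i).

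The key structural observation is that any impostor $\tau$ must set $\tau_x=0$ for every individual $x$ lying in some negative test, since relabelling such an $x$ as infected would flip that test to positive. The only swappable individuals are therefore those whose tests are all positive; among them, the truly healthy ones, which I will call \emph{disguised}, form a pool $\cV^+$. An impostor at Hamming distance $2\ell$ from $\SIGMA$ is then a choice of $\ell$ truly infected individuals to toggle off together with $\ell$ members of $\cV^+$ to toggle on, subject to the covering constraint that every originally positive test still contains at least one infected individual under $\tau$. Writing $Z=\sum_{\ell\ge 1}Z_\ell$ by the displacement size, the computation of $\Erw Z_\ell$ in the planted model reduces to standard Poissonised counting once the test degree is optimised; the natural choice $\Delta\sim(m/k)\log 2$ equalises the expected fractions of positive and negative tests and hence controls $|\cV^+|$.

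Evaluating $\Erw Z_\ell$ reveals two competing exponential scales. For $\ell=O(1)$, the probability of finding a one-for-one swap that covers every positive test is exponentially small in $m/k$ and, after optimisation in $\Delta$, yields the bare $\log 2$ in the denominator of $\minf$. For $\ell=\Theta(k)$ the entropy factor $\binom{k}{\ell}\binom{n-k}{\ell}$ instead becomes dominant and competes with a global covering penalty whose rate is modulated by $(1-\theta)/\theta$, the ratio of the healthy and infected population sizes, producing the $\tfrac{1-\theta}{\theta}\log 2$ term. The threshold $\minf$ is precisely where the larger of these two exponents crosses zero, and the minimum in its definition simply reflects which of the two mechanisms binds, depending on whether $\theta$ lies below or above $\log 2/(1+\log 2)$. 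Part (i) then follows from $\sum_\ell\Erw Z_\ell=o(1)$ via Markov's inequality and the posterior reduction, while for the converse (ii) one shows $\Erw Z\to\infty$ and amplifies to $\Pr[Z\ge 1]=1-o(1)$.

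The principal technical obstacle will be the converse amplification: the shared randomness of $\G$ induces correlations between pairs of impostors $(\tau,\tau')$, so one has to decompose the second moment $\Erw Z_\ell^2$ by the joint overlap of $\tau,\tau'$ with $\SIGMA$ and argue that nearly independent pairs dominate, giving $\Erw Z_\ell^2=(1+o(1))(\Erw Z_\ell)^2$ on the binding scale and hence $\Pr[Z_\ell\ge 1]=1-o(1)$ by Paley--Zygmund. A secondary hurdle on the achievability side (i) is the uniform control of $\Erw Z_\ell$ across all $\ell\in\{1,\ldots,k\}$: although the exponents peak at the endpoints, the intermediate values must be ruled out by a careful Stirling-type expansion, and the choice of $\Delta$ must be tuned jointly against both mechanisms, which is exactly what produces the minimum in the definition of $\minf$.
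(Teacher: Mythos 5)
There is a genuine gap on the achievability side. You propose to finish part (i) by showing $\sum_{\ell\ge 1}\Erw Z_\ell=o(1)$ and applying Markov. This fails: for small displacement $\ell=O(1)$ (equivalently, overlap $\scal{\SIGMA}{\tau}$ close to $k$ in the paper's parametrisation), the \emph{unconditional} first moment is \emph{not} $o(1)$ on the relevant range of $m$ — rare, atypically dense instances of $\G$ drive it up even though $Z_\ell=0$ w.h.p. This is the ``lottery phenomenon'' familiar from random CSPs, and the paper explicitly flags it as the reason the moment bound of \Prop~\ref{Prop_small_overlap} only covers $\ell<(1-1/\log n)k$. The fix (\Lem~\ref{Lemma_rigid}, \Lem~\ref{Lemma_big_overlap}) is to condition on the high-probability rigidity event $\cR$ — every truly infected individual sits alone in at least $\delta\Delta$ tests — and then bound the \emph{conditional} expectation $\Erw[Z_{k,\ell}\mid\cG,\cR,\mzero]$. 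Your planned ``careful Stirling-type expansion'' over intermediate $\ell$ does not touch this problem; the failure is at the endpoint, not in the interior. You also seem to have the two regimes swapped: the few-swap regime is the one governed by the rigidity/density condition $c>\frac{\theta}{(1-\theta)\log^2 2}$ (the $\tfrac{1-\theta}{\theta}\log 2$ part of the denominator), while the large-displacement regime gives the bare counting condition $c>1/\log 2$; you attribute them the other way round, which obscures exactly where the hard analysis lies.

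Two smaller points. First, for part (ii) you propose a Paley--Zygmund / second-moment argument with a pair-overlap decomposition; this is considerably heavier machinery than what is needed. The paper shows directly that $|V_0^+|,|V_1^+|=n^{\Omega(1)}$ w.h.p.\ (via a short Chebyshev estimate inside \Prop~\ref{Lemma_V_all}), and then every pair $(x,y)\in V_1^+\times V_0^+$ yields an alternative satisfying configuration by a single swap, so $Z_k(\G,\hat\SIGMA)\ge|V_0^+|\,|V_1^+|=\omega(1)$ with no joint-overlap analysis required. The classical counting bound covers the remaining small-$\theta$ range. Second, part (i) asserts recovery from $(\G,\hat\SIGMA)$ \emph{alone}, without knowledge of $k$; your exhaustive search over weight-$k$ vectors uses $k$, so you still need an argument ruling out satisfying configurations of smaller Hamming weight (the paper's \Prop~\ref{Prop_min}), which reduces the inference to a minimum vertex cover instance.
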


Since for $\theta\leq\log(2)/(1+\log(2))$ the first part of \Thm~\ref{Thm_inf} readily follows from a folklore argument~\cite{DuHwang}, the interesting regime is $\theta>\log(2)/(1+\log(2))\approx 0.41$.
The negative part of \Thm~\ref{Thm_inf} strengthens a result from \cite{Aldridge_2016}, who showed that for $m<(1-\eps)\minf$ any inference algorithm has a strictly positive error probability.
By comparison, \Thm~\ref{Thm_inf} shows that any algorithm fails with {\em high} probability.

But the main contribution of \Thm~\ref{Thm_inf} is the first, positive statement.
While the problem was solved for $\theta<1/3$ for a different test design \cite{Scarlett_2016, Scarlett_2017} and the case $\theta>1/2$ is easy because a plain greedy algorithm succeeds~\cite{Johnson_2019}, the case $1/3<\theta<1/2$ proved more challenging. Only heuristic arguments \pl{predicting the result of \Thm~\ref{Thm_inf}} have been put forward for this regime so far \cite{Mezard2008}. Indeed, Aldridge et al.~\cite{Aldridge_2014} conjectured that in this case inferring $\SIGMA$ from $\G,\hat\SIGMA$ is equivalent to solving a hypergraph minimum vertex cover problem.
The proof of \Thm~\ref{Thm_inf} vindicates this conjecture.
Specifically, the vertex set of the hypergraph comprises all `potentially infected' individuals, i.e., those that do not appear in any negative test.
The hyperedges are the neighbourhoods $\partial a_i$ of the positive tests $a_i$ in $\G$.
Exhaustive search solves this vertex cover problem in time $\exp(O(n^\theta\log n))$.
But how about efficient algorithms for general $\theta$?

\subsection{Efficient algorithms for group testing} \label{sec_sub_algo}
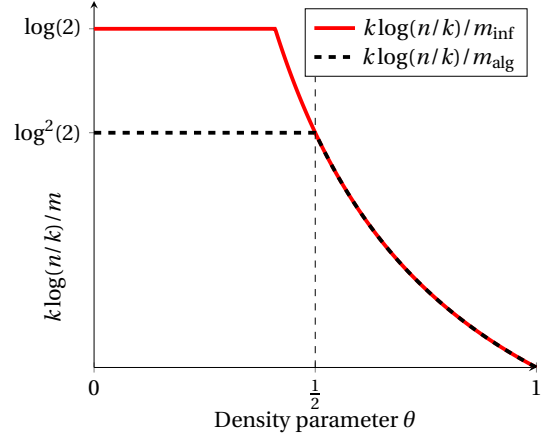
\begin{wrapfigure}{r}{0.5 \textwidth}
\captionsetup{margin=0.01cm}
\centering
\begin{tikzpicture}[scale=0.9]
\begin{axis}[
axis lines = left,
xlabel = Density parameter $\theta$,
ylabel = $k \log(n/k)/m$,
ylabel style={at={(0.1,0.3)}},
xtick={0, 0.5, 1},
xticklabels={$0$,$\frac{1}{2}$, $1$},
ytick={0.4804, 0.6931},
yticklabels={$\log^2(2)$,$\log(2)$},
ymin = 0,
ymax = 0.75,
xmin = 0,
xmax=1,
height=7cm
]
% Information-theoretic Threshold
\addplot [
domain=0:1, 
samples=500, 
style={ultra thick},
color=red,
]
{min(ln(2), (1-x)/x * ln(2)^2)};
\addlegendentry{$k \log (n/k)/m_{\text{inf}}$}
% Algorithmic Threshold
\addplot [
domain=0:1, 
samples=500, 
color=black,
style={ultra thick},
dashed
]
{min(ln(2)^2, (1-x)/x * ln(2)^2)};
\addlegendentry{$k \log (n/k)/m_{\text{alg}}$}
\addplot+[
mark=none,
color=black,
dashed
]
coordinates
{(0.5,0) (0.5,0.7)};
% Information-theoretic Threshold
%\addplot [
% domain=0:1, 
% samples=100, 
% color=black,
% ]
% {ln(2)};
%\addlegendentry{General inf-theor. lower bound}
\end{axis}
\end{tikzpicture}
\caption{The red line shows the information theoretic threshold $\minf$, the dashed black line signifies the bound $\malg$ which is achieved by the both the {\tt SCOMP} and the {\tt DD} algorithm.}
\label{fig_FigureThresholds}
\end{wrapfigure}
Several polynomial time group testing algorithms have been proposed.
A very simple greedy strategy called {\tt DD} (for `definitive defectives') first labels all individuals
that are members of negative test groups as uninfected.
Subsequently it checks for positive tests in which all individuals but one have been identified as uninfected in the first step.
Clearly, the single as yet unlabelled individual in such a test group must be infected.
Up to this point all decisions made by {\tt DD} are correct.
But in the final step {\tt DD} marks all as yet unclassified individuals as uninfected, possibly causing false negatives.
In fact, the output of {\tt DD} may be inconsistent with the test results as possibly some positive tests may fail to \pl{include} an individual classified as 'infected'. \pl{While an achievability result is known for the {\tt DD} algorithm, a corollary of the work in this paper is a matching converse.}

The more sophisticated {\tt SCOMP} algorithm is roughly equivalent to the well-known greedy algorithm for the hypergraph vertex cover problem applied to the hypergraph from the previous paragraph.
Specifically, in its first step {\tt SCOMP} proceeds just like {\tt DD}, classifying all individuals that occur in negative tests as uninfected.
Then {\tt SCOMP} identifies as infected all unmarked individuals that appear in at least one test whose other participants are already known to be uninfected.
Subsequently the algorithm keeps picking an individual that appears in the largest number of as yet `unexplained' (viz.\ uncovered) positive tests and marks that individual as infected, with ties broken randomly, until every positive test contains an individual classified as infected.
Clearly, {\tt SCOMP} may produce false positives as well as false negatives.
But at least the output is consistent with the test results.
\pl{Algorithm \ref{SCOMP_algorithm} summarises the procedure of {\tt SCOMP}.}

\IncMargin{1em}
\begin{algorithm}[ht]
\KwData{$\G$, $\hat\SIGMA$, $k$}
\KwResult{estimate of $\SIGMA$}
Classify all individuals in negative tests as healthy \& remove such individuals and tests from $\G$\;
Classify all individuals that appear in at least one positive test as the only yet unclassified individuals as infected \& remove such individuals and tests from $\G$\;
\While{there exists at least one test in $\G$}{Classify the individual appearing in the largest number of remaining tests as infected \& remove this individual and all adjacent tests from $\G$}
Classify all remaining individuals as healthy\;
\caption{Description of the {\tt SCOMP} algorithm} \label{SCOMP_algorithm}
\end{algorithm}
\DecMargin{1em}

Analysing {\tt SCOMP} has been prominently posed as an open problem in the group testing literature~\cite{Aldridge_2017b,Aldridge_2014, Johnson_2019}.
Indeed, Aldridge et al.~\cite{Aldridge_2014} opined that ``the complicated sequential nature of {\tt SCOMP} makes it difficult to analyse mathematically''.
On the positive side, \cite{Aldridge_2014} proved that {\tt SCOMP} succeeds in recovering $\SIGMA$ correctly given $(\G,\hat\SIGMA)$ if $m>(1+\eps)\malg(n,\theta)$ \whp \footnote{\pl{\Whp refers to a probability of $1-o(1)$ as $n \to \infty$.}}, where
\begin{align}\label{eqmalg}
\malg=\malg(n,\theta)&=\frac{k \log\bc{n/k}}{\min\cbc{1,\frac{1-\theta}\theta}\log^2 2}.
\end{align}
However, the algorithm succeeds for a trivial reason;
namely, for $m>(1+\eps)\malg$ even {\tt DD} suffices to recover $\SIGMA$ \whp\ 
Yet based on experimental evidence \cite{Aldridge_2014, Johnson_2019} conjectured that {\tt SCOMP} strictly outperforms {\tt DD}.
The following theorem refutes this conjecture.

\begin{theorem}\label{Thm_alg}
Suppose that $0<\theta<1$ and $\eps>0$.
If $m<(1-\eps)\malg(n,\theta)$, then given $\G,\hat\SIGMA$ {\whp\ both {\tt SCOMP} and {\tt DD} fail to output $\SIGMA$}.
\end{theorem}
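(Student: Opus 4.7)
The plan is to show that below the threshold $\malg$ the first pruning step of both algorithms leaves behind a large population of \emph{disguised} healthy individuals (those that happen to sit in no negative test), and that this surplus prevents either algorithm from recovering $\SIGMA$. Let $D$ denote the number of healthy individuals appearing in no negative test. With the near-optimal choice of $\Delta$ (which makes each test negative with constant probability), a direct first/second moment computation yields $\Erw[D]=n^{1-c(1-\theta)+o(1)}$ where $c=m/\malg$, so that $D=\omega(k)$ with high probability precisely when $m<(1-\eps)\malg$. The near-independence of the test neighbourhoods in the $\Delta$-out random bipartite model supplies the necessary concentration.

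\textbf{DD fails.} Conditioning on the planted $\SIGMA$, an infected individual $x$ is classified correctly by DD iff some test containing $x$ has no other potentially infected (PI) member, i.e.\ every other participant appears in some negative test. Using that the $\Delta$ neighbourhoods of different vertices are essentially independent, the probability that a given infected $x$ possesses such a private test is at most $1-\bc{1-\bc{1-k/n}^{\Theta(D\Delta/m)}}^{\Delta}$, which for $D=\omega(k)$ is bounded away from $1$. A second moment argument on the number of unresolved infected individuals then shows that this count tends to infinity w.h.p., so DD misclassifies at least one infected individual as healthy and thus fails.

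\textbf{SCOMP fails.} The first two phases of SCOMP coincide with DD, so SCOMP inherits the unresolved infected individuals from the previous step. The greedy third phase then picks, at each iteration, a remaining PI individual of maximum degree in the currently uncovered positive tests. The crucial observation is that every remaining PI vertex still has all $\Delta$ of its tests positive, so unresolved truly infected and disguised healthy individuals are statistically exchangeable in the residual bipartite graph: the posterior law of $\SIGMA$ restricted to the unresolved PI set is, to leading order, uniform among subsets of the right size. Hence SCOMP's greedy choice lands on a disguised healthy vertex with probability at least $D/(D+k)=1-o(1)$, producing a false positive and forcing the output to differ from $\SIGMA$.

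\textbf{Main obstacle.} The delicate ingredient is the exchangeability asserted in the SCOMP step: one must show that the conditional distribution of the true infection given $(\G,\hat\SIGMA)$ does not break the symmetry between infected and disguised healthy individuals, and that this symmetry persists as the greedy phase makes successive choices and removes covered tests. I expect this to go through via a planting versus null-model (contiguity) argument in the spirit of random constraint satisfaction, combined with sharp control on the local degree profile and on the joint neighbourhood structure in the $\Delta$-out random bipartite graph so that the greedy's tie-breaking remains close to uniform throughout the process.
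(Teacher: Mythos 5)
Your high-level decomposition is close to the paper's --- both observe that (a) after the negative-test pruning there remain unresolved infected individuals (DD fails) and (b) disguised healthy individuals vastly outnumber the true infected set (SCOMP fails) --- but your treatment of SCOMP has a genuine gap exactly where you flag the ``main obstacle.'' The greedy phase of SCOMP is not choosing uniformly from the surviving potentially-infected set; it selects a vertex of \emph{maximum residual degree}, with ties broken randomly. Your exchangeability/contiguity heuristic does not address this: even if the posterior on the unresolved PI set were close to uniform over subsets, it would say nothing about whether the degree ordering systematically favours truly infected vertices, which is what the greedy acts on. Proving that kind of symmetry-through-the-process would be substantial work, and it is precisely what the paper avoids.

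The paper's route is cleaner and more elementary on exactly this point. First, \Prop~\ref{Lemma_--} shows the set $V_1^{--}$ is \emph{empty} \whp\ (not merely that some infected individuals lack a private test, which is all your DD bound gives). This stronger statement matters: it makes SCOMP's second stage literally void, so the greedy begins on the full hypergraph of positive tests. Second, \Prop~\ref{Lemma_+} shows that $|V_0^{+,\Delta}| \geq k\log n$ \whp, where $V_0^{+,\Delta}$ consists of disguised healthy individuals that occur in $\Delta$ \emph{distinct} positive tests, i.e.\ attain the maximum possible hypergraph degree. Since no vertex in the residual hypergraph can have degree exceeding $\Delta$, every vertex in $V_0^{+,\Delta}$ is a candidate for SCOMP's very first greedy pick; as there are $\geq k\log n$ of them versus only $k$ truly infected individuals (not all of which even achieve degree $\Delta$), the random tie-break lands on a disguised healthy vertex with probability $1-o(1)$. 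No exchangeability or contiguity argument is needed --- the degree cap $\Delta$ plus the sheer count of $V_0^{+,\Delta}$ does everything. Finally, the DD claim is obtained in the paper for free from the SCOMP failure, since a DD success forces a SCOMP success; your separate first/second-moment computation for DD is a viable alternative but is not needed.

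To repair your argument, replace the exchangeability step with: (i) strengthen your count of unresolved infected vertices to the full statement $V_1^{--} = \emptyset$, so that SCOMP's second phase is inert; and (ii) control not just $D = |V_0^+|$ but the subcount $|V_0^{+,\Delta}|$ of disguised healthy vertices with all $\Delta$ test slots distinct, and then invoke the maximum-degree bound to conclude the first greedy step already misfires.
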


For $\theta<1/2$ the information-theoretic bound provided by \Thm~\ref{Thm_inf} and the algorithmic bound $\malg$ supplied by \Thm~\ref{Thm_alg} remain a modest constant factor apart; see Figure~\ref{fig_FigureThresholds}.
\pl{Whether there exists an efficient algorithm for group testing that can close the gap to the information-theoretic bound has long been an open research question. A recent result by Coja-Oghlan et al. \cite{Coja_2019_SPIV} shows that such a polynomial-time algorithm indeed exists. The proposed algorithm which is inspired by the notion of spatial coupling from coding theory is able to recover $\SIGMA$ whenever $m>(1+\eps) \minf$. Moreover, the authors prove that below the information-theoretic threshold from \Thm~\ref{Thm_inf} no non-adaptive algorithm can succeed under any test design (not only the random regular test design considered here) thereby establishing the presence of an adaptivity gap in the group testing problem. An exciting avenue for future research is to investigate the merits of the results and techniques of this paper and \cite{Coja_2019_SPIV, Hahn_2019} for the noisy variant of group testing.}

\subsection{Discussion and related work}\label{Sec_related}
Dorfman's original group testing scheme, intended to test the American army for syphilis, was {\em adaptive}.
In a first round of tests each soldier would be allocated to precisely one test group.
If the test result came out negative, none of the soldiers in the group were infected.
In a second round the soldiers whose group was tested positively would be tested individually.
Of course, Dorfman's scheme was not information-theoretically optimal.
\pl{A first-order} optimal adaptive scheme that involves several test stages, with the tests conducted in the present stage governed by the results from the previous stages, is known~\cite{Alleman_2013, DuHwang}.
In the adaptive scenario the information-theoretic threshold works out to be
\begin{align*}
\minf^{\mathrm{adapt}}(n,\theta)=\frac{k \log\bc{n/k}}{\log 2}.
\end{align*}
The lower bound, i.e., that no adaptive design gets by with $(1-\eps)\minf^{\mathrm{adapt}}(n,\theta)$ tests, follows from a very simple information-theoretic consideration.
Namely, with a total of $m$ tests at our disposal there are merely $2^m$ possible test outcomes, and we need this number to exceed the count $\bink nk$ of possible vectors $\SIGMA$, i.e.,~\cite{Aldridge_2019}.

More recently there has been a great deal of interest in non-adaptive group testing, where the infection status of each individual is to be determined after just one round of tests
\cite{Aldridge_2019, Chen_2008, Emad_2014, Mezard2008}.
%\cite{Aldridge_2017, Aldridge_2018, Aldridge_2014, Aldridge_2016, Chen_2008, Emad_2014, Johnson_2019, Mezard2008, Scarlett_2016}. 
This is the version of the problem that we deal with in the present paper.
An important advantage of the non-adaptive scenario is that tests, which may be time-consuming, can be conducted in parallel.
Indeed, some of today's most popular applications of group testing are non-adaptive such as DNA screening \cite{Chen_2008, Kwang_2006, Ngo_2000} or protein interaction experiments \cite{Mourad_2013, Thierry_2006} in computational molecular biology.
The randomised test design that we deal with here is the best currently known non-adaptive design (in terms of the number of tests required).

The most interesting regime for the group testing problem is when the number $k$ of infected individuals scales as a power $n^\theta$ of the entire population.
Mathematically this is because in the linear regime $k=\Omega(n)$ the optimal strategy is to perform $n$ individual tests~\cite{Aldridge_2018} \pl{in order to achieve a vanishing error probability}. \pl{Similarly, the case of constant $k$ has been solved for some time \cite{Sebo_1985}}.
Thus, for $k$ linear in $n$ \pl{and $k$ constant} \mhk{the theory is already well established.}
But the sublinear case is also of practical relevance, as witnessed by Heap's law in epidemiology~\cite{Benz_2008} or biological applications \cite{Emad_2014}.

Apart from the randomised test design $\G$ where each individual chooses precisely $\Delta$ tests (with replacement), the so-called Bernoulli design assigns each individual to every test with a certain probability independently.
A considerable amount of attention has been devoted to this model, and its information-theoretic threshold as well as the thresholds for various algorithms have been determined~\cite{Aldridge_2017b, Aldridge_2017, Aldridge_2014, Scarlett_2016}.
However, the Bernoulli test design, while easier to analyse, \pl{for $\theta>1/3$} is provably inferior to the test design $\G$ that we study here.
This is because in the Bernoulli design there are likely quite a few individuals that participate in far fewer tests than expected due to degree fluctuations. \pl{We note that our proofs can easily be adapted to reprove the known results for the Bernoulli design. In fact, many technical parts of the proofs become significantly easier and shorter, since we can assume independence between tests, whereas for the constant-column design under consideration here gives rise to subtle dependencies between the tests. A significant portion of the tests is devoted to getting a handle o these dependencies.}

\subsection{Notation}

Throughout the paper $\G=\G(n,m,\Delta)$ denotes the random bipartite graph that describes which individuals take part in which test groups, the vector $\SIGMA\in\{0,1\}^{\cbc{x_1,\ldots,x_n}}$ encodes which individuals are infected, and $\hat\SIGMA\in\{0,1\}^{\cbc{a_1,\ldots,a_m}}$ indicates the test results.
Clearly, $\G$ is independent of $\SIGMA$.
Moreover, $k\sim n^\theta$ signifies the number of infected individuals.
Additionally, we write 
$$V=V_n=\cbc{x_1,\ldots,x_n}, \qquad  V_0=\cbc{x_i\in V:\SIGMA_{x_i}=0} \qquad \text{and} \qquad V_1=V\setminus V_0$$ for the set of all individuals, the set of uninfected and infected individuals, respectively.
For an individual $x\in V$ we write $\partial x$ for the multi-set of tests $a_i$ adjacent to $x$ \pl{with $\abs{\partial x} = \Delta$}.
Analogously, for a test $a_i$ we denote by $\partial a_i$ the multi-set of individuals that take part in the test \pl{and $\Gamma_i = \abs{\partial a_i}$}. \mhk{These are multi-sets since individuals are assigned to tests uniformly at random with replacement and therefore $\G$ features multi-edges \whp.}
\pl{Let $\cG$ be the vector $(\Gamma_i)_{i \in [m]}$.}
Furthermore, all asymptotic notation refers to the limit $n\to\infty$.
Thus, $o(1)$ denotes a term that vanishes in the limit of large $n$, while $\omega(1)$ stands for a function that diverges to $\infty$ as $n\to\infty$.
We also let $c,d>0$ denote reals such that
\begin{align*}
m &= c k \log(n/k) & %\label{m} \\
\Delta &= d \log (n/k). %\label{Delta}
\end{align*}
Later, we will prove that $c,d=\Theta(1)$ as $n\to\infty$ is optimal for inference. Finally, let $\Gamma_{\min}=\min_{i\in[m]}\Gamma_i$, $\Gamma_{\max}=\max_{i\in[m]}\Gamma_i$.
\pl{The following sections will outline the proofs of the information-theoretic bounds and the analysis of the {\tt SCOMP} algorithm and feature the important proofs. The technical details are left to the appendix}

%\section{Information-theoretic bounds}\label{Sec_inf_theoretic}

\section{Getting started} \label{Sec_Groundwork}

The very first item on the agenda is to get a handle on the posterior distribution of $\SIGMA$ given $\G$ and $\hat\SIGMA$.
To this end, let $S_k(\G,\hat\SIGMA)$ be the set of all vectors $\sigma\in\{0,1\}^V$ of Hamming weight $k$ such that
\begin{align*}
\hat\SIGMA_{a_i}&=\vecone\cbc{\exists x\in\partial a_i:\sigma_x=1}&&\mbox{ for all $i\in[m]$}.
\end{align*}
In words, $S_k(\G,\hat\SIGMA)$ contains the set of all vectors $\sigma$ with $k$ ones that label the individuals infected/uninfected in a way consistent with the test results, \pl{i.e. that are "satisfying sets" \cite{Aldridge_2014, Aldridge_2019}}.
Let $Z_k(\G,\hat\SIGMA)=|S_k(\G,\hat\SIGMA)|$.
The following proposition shows that the posterior of $\SIGMA$ given $\G,\hat\SIGMA$ is uniform on 
$S_k(\G,\hat\SIGMA)$.

\begin{proposition}[\cite{Aldridge_2017}]\label{Prop_Nishi}
For all $\tau\in\cbc{0,1}^{\cbc{x_1,\ldots,x_n}}$ we have
$\displaystyle\pr\brk{\SIGMA=\tau\mid\G,\hat\SIGMA}=\frac{\vecone\cbc{\tau\in S_k(\G,\hat\SIGMA)}}{Z_k(\G,\hat\SIGMA)}.$
\end{proposition}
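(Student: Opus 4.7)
The plan is to apply Bayes' rule using the fact that the prior on $\SIGMA$ is uniform over the set of weight-$k$ binary vectors on $V$ and that $\hat\SIGMA$ is a \emph{deterministic} function of $(\SIGMA,\G)$. Since $\G$ is independent of $\SIGMA$ by construction, we have $\pr[\SIGMA=\tau\mid\G]=\pr[\SIGMA=\tau]=\binom{n}{k}^{-1}$ for every $\tau\in\{0,1\}^V$ of Hamming weight $k$, and $\pr[\SIGMA=\tau\mid\G]=0$ for any other $\tau$.

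First, I would fix a realisation $(G,\hat\sigma)$ of $(\G,\hat\SIGMA)$ of positive probability and compute the conditional likelihood. Because the test outcomes $\hat\SIGMA_{a_i}=\vecone\{\exists x\in\partial a_i:\SIGMA_x=1\}$ are fully determined by $\SIGMA$ and $G$, one has
\begin{align*}
\pr[\hat\SIGMA=\hat\sigma\mid\SIGMA=\tau,\G=G]
=\vecone\{\tau\in S_k(G,\hat\sigma)\},
\end{align*}
for any weight-$k$ vector $\tau$, simply by unpacking the definition of $S_k(G,\hat\sigma)$.

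Second, I would plug this into Bayes' rule. Writing
\begin{align*}
\pr[\SIGMA=\tau\mid\G=G,\hat\SIGMA=\hat\sigma]
=\frac{\pr[\hat\SIGMA=\hat\sigma\mid\SIGMA=\tau,\G=G]\,\pr[\SIGMA=\tau\mid\G=G]}{\pr[\hat\SIGMA=\hat\sigma\mid\G=G]},
\end{align*}
the uniform prior causes the factor $\binom{n}{k}^{-1}$ to cancel between numerator and denominator, so that
\begin{align*}
\pr[\SIGMA=\tau\mid\G=G,\hat\SIGMA=\hat\sigma]
=\frac{\vecone\{\tau\in S_k(G,\hat\sigma)\}}{\sum_{\tau'}\vecone\{\tau'\in S_k(G,\hat\sigma)\}}
=\frac{\vecone\{\tau\in S_k(G,\hat\sigma)\}}{Z_k(G,\hat\sigma)},
\end{align*}
where the sum ranges over weight-$k$ vectors. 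This is the claimed identity.

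There is essentially no obstacle: the statement is a Nishimori-type identity that follows purely from the uniform prior on $\SIGMA$ and the deterministic nature of the test outcomes. The only thing to be mildly careful about is to condition on $\G$ throughout and to observe that $\pr[\hat\SIGMA=\hat\sigma\mid\G=G]>0$ whenever the event we are conditioning on has positive probability, which is automatic since $\SIGMA$ itself lies in $S_k(G,\hat\SIGMA)$ almost surely.
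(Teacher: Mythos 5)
Your proof is correct and is the standard Bayes-rule argument for such a Nishimori-type identity; the paper itself gives no proof and simply cites~\cite{Aldridge_2017}, and your derivation (uniform prior on weight-$k$ vectors, independence of $\G$ and $\SIGMA$, determinism of $\hat\SIGMA$ given $(\SIGMA,\G)$, cancellation of $\binom{n}{k}^{-1}$) is exactly what one would expect to find there. You also correctly note that the formula extends to non-weight-$k$ $\tau$, since both sides vanish.
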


Adopting the jargon of the recent literature on inference problems on random graphs, we refer to \Prop~\ref{Prop_Nishi} as the {\em Nishimori identity}~\cite{CKPZ,LF}.
The proposition shows that apart from the actual test results, there is no further `hidden information' about $\SIGMA$ encoded in $\G,\hat\SIGMA$.
In particular, the information-theoretically optimal inference algorithm just outputs a uniform sample from $S_k(\G,\hat\SIGMA)$.
In effect, we obtain the following.

\begin{corollary}\label{Cor_Nishi}
\begin{enumerate}
\item If $Z_k(\G,\hat\SIGMA)=\omega(1)$ \whp, then for any algorithm $\cA$ we have 
	$$\pr\brk{\cA(\G,\hat\SIGMA,k)=\SIGMA}=o(1).$$
\item If $Z_k(\G,\hat\SIGMA)=1$ \whp, then there is an algorithm $\cA$ such that
	$$\pr\brk{\cA(\G,\hat\SIGMA,k)=\SIGMA}=1-o(1).$$
\end{enumerate}
\end{corollary}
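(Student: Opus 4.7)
The plan is to derive both parts as immediate consequences of the Nishimori identity \Prop~\ref{Prop_Nishi}: once the posterior of $\SIGMA$ given $(\G,\hat\SIGMA)$ is known to be uniform on $S_k(\G,\hat\SIGMA)$, neither assertion requires any further probabilistic input beyond the hypotheses on $Z_k$.

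For part (2), I would exhibit an (exponential-time) algorithm $\cA$ that, given $(\G,\hat\SIGMA,k)$, enumerates all Hamming-weight-$k$ vectors $\sigma\in\{0,1\}^V$, retains only those consistent with $\hat\SIGMA$—thereby computing $S_k(\G,\hat\SIGMA)$ exactly by brute force—and returns the unique element of $S_k(\G,\hat\SIGMA)$ if this set is a singleton, and some arbitrary vector otherwise. Because $\SIGMA$ itself always lies in $S_k(\G,\hat\SIGMA)$ by the very definition of $\hat\SIGMA$, on the event $\{Z_k(\G,\hat\SIGMA)=1\}$ the algorithm necessarily returns $\SIGMA$. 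Combined with the hypothesis $Z_k=1$ \whp, this yields success probability $1-o(1)$.

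For part (1), take any (possibly randomised) algorithm $\cA$, modelled as a deterministic function of its three inputs and an independent random string, so that conditional on $(\G,\hat\SIGMA)$ the output of $\cA$ is independent of $\SIGMA$. Then I would write
\[
\pr\brk{\cA(\G,\hat\SIGMA,k)=\SIGMA\mid\G,\hat\SIGMA}=\sum_{\tau}\pr\brk{\cA(\G,\hat\SIGMA,k)=\tau\mid\G,\hat\SIGMA}\,\pr\brk{\SIGMA=\tau\mid\G,\hat\SIGMA}.
\]
By \Prop~\ref{Prop_Nishi} the second factor equals $1/Z_k(\G,\hat\SIGMA)$ for $\tau\in S_k(\G,\hat\SIGMA)$ and vanishes otherwise, so the right-hand side is at most $1/Z_k(\G,\hat\SIGMA)$. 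Taking expectations gives $\pr\brk{\cA(\G,\hat\SIGMA,k)=\SIGMA}\le\Erw\brk{1/Z_k(\G,\hat\SIGMA)}$. To convert the hypothesis $Z_k=\omega(1)$ \whp\ into an $o(1)$ bound, I would pick a function $f(n)\to\infty$ with $\pr\brk{Z_k(\G,\hat\SIGMA)\ge f(n)}=1-o(1)$ and split the expectation according to whether $Z_k(\G,\hat\SIGMA)\ge f(n)$ or not; the contribution of the former event is at most $1/f(n)=o(1)$, and the contribution of the latter is bounded by its probability, which is also $o(1)$.

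There is no serious obstacle: both directions are purely formal given \Prop~\ref{Prop_Nishi}. The only care needed is the bookkeeping around randomised algorithms in part (1), which is handled by the independent-random-string formalism above, and the justification that the pointwise bound $\pr\brk{\cA=\SIGMA\mid\G,\hat\SIGMA}\le 1/Z_k(\G,\hat\SIGMA)$ holds uniformly in the realisation of $(\G,\hat\SIGMA)$ so that the subsequent expectation step is legitimate.
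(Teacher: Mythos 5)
Your proposal is correct and takes essentially the same route the paper intends: the paper gives no explicit proof of \Cor~\ref{Cor_Nishi}, treating it as an immediate consequence of the Nishimori identity \Prop~\ref{Prop_Nishi}, and your argument (brute-force enumeration for part (2); the conditional-independence decomposition yielding $\pr[\cA=\SIGMA\mid\G,\hat\SIGMA]\le 1/Z_k$ followed by a split on $\{Z_k\ge f(n)\}$ for part (1)) is precisely the bookkeeping the paper leaves implicit.
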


\noindent
Both the positive and the negative part of \Cor~\ref{Cor_Nishi} assume that the precise number $k$ of infected individuals is known to the algorithm.
This assumption makes the negative part stronger, but weakens the positive part.
Yet we will see in due course how in the positive scenario the assumption that $k$ be known can be removed.

For the information-theoretic bound, the proof hinges on analysing the number of individuals that can be flipped without affecting the test results. We encounter two kinds of such individuals. The first kind consists of healthy individuals that only appear in positive tests and which we will denote by $V_0^+$. In symbols,
\begin{align}\label{eqV0+}
V_0^+&=\cbc{x_i\in V_0:\forall a\in\partial x_i\exists y\in\partial a:\SIGMA_y=1}.
\end{align}
Similarly, let $V_1^+$ be the set of all infected individuals $x_i$ such that every test in which $x_i$ occurs features another infected individual; in symbols,
\begin{align*}
V_1^+&=\cbc{x_i\in V_1:\forall a\in\partial x_i\exists y\in\partial a\setminus\cbc{x_i}:\SIGMA_y=1}.
\end{align*}
We think of the individuals in $V_0^+$ as the `potential false positives'.
Indeed, if for any $x_i\in V_0^+$ we obtain $\SIGMA'$ from $\SIGMA$ by setting $x_i$ to one, then $\SIGMA'$ will render the same test results as $\SIGMA$.
Similarly, the individuals in $V_1^+$ are potential false negatives.
\pl{For completeness, we also define $V_0^-$ and $V_1^-$ as
\begin{align} \label{eqV0-V1-}
V_0^- = V_0 \setminus V_0^+ \qquad \text{and} \qquad V_1^- = V_1 \setminus V_1^+    
\end{align}}
In the following, let us get a handle on the size of sets $V_0^+$ and $V_1^+$.
Specifically, we prove the following five statements.

\begin{proposition}\label{Lemma_V_all}
Let $c, d=\Theta(1)$. Then, the following statements hold \whp
\begin{enumerate}[(1)]
	\item $\abs\zeroplus=(1+n^{-\Omega(1)})n\bc{1-\exp(-d/c)}^\Delta.$ \label{Lemma_V0-}
	\item If $k(1-\exp(-d/c))^\Delta\geq n^{\Omega(1)}$, then $\abs\oneplus=n^{\Omega(1)}.$ \label{Lemma_V1plus}
	\item If $k(1-\exp(-d/c))^\Delta = o(1)$, then $\abs\oneplus=o(1).$ \label{Lemma_V1plus_reverse}
	\item If $c<\frac{\theta}{1-\theta} \frac{1}{\log^2 2}$, then $\abs\oneplus{},\abs\zeroplus{} = n^{\Omega(1)}. $  \label{Lemma_V++}
	\item If $c>\frac{\theta}{1-\theta} \frac{1}{\log^2 2}$, then $\abs\oneplus{} = o(1).$ \label{Lemma_V+_reverse}
\end{enumerate}

\end{proposition}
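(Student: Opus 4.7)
The plan is a moment computation in two stages. Parts~(1)--(3) rest on sharp first-moment estimates for $|\zeroplus|$ and $|\oneplus|$, upgraded to \whp\ statements by a second-moment argument (Parts~(1), (2)) or Markov's inequality (Part~(3)). Parts~(4) and~(5) then follow by specialising to the choice of $d$ that minimises these first moments.

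First I would condition on the test-degree profile $\cG = (\Gamma_i)_{i \in [m]}$. Each $\Gamma_i$ is a sum of $n$ independent $\Bin(\Delta, 1/m)$ variables with mean $n\Delta/m = (d/c)(n/k) = \Theta(n^{1-\theta})$, so a Chernoff bound together with a union over $i \in [m]$ gives $\Gamma_i = (1 + n^{-\Omega(1)})(d/c)(n/k)$ uniformly in $i$ \whp. On this event, for any $x \in V_0$ and $a \in \partial x$,
\[
\Pr[a \text{ positive} \mid \SIGMA, \cG] = 1 - \bigl(1 - \tfrac{k}{n-1}\bigr)^{\Gamma_i - 1} = (1 - e^{-d/c})\bigl(1 + n^{-\Omega(1)}\bigr).
\]
Because the $\Delta$ slots in $\partial x$ are drawn with replacement but $\Delta^2/m = o(1)$, collisions are rare and the events ``$a$ positive'' are asymptotically independent across $a \in \partial x$. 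Summing over $x \in V_0$ (with $|V_0| = (1+o(1))n$) delivers the asymptotic of Part~(1); the identical computation for $x \in V_1$, now requiring a \emph{second} infected entry in each $a \in \partial x$, gives $\Erw|\oneplus| = (1+o(1))\,k(1-e^{-d/c})^\Delta$. Part~(3) is then immediate from Markov's inequality.

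For the concentration needed in Parts~(1) and~(2), I would run a standard second-moment argument: for distinct $x, y$ of the relevant type, the expected overlap $\Erw|\partial x \cap \partial y| = \Delta^2/m = o(1)$ is negligible, and conditional on $\cG$ the events ``$x \in \zeroplus$'' and ``$y \in \zeroplus$'' factorise up to a $(1+n^{-\Omega(1)})$ multiplicative error. Chebyshev then promotes the first-moment estimate to the \whp\ statement.

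For Parts~(4)--(5) write $(1-e^{-d/c})^\Delta = (n/k)^{f(d)}$ with $f(d) = d\log(1-e^{-d/c})$. A short calculus exercise --- setting $f'(d)=0$ and substituting $p = 1-e^{-d/c}$ reduces the condition to $p\log p = (1-p)\log(1-p)$, forcing $p = 1/2$ --- shows that $f$ attains its minimum $-c\log^2 2$ at $d^\star = c\log 2$. With this choice,
\[
\Erw|\zeroplus| = n^{1 - (1-\theta)c\log^2 2 + o(1)}, \qquad \Erw|\oneplus| = n^{\theta - (1-\theta)c\log^2 2 + o(1)}.
\]
When $c < \theta/((1-\theta)\log^2 2)$ both exponents are positive (since $\theta < 1$), so Parts~(1)--(2) yield Part~(4); when $c > \theta/((1-\theta)\log^2 2)$ the exponent of $\Erw|\oneplus|$ is negative and Part~(3) yields Part~(5). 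The main technical hurdle is the sharpness demanded by Part~(1): pushing the first-moment error from $o(1)$ down to $n^{-\Omega(1)}$ requires polynomial-rate uniform control of the test-degree fluctuations $\Gamma_i - \Erw\Gamma_i$ over all $m$ tests, and an equally careful analysis of the residual dependence between $\partial x$ and $\partial y$ in the second-moment step so that the Chebyshev bound survives in the polynomial-size regime relevant to Part~(4).
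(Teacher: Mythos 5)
Your overall plan --- first/second moment estimates, Chebyshev, and then specialise to the minimiser $d^\star = c\log 2$ for parts~(4)--(5) --- has the same skeleton as the paper's proof, and your calculus for parts~(4)--(5) matches it. The gap is in how the first moments for parts~(1)--(2) are actually computed. You derive $\Pr[x \in \zeroplus]$ by computing $\Pr[a\text{ positive}\mid \SIGMA,\cG]$ for each $a\in\partial x$ and multiplying, justifying the factorisation by "collisions among the $\Delta$ slots of $x$ are rare since $\Delta^2/m=o(1)$." That controls only one (and arguably the lesser) source of dependence. Even when $x$'s $\Delta$ tests are all distinct, the placements of the $k\Delta$ infected clones across those $\Delta$ tests are coupled through the configuration-model matching: learning that one test of $\partial x$ is negative changes the conditional law of the remaining tests. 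Because part~(1) demands a multiplicative error of $1+n^{-\Omega(1)}$ rather than $1+o(1)$, and because the per-test factor is raised to the $\Delta=\Theta(\log n)$ power, these within-$\partial x$ correlations have to be tracked uniformly over all $\Delta$ tests; you flag residual dependence between $\partial x$ and $\partial y$ at the second-moment stage, but not this within-$\partial x$ dependence at the first-moment stage. The paper sidesteps the issue entirely: after fixing $\cG$ and the aggregate count of edges between $V_0$ and positive tests (via \Lem~\ref{Lemma_GammaMinMax} and \Lem~\ref{Lemma_m0}), $\Pr[x\in\zeroplus]$ is \emph{exactly} a ratio of binomial coefficients by exchangeability of the pairing model, and that ratio already carries the $(1+n^{-\Omega(1)})$ factor with no independence heuristic needed.

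Part~(2) also is not "the identical computation." First, $\G$ is a multigraph, so an infected $x$ can appear twice in a test in which it is the \emph{only} infected individual; such an $x$ is not in $\oneplus$ under the paper's definition via $\partial a\setminus\{x\}$, yet it looks like it should be under any edge-count formulation. The paper introduces the correction term $R$ precisely for this and shows $\Erw[R\mid\cG]=O(\log n)$, which you would need to account for since the target lower bound is $n^{\Omega(1)}$. Second, the clean binomial-ratio route from part~(1) does not transfer directly, because for $\oneplus$ the relevant "other infected" edges number only $k\Delta$; the paper instead conditions on the count $W$ of tests with exactly one infected edge, established via the auxiliary independent variables $\vX_i$ conditioned on the event $\cE$ (\Lem~\ref{Lemma_Elemma}) together with a local-limit-theorem bound $\Pr[\cE]=\Omega(1/\sqrt{\Delta k})$, and then runs a hypergeometric moment calculation for $U$. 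Your proposal does not mention this extra layer, which is the bulk of the work in the paper's part~(2).
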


The proof of \Prop~\ref{Lemma_V_all}, while not fundamentally difficult, requires a bit of care because we are dealing with a random bipartite multi-graph whose (test-)degrees scale as a power of $n$.
In effect, the diameter of the bipartite graph is quite small and the neighbourhoods of different tests may have a sizeable intersection.
The technical workout follows in \Sec~\ref{subsec_start_prop}.
In the next step, let us get a handle on the size of the test degrees.

\begin{lemma}\label{Lemma_GammaMinMax}
	With probability at least $1-o(n^{-2})$ we have
	$$\Delta n/m-\sqrt{\Delta n/m}\log n\leq\Gamma_{\min}\leq\Gamma_{\max}\leq\Delta n/m+\sqrt{\Delta n/m}\log n.$$
\end{lemma}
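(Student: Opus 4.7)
The plan is to exploit the fact that the bipartite multi-graph $\G(n,m,\Delta)$ arises by letting each of the $n$ individuals throw $\Delta$ independent darts into the $m$ tests uniformly at random with replacement. Hence the $n\Delta$ edge-endpoints are i.i.d.\ uniform on $[m]$, and for each fixed test $a_i$ the degree $\Gamma_i=|\partial a_i|$ is exactly distributed as $\mathrm{Bin}(n\Delta,1/m)$ with mean $\mu=n\Delta/m$. (The $\Gamma_i$'s are mildly negatively correlated since $\sum_i\Gamma_i=n\Delta$, but this plays no role below.)

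First I would invoke a standard Chernoff bound for the binomial: for any $t\le \mu$,
\begin{align*}
\Pr\brk{\abs{\Gamma_i-\mu}\ge t}\le 2\exp\bc{-t^2/(3\mu)}.
\end{align*}
Plugging in $t=\sqrt{\mu}\log n$ gives $t^2/(3\mu)=\log^2(n)/3$, so each individual tail satisfies
\begin{align*}
\Pr\brk{\abs{\Gamma_i-n\Delta/m}\ge \sqrt{n\Delta/m}\log n}\le 2\exp\bc{-\tfrac13\log^2 n}=n^{-\omega(1)}.
\end{align*}
I should briefly verify that $t\le\mu$ so that the Chernoff form is applicable; since $\mu=n\Delta/m=\Theta(n^{1-\theta})$ by the assumption $m=ck\log(n/k)$ with $k\sim n^\theta$ and $c,d=\Theta(1)$, the bound $\sqrt{\mu}\log n\le\mu$ certainly holds for large $n$.

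Second, I would take a union bound over the $m$ tests. Since $m=O(n^\theta\log n)=o(n)$, we obtain
\begin{align*}
\Pr\brk{\Gamma_{\min}<\mu-\sqrt{\mu}\log n\ \text{or}\ \Gamma_{\max}>\mu+\sqrt{\mu}\log n}\le m\cdot n^{-\omega(1)}=o(n^{-2}),
\end{align*}
which is the desired statement. There is no genuine obstacle here; the lemma is a routine concentration result and the only thing to be slightly careful about is the order of the quantifiers (per-test Chernoff, then union bound) and the fact that we get super-polynomial slack so that the union bound absorbs the factor $m$ with room to spare.
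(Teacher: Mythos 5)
Your proof is correct and follows essentially the same route as the paper's one-line argument: identify that $\Gamma_i\sim\Bin(n\Delta,1/m)$ for each fixed $i$, apply a Chernoff bound with deviation $\sqrt{n\Delta/m}\log n$ to get a per-test failure probability of $n^{-\Omega(\log n)}$, and union bound over the $m=o(n)$ tests. The paper leaves the Chernoff-plus-union-bound step implicit, while you spell it out and verify the applicability condition $t\le\mu$; that is the only difference.
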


The proof of this and the subsequent elementary lemmas are included in \Sec~\ref{Sec_sub_proof}. Next, we calculate the number of positive and negative tests. Let $\mone$ be the number of positive tests and let $\mzero$ be the number of negative tests.
Clearly $\mzero+\mone=m$.

\begin{lemma}\label{Lemma_m0}
	With probability at least $1-o(n^{-2})$ we have $$\mzero=\exp(-d/c)m+O(\sqrt{m}\log^2 n).$$
\end{lemma}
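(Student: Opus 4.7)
The plan is to establish the estimate in two steps: compute $\Erw[\mzero]$ exactly and then derive high-probability concentration via a bounded-differences inequality. Since $\G$ is independent of $\SIGMA$ and the construction of $\G$ is symmetric in the individuals, every realisation of $\SIGMA$ of Hamming weight $k$ yields the same conditional analysis, so I will carry out both computations conditional on $\SIGMA$.

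For the first moment, observe that a test $a_i$ is negative iff none of the $k$ infected individuals chose $a_i$ in any of their $\Delta$ ball-picks. Because each of the $n\Delta$ ball-picks defining $\G$ is an independent uniform draw from $[m]$, the probability that a fixed test is avoided by every infected individual equals $(1-1/m)^{k\Delta}$, whence
\begin{align*}
\Erw\brk{\mzero\mid\SIGMA} = m\bc{1-1/m}^{k\Delta}.
\end{align*}
Using $k\Delta/m=d/c=\Theta(1)$, the Taylor expansion $\log(1-1/m)=-1/m-O(1/m^2)$ gives $k\Delta\log(1-1/m)=-d/c+O(1/m)$, and therefore $(1-1/m)^{k\Delta}=\exp(-d/c)(1+O(1/m))$. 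Consequently $\Erw[\mzero\mid\SIGMA]=\exp(-d/c)m+O(1)$.

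For concentration I will view $\mzero$ as a function of the $k\Delta$ i.i.d.\ uniform ball-picks $(U_1,\ldots,U_{k\Delta})\in[m]^{k\Delta}$ made by the infected individuals; the ball-picks of the healthy individuals do not influence $\mzero$ at all. Altering a single coordinate $U_j$, say from test $a$ to test $a'$, can only change the status of the two tests $a$ and $a'$: $a$ may lose its last infected individual and flip to negative, while $a'$ may gain its first infected individual and flip to positive. Hence the bounded-difference constant is at most $2$, and McDiarmid's inequality applied conditionally on $\SIGMA$ yields
\begin{align*}
\pr\brk{\,\abs{\mzero-\Erw[\mzero\mid\SIGMA]}>t \,\big|\, \SIGMA\,} \le 2\exp\bc{-\frac{t^2}{2k\Delta}}.
\end{align*}
Choosing $t=\sqrt{6k\Delta\log n}$ drives the right-hand side below $2n^{-3}=o(n^{-2})$. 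Since $k\Delta=(d/c)\,m$ with $d/c=\Theta(1)$, this yields $t=O(\sqrt{m\log n})$, which is comfortably within $\sqrt{m}\log^2 n$. Because the bound holds uniformly for every $\SIGMA$ of Hamming weight $k$, removing the conditioning and combining with the expectation estimate completes the proof.

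I anticipate no real obstacle. The only step that requires a moment's care is the bounded-difference constant, which rests on the observation that a single ball-pick can toggle the containment of an infected individual in at most two tests. The approximation $m(1-1/m)^{k\Delta}=\exp(-d/c)m+O(1)$ is routine thanks to the assumption $c,d=\Theta(1)$ which keeps all higher-order terms $O(1/m)$.
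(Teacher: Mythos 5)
Your proof is correct, but it takes a genuinely different and more elementary route than the paper. The paper's argument first introduces the auxiliary independent variables $\vX_i\sim\Bin(\Gamma_i,k/n)$ (conditioned on the test-degree sequence $\cG$), computes $\Erw[\mzero'\mid\cG]=\sum_{i=1}^m(1-k/n)^{\Gamma_i}$, concentrates the binomial count $\mzero'$ via Chernoff, and finally transfers the estimate back to the true $\mzero$ by invoking the equality in distribution of $(\vY_i)$ given $\cG$ with $(\vX_i)$ given $\cE$ (Lemma~\ref{Lemma_Elemma}) together with the local-limit bound $\pr[\cE]=\Omega(1/\sqrt{\Delta k})$. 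You sidestep all of that: you observe that $\mzero$ depends only on the $k\Delta$ genuinely i.i.d.\ uniform ball-picks of the infected individuals, compute the expectation exactly as $m(1-1/m)^{k\Delta}=m\exp(-d/c)+O(1)$, and apply McDiarmid with a bounded-difference constant of (at most) $2$ coming from the observation that a single pick can affect the status of only two tests. Both approaches are valid and yield the stated error term with room to spare; yours is shorter and self-contained, while the paper's fits into the conditional $\cX$-versus-$\cY$ machinery that it reuses repeatedly (e.g.\ in Lemma~\ref{Lemma_Elemma}, Propositions~\ref{Lemma_V_all} and~\ref{Lemma_--}), so it pays a fixed overhead once and amortises it across several lemmas. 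A minor remark: the Lipschitz constant here is actually $1$, not $2$, since moving a single ball-pick from test $a$ to test $a'$ can flip at most one test into the negative class and at most one out of it, and these shifts partially cancel; this only improves the constant in the exponent and does not affect correctness.
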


Finally, we justify that setting $c,d=\Theta(1)$ as $n \to \infty$ is optimal for inference. The fact that $c=\Theta(1)$ immediately follows from the information-theoretic counting bound, i.e.,~\cite{Aldridge_2019}.
\begin{lemma}\label{Lemma_Delta_too_small_big}
\begin{enumerate}[(1)]
	\item	If $\Delta=o(\log(n/k))$ and $m=\Theta(k\log(n/k))$, then $Z_k(\G,\hat\SIGMA)=\omega(1)$ \whp{}
	\item  If $\Delta=\omega(\log(n/k))$ and $m=\pl{\Theta}(k\log(n/k))$, then $Z_k(\G,\hat\SIGMA)=\omega(1)$ \whp{}
\end{enumerate}
\end{lemma}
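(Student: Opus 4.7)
The plan is to show $Z_k(\G,\hat\SIGMA)\geq 1+|\zeroplus|\cdot|\oneplus|$ by a swap argument and then verify that this product is $\omega(1)$ whp in both regimes. For the swap bound, fix any pair $(x,y)\in\zeroplus\times\oneplus$ and let $\SIGMA'$ be the vector obtained from $\SIGMA$ by flipping $\SIGMA_x$ from $0$ to $1$ and $\SIGMA_y$ from $1$ to $0$. Then $\SIGMA'\in S_k(\G,\hat\SIGMA)$: tests in $\partial x\setminus\partial y$ stay positive because $x\in\zeroplus$ guarantees an infected in $a$ distinct from $x$; tests in $\partial y\setminus\partial x$ stay positive because $y\in\oneplus$ guarantees an infected in $a$ distinct from $y$; and tests in $\partial x\cap\partial y$ stay positive via the newly-infected $x$. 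Distinct pairs produce distinct $\SIGMA'\neq\SIGMA$, yielding the claimed lower bound.

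For the first moments, I would follow the same template as the proof of \Prop~\ref{Lemma_V_all}. Conditioning on the test-degree concentration of Lemma~\ref{Lemma_GammaMinMax} and using that the events ``test $a$ contains an infected individual other than the fixed one'' for distinct $a\in\partial x$ are asymptotically independent (an $O(\Delta^2/m)$ error), one arrives at
\[
\Erw|\zeroplus|\sim n\bigl(1-\exp(-d/c)\bigr)^\Delta,\qquad \Erw|\oneplus|\sim k\bigl(1-\exp(-d/c)\bigr)^\Delta,
\]
with $d/c=\Delta k/m$. In Case~(1), writing $\psi:=\Delta/\log(n/k)=o(1)$ gives $d/c=\psi/c\to 0$, so $(1-\exp(-d/c))^\Delta\sim(d/c)^\Delta$; since $\Delta|\log(d/c)|=\psi|\log\psi|\log(n/k)+O(\Delta)=o(\log n)$ (using $\psi|\log\psi|\to 0$ and $\Delta=o(\log n)$), we conclude $\Erw|\zeroplus|=n^{1-o(1)}$ and $\Erw|\oneplus|=n^{\theta-o(1)}$. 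In Case~(2), $d\to\infty$ so $\exp(-d/c)\to 0$ and $(1-\exp(-d/c))^\Delta\sim\exp(-\Delta\exp(-d/c))$; since $d\exp(-d/c)\to 0$ whenever $d\to\infty$, we have $\Delta\exp(-d/c)=\log(n/k)\cdot d\exp(-d/c)=o(\log n)$, yielding the same polynomial bounds.

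A standard second-moment computation — whose covariances $\operatorname{Cov}\bigl(\mathbf{1}\{x\in\zeroplus\},\mathbf{1}\{x'\in\zeroplus\}\bigr)$ are driven by the rare event $\partial x\cap\partial x'\neq\emptyset$ of expected size $O(\Delta^2/m)$ — then promotes both expectations into whp lower bounds of order $n^{\Omega(1)}$, giving $Z_k=\omega(1)$ whp. The main obstacle is the subtle dependency structure inherent in the constant-column multi-graph $\G$: the indicators $\mathbf{1}\{x\in\zeroplus\}$ are correlated through shared test slots and potential multi-edges. However, unlike \Prop~\ref{Lemma_V_all} we are not trying to capture sharp constants but merely to show that two polynomially large expectations concentrate, so crude bounds on the dependency corrections suffice.
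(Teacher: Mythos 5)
Your approach is genuinely different from the paper's. The paper proves the lemma with a soft counting (entropy) argument: when $\Delta=o(\log(n/k))$, Lemma~\ref{Lemma_m0} gives $\mone=o(m)$ \whp{} for (essentially) every choice of $\SIGMA$, so the map from Hamming-weight-$k$ configurations to positive-test patterns has range of size at most $\binom m{o(m)}=\exp(o(m))=o\bc{\binom nk}$; a pigeonhole argument then forces $Z_k\geq n$ \whp{}, and the case $\Delta=\omega(\log(n/k))$ is symmetric with negative tests. This requires no concentration of $|\zeroplus|$ or $|\oneplus|$ and is agnostic to how extreme $\Delta$ is. You instead construct explicit witnesses of ambiguity via the swap bound $Z_k\geq 1+|\zeroplus|\cdot|\oneplus|$, which is exactly the device the paper uses later for \Cor~\ref{Cor_V+}, and then try to push the $|\zeroplus|,|\oneplus|$ estimates into the regimes $d=o(1)$ and $d=\omega(1)$. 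Your approach would, if completed, even give a stronger numerical lower bound ($Z_k\geq n^{1+\theta-o(1)}$ rather than $Z_k\geq n$), but it is technically heavier.

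The gap is in the deferred concentration step. \Prop~\ref{Lemma_V_all} and its second-moment proof are stated under $c,d=\Theta(1)$, precisely the hypothesis that fails here, and your claim that the covariance of the $\zeroplus$-indicators is ``driven by the rare event $\partial x\cap\partial x'\neq\emptyset$ of expected size $O(\Delta^2/m)$'' is not a rare event at all once $\Delta=\omega(\sqrt m)$, which the hypothesis $\Delta=\omega(\log(n/k))$ permits (indeed $\Delta$ can be as large as you like). Tracking the paper's pairing-model second-moment calculation in that regime, the multiplicative error between $\Erw[|\zeroplus|^2\mid\cG]$ and $\Erw[|\zeroplus|\mid\cG]^2$ picks up a factor of order $\exp\bc{O(\Delta(k+n\exp(-d/c))/n^2)}$, which is no longer $1+o(1)$ when $\Delta$ is, say, super-polynomial in $n$. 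One can rescue your argument by splitting off the very-large-$\Delta$ regime (where $\mzero=0$ deterministically and hence $V_0^+=V_0$, $V_1^+=V_1$), but as written the ``same template'' reference is not sufficient; the paper's pigeonhole argument avoids the issue entirely. Your first-moment asymptotics and the swap bound itself are fine.
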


%\section{Proof of \Prop~\ref{Prop_Nishi}}\label{Sec_Prop_Nishi}

%Let us introduce $\pr^{\G}\brk\nix$ as shorthand for the conditional probability $\pr\brk{\nix\mid\G}$.
%Then for any $\sigma\in\cbc{0,1}^V$ of Hamming weight $k$ and any $\hat\sigma\in\{0,1\}^{\cbc{a_1,\ldots,a_m}}$ with $\pr^{\G}\brk{\hat\SIGMA=\hat\sigma}>0$ we have
%\begin{align*}
%\pr\brk{\SIGMA=\sigma\mid\G,\hat\SIGMA=\hat\sigma}&
%=\pr^{\G}\brk{\SIGMA=\sigma\mid\hat\SIGMA=\hat\sigma}
%=\frac{\pr^{\G}\brk{\hat\SIGMA=\hat\sigma\mid\SIGMA=\sigma}\pr\brk{\SIGMA=\sigma}}
%{\pr^{\G}\brk{\hat\SIGMA=\hat\sigma}}\\
%&=\frac{\vecone\cbc{\tau\in S_k(\G,\hat\sigma)}\pr\brk{\SIGMA=\sigma}}
%{\sum_{\tau\in\cbc{0,1}^V:\scal\tau\vecone=k}\pr\brk{\SIGMA=\tau}\pr^{\G}\brk{\hat\SIGMA=\hat\sigma\mid\SIGMA=\tau}}
%=\frac{\vecone\cbc{\tau\in S_k(\G,\hat\sigma)}}{Z_k(\G,\hat\sigma)},
%\end{align*}
%because $\SIGMA\in\cbc{0,1}^V$ is chosen uniformly at random among all vectors of Hamming weight $k$.

\section{The information-theoretic upper bound}\label{Sec_InfThUpper}

We proceed to discuss the proof of \Thm~\ref{Thm_inf}.
The proof of the first, positive statement and of the second, negative statement hinge on two separate arguments.
We begin with the proof of the information-theoretic upper bound which is the principal achievement of the present work.
The proof rests upon techniques that have come to play an important role in the theory of random constraint satisfaction problems. 
Specifically, we need to show that $Z_k(\G,\hat\SIGMA)=1$ \whp, i.e., that $\SIGMA$ is the only assignment compatible with the test results \whp\
We establish this result by combining two separate arguments.
First, we use a moment calculation to show that \whp\ there are no other solutions that have a small `overlap' with $\SIGMA$.
Then we use an expansion argument to show that \whp\ there are no alternative solutions with a big overlap.
Both these arguments are variants of the arguments that have been used to study the solution space geometry of random constraint satisfaction problems such as random $k$-SAT or random $k$-XORSAT~\cite{geom,nae,DM}, as well as the freezing thresholds of random constraint satisfaction problems~\cite{Barriers,Molloy}.
Yet to our knowledge these methods have thus far not been applied to the group testing problem.
\pl{In this section we choose $\Delta=\lceil \frac{m}{k}\log 2\rceil$ which maximises the entropy of the test results.}
Formally, we define
\begin{align*}
Z_{k,\ell}(\G,\hat\SIGMA)&=\abs{\cbc{\sigma\in S_k(\G,\hat\SIGMA):
\scal{\SIGMA}\sigma=\ell}}
\end{align*}
as the number of assignments $\sigma\in S_k(\G,\hat\SIGMA)$ \pl{different from the true configuration $\SIGMA$} whose {\em overlap}
$$\scal{\SIGMA}\sigma=\sum_{i=1}^n\vecone\{\sigma_{x_i}=\SIGMA_{x_i}=1\}$$ 
with $\SIGMA$ is equal to $\ell$.
The following two propositions rule out assignments with a small and a big overlap, respectively.
In either case we choose $\Delta=\lceil \frac{m}{k}\log 2\rceil$ to take its optimal value.

\begin{proposition}\label{Prop_small_overlap}
Let $\eps>0$ and $0<\theta<1$ and assume that $m>(1+\eps)\minf(k,\theta)$.
\Whp\ we have $Z_{k,\ell}(\G,\hat\SIGMA)=0$ for all $\ell<(1-1/\log n)k$.
\end{proposition}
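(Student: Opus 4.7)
The plan is a first-moment argument: I will show that $\sum_{\ell<(1-1/\log n)k}\Erw Z_{k,\ell}(\G,\hat\SIGMA)=o(1)$ and conclude by Markov's inequality. By symmetry I may fix $\SIGMA$ with $\supp{\SIGMA}=V_1$; then the number of $\sigma\in\cbc{0,1}^V$ of Hamming weight $k$ with $\scal{\SIGMA}{\sigma}=\ell$ equals $\binom{k}{\ell}\binom{n-k}{k-\ell}$, and for any fixed such $\sigma_0$ the probability that $\sigma_0\in S_k(\G,\hat\SIGMA)$ depends only on $\ell$. Writing $\alpha=1-\ell/k$ and partitioning $V$ into $T_1=\supp{\SIGMA}\cap\supp{\sigma_0}$, $A=\supp{\SIGMA}\setminus\supp{\sigma_0}$, $B=\supp{\sigma_0}\setminus\supp{\SIGMA}$ of sizes $\ell,\alpha k,\alpha k$, and $T_0=V\setminus(T_1\cup A\cup B)$, the vector $\sigma_0$ is compatible with $\hat\SIGMA$ iff for every test $a_j$ whose neighbourhood misses $T_1$, $\partial a_j$ meets both $A$ and $B$ or neither.

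Since the $n\Delta$ edges of $\G$ independently pick a uniform test, for a fixed test $a_j$ the four counts $\abs{\partial a_j\cap X}$ with $X\in\cbc{T_1,A,B,T_0}$ are mutually independent (they are driven by disjoint sets of edges). A short inclusion--exclusion yields
\begin{align*}
\Pr\brk{a_j\text{ compatible}}=1-2(1-1/m)^{k\Delta}+2(1-1/m)^{(1+\alpha)k\Delta},
\end{align*}
which with $\Delta=\lceil(m/k)\log 2\rceil$ and hence $k\Delta/m=\log 2(1+o(1))$ evaluates to $2^{-\alpha}(1+o(1))$. To promote this per-test estimate into a bound on $\Pr\brk{\sigma_0\in S_k(\G,\hat\SIGMA)}$, I would invoke a contiguity argument with the Poissonised design in which the number of edges between every (individual, test) pair is an independent $\Po(\Delta/m)$ variable; the test-compatibility events are exactly independent in that design, giving $\Pr\brk{\sigma_0\text{ compatible}}\le\exp\bc{-\alpha m\log 2(1+o(1))}$, and Lemma~\ref{Lemma_GammaMinMax} ensures the constant-column design agrees up to lower-order corrections.

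Stirling's formula provides $\log\binom{k}{\ell}\binom{n-k}{k-\ell}\le\alpha k\log(n/k)+O(\alpha k\log(1/\alpha))$; in the range $\alpha\ge 1/\log n$ the error term is $O(\alpha k\log\log n)=o(\alpha k\log(n/k))$ because $\log(n/k)=(1-\theta)\log n$. Combining,
\begin{align*}
\log\Erw Z_{k,\ell}\le\alpha k\log(n/k)(1+o(1))-\alpha m\log 2(1+o(1)).
\end{align*}
Since $m>(1+\eps)\minf(k,\theta)\ge(1+\eps)k\log(n/k)/\log 2$, the right-hand side is at most $-\Omega(\eps\alpha k\log(n/k))\le-\Omega(\eps n^\theta)$ uniformly in $\alpha\ge 1/\log n$. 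Summing over the at most $k=O(n^\theta)$ admissible values of $\ell$ yields $\sum_\ell\Erw Z_{k,\ell}\le k\exp(-\Omega(\eps n^\theta))=o(1)$, whence Markov's inequality completes the argument.

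The main technical obstacle will be the joint-compatibility step. In the constant-column design the $m$ tests are not independent: they compete for the same pool of $n\Delta$ edges, and the random test-degrees $\Gamma_j$ induce further correlations. Making the passage from per-test to joint compatibility rigorous either via the Poissonised analogue (combined with Lemma~\ref{Lemma_GammaMinMax} to control degree fluctuations) or via a direct multi-type occupancy computation is where the bulk of the technical work resides; by contrast, the counting step and the optimisation over $\alpha$ are comparatively routine.
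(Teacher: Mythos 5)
Your overall strategy — a first‑moment bound on $\Erw Z_{k,\ell}$, the symmetry reduction via the partition $T_1,A,B,T_0$, the per‑test compatibility probability $1-2(1-1/m)^{k\Delta}+2(1-1/m)^{(1+\alpha)k\Delta}\sim 2^{-\alpha}$, and the final optimisation over $\alpha\ge1/\log n$ — is exactly the paper's approach, and your counting/optimisation step is correct. The gap is in the decoupling step, and it is a genuine one, not merely a detail you left for later.

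You propose to pass from the per‑test estimate to the joint probability by ``a contiguity argument with the Poissonised design in which the number of edges between every (individual,test) pair is an independent $\Po(\Delta/m)$ variable,'' citing \Lem~\ref{Lemma_GammaMinMax} for the reconciliation. This cannot work as stated. In the Poissonised model each individual's total degree is $\Po(\Delta)$ and the event $\{\deg(x_1)\neq\Delta\}$ has probability $1-\Theta(\Delta^{-1/2})$, whereas in the constant‑column model this event has probability $0$; the two sequences of measures are therefore mutually singular, not contiguous, and no lower‑order correction furnished by \Lem~\ref{Lemma_GammaMinMax} (which only controls the \emph{test} degrees $\Gamma_i$) can repair this. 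Nor does the naive fix — condition the Poissonised model on all $n$ individual degrees being exactly $\Delta$ — give a usable bound, since the conditioning event has probability $\Theta(\Delta^{-n/2})$, a penalty of $\exp(\Theta(n\log\log n))$ that dwarfs the target exponent $\alpha m\log 2=O(n^\theta\log n)$.

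What actually makes the decoupling go through (and what the paper does) is to \emph{first} condition on the test‑degree vector $\cG=(\Gamma_i)_{i\in[m]}$, under which $\G$ is a uniform configuration‑model graph, and then observe that the compatibility event depends only on the \emph{type} of each of the $n\Delta$ clones, where the type lives in $\{0,1\}^2$ according to $(\SIGMA_{x},\sigma_{x})$. One then introduces i.i.d.\ auxiliary type labels $\vA'_{i,j}$ with the correct marginals and conditions on the event $\cT$ that the three nontrivial aggregate type counts hit their means exactly; given $\cT$, the labels have the same law as in the configuration model. Crucially, $\cT$ is a $3$-dimensional constraint and a local‑CLT estimate gives $\pr[\cT\mid\cG]=\Omega((\Delta k)^{-3/2})$, so Bayes' rule costs only a polynomial factor. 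This is the ``direct multi‑type occupancy computation'' you mention as an alternative; if you pursue the Poissonisation route, you must Poissonise at the level of type counts per test and condition only on the handful of aggregate type totals (plus $\cG$), not on $n$ individual degrees.
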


\begin{proof}
For $i\in[m]$ let $\Gamma_i$ be the degree of $a_i$ in $\G$, i.e., the number of edges incident with $a_i$; this number may exceed the number of different individuals that participate in test $a_i$ as $\G$ may feature multi-edges.
\mhk{Let $\cG$ be the $\sigma$-algebra generated by the random variables $(\Gamma_i)_{i\in[m]}$. Whenever we condition on $\Gamma$, we assume that the bounds from \Lem~\ref{Lemma_GammaMinMax} and \ref{Lemma_m0} hold.}
Given $\cG$ we can generate $\G$ from the well-known {\em pairing model}~\cite{JLR}.
Specifically, we create a set $\cbc{x_i}\times[\Delta]$ of $\Delta$ clones of each individual as well as sets $\cbc{a_i}\times[\Gamma_i]$ of clones of the tests.
Then we draw a perfect matching of the complete bipartite graph on the vertex sets $\bigcup_{i=1}^n\cbc{x_i}\times[\Delta]$, $\bigcup_{i=1}^m\cbc{a_i}\times[\Gamma_i]$ uniformly at random.
For each matching edge linking a clone of $x_i$ with a clone of $a_j$ we insert an $i$-$j$-edge.
The resulting bipartite random multi-graph has the same distribution as $\G$ given $\cG$. 
As an application of this observation we obtain for every integer $0\leq \ell<k$
\begin{align}\label{eqLemma_small_overlap}
\Erw[Z_{k,\ell}(\G,\hat\SIGMA)\mid\cG]&\leq O\bc{\bc{\Delta k}^{3/2}}\cdot\bink k\ell\bink{n-k}{k-\ell}
\prod_{i=1}^m \bc{1-2(1-k/n)^{\Gamma_i}+2\bc{1-2k/n+\ell/n}^{\Gamma_i}}
\end{align}

To see why \eqref{eqLemma_small_overlap} holds we use the linearity of expectation.
The product of the two binomial coefficients simply accounts for the number of assignments $\sigma$ that have overlap $\ell$ with $\SIGMA$.
Hence, with $\cS$ the event that one specific $\sigma\in\{0,1\}^{V}$ that has overlap $\ell$ with $\SIGMA$ belongs to $S_{k,\ell}(\G,\hat\SIGMA)$, we need to show that
\begin{align}\label{XeqLemma_small_overlap1}
\pr\brk{\cS\mid\cG}&\leq O\bc{\bc{\Delta k}^{3/2}} \prod_{i=1}^m1-2(1-k/n)^{\Gamma_i}+2\bc{1-2k/n+\ell/n}^{\Gamma_i}.
\end{align}
By symmetry we may assume that $\SIGMA_{x_i}=\vecone\{i\leq k\}$ and that
$\sigma_{x_i}=\vecone\{i\leq\ell\}+\vecone\{k<i\leq 2k-\ell\}$.

To establish \eqref{XeqLemma_small_overlap1} we harness the pairing model.
Namely, given $\cG$ we can think of each test $a_i$ as a bin of capacity $\Gamma_i$.
Moreover, we think of each clone $(x_i,h)$, $h\in[\Delta]$, of an individual as a ball.
The ball is labelled $(\SIGMA_{x_i},\sigma_{x_i})\in\{0,1\}^2$.
The random matching that creates $\G$ effectively tosses the $\Delta n$ balls randomly into the bins.
Hence, for $i\in[m]$ and for $j\in[\Gamma_i]$ let us write $\vA_{i,j}=(\vA_{i,j,1},\vA_{i,j,2})\in\{0,1\}^2$ for the label of the $j$th ball that ends up in bin number $i$.
Then we are left to calculate \pl{the probability that for every test $a_i$ either $\vA_{i,j,1}=\vA_{i,j,2}=0$ for every $j \in [\Gamma_i]$ or there is at least one pair $(j,k) \in [\Gamma_i]^2$ such that $\vA_{i,j,1}=\vA_{i,k,2}=1$}
\begin{align}\label{eqLemma_small_overlap2}
\pr\brk{\cS\mid\cG}&= \pr\brk{\forall i\in[m]:\max_{j\in[\Gamma_i]}\vA_{i,j,1}=\max_{j\in[\Gamma_i]}\vA_{i,j,2}\,\big|\,\cG},
\end{align}
%i.e., the probability that a test $a_i$ is positive with respect to the first assignment $(\vA_{i,j,1})_{j \in [\Gamma_i]}$ iff it is positive with respect to the second assignment $(\vA_{i,j,2})_{j \in [\Gamma_i]}$.

To calculate this probability we borrow a trick from the analysis of the random $k$-SAT model~\cite{KostaSAT}. 
Namely, we consider a new set of $\{0,1\}^2$-valued random variables 
$\vA_{i,j}'=(\vA_{i,j,1}',\vA_{i,j,2}')$ such that $(\vA_{i,j}')_{i\in[m],j\in[\Gamma_i]}$ are mutually independent and such that
\begin{align*}
\pr\brk{\vA_{i,j}'=(1,1)}&=\ell/n,&
\pr\brk{\vA_{i,j}'=(0,1)}&=\pr\brk{\vA_{i,j}'=(1,0)}=(k-\ell)/n,\\
\pr\brk{\vA_{i,j}'=(0,0)}&=(n-2k+\ell)/n
\end{align*}
for all $i,j$. \mhk{Due to their independence, these multinomially distributed random variables are much easier to handle than $\vA_{i,j}$. It will turn out, that given a (not too unlikely) event, it suffices to analyse these independent variables instead of $\vA_{i,j}$. }
Now, let $\cT$ be the event that 
\begin{align} \label{eq_def_T}
\sum_{i=1}^m\sum_{j=1}^{\Gamma_i}\vecone\cbc{\vA_{i,j}'=(1,1)}&=\ell\Delta,
\qquad \sum_{i=1}^m\sum_{j=1}^{\Gamma_i}\vecone\cbc{\vA_{i,j}'=(0,0)}=(n-2k+\ell)\Delta,\\
\sum_{i=1}^m\sum_{j=1}^{\Gamma_i}\vecone\cbc{\vA_{i,j}'=(1,0)}&=
\sum_{i=1}^m\sum_{j=1}^{\Gamma_i}\vecone\cbc{\vA_{i,j}'=(0,1)}=(k-\ell)\Delta,
\end{align}
i..e, that all of the sums on the l.h.s.\ are {\em precisely} equal to their expected values.
Then $\vA'=(\vA_{i,j}')_{i,j}$ given $\cT$ is distributed precisely as $\vA=(\vA_{i,j})_{i,j}$.
Hence, \eqref{eqLemma_small_overlap2} yields
\begin{align}\label{eqLemma_small_overlap3}
\pr\brk{\cS\mid\cG}
&=\pr\brk{\forall i\in[m]:\max_{j\in[\Gamma_i]}\vA_{i,j,1}'=\max_{j\in[\Gamma_i]}\vA'_{i,j,2}\mid\cG,\cT}.
\end{align}

Thus, let $$\cA=\cbc{\forall i\in[m]:\max_{j\in[\Gamma_i]}\vA_{i,j,1}'=\max_{j\in[\Gamma_i]}\vA'_{i,j,2}}.$$
\pl{The grand idea is now to calculate the probability $\pr\brk{\cA\mid\cG}$. Subsequently, we employ Bayes' Theorem to derive a bound for the conditional probability $\pr\brk{\cA\mid \cT, \cG}$ for which we know by the above application of the balls-into-bins principle $$\pr\brk{\cS\mid\cG} = \pr\brk{\cA\mid \cT, \cG}.$$}
Because the $(\vA_{i,j}')_{i,j}$ are mutually independent, we can easily compute the unconditional probability $\pr\brk{\cA\mid\cG}$: by inclusion/exclusion,
\begin{align}\label{eqLemma_small_overlap4}
\pr\brk{\cA\mid\cG}&=
\prod_{i=1}^m \bc{1-2(1-k/n)^{\Gamma_i}+2(1-2k/n+\ell/n)^{\Gamma_i}}
\end{align}
{(the probability that $\max_{j\in[\Gamma_i]}\vA_{i,j,1}'=\max_{j\in[\Gamma_i]}\vA_{i,j,2}'=1$, i.e., both tests positive, equals one minus the probability that $\max_{j\in[\Gamma_i]}\vA_{i,j,1}'=0$ minus the probability that
$\max_{j\in[\Gamma_i]}\vA_{i,j,2}'=0$ plus the probability that $\max_{j\in[\Gamma_i]}\vA_{i,j,1}'=\max_{j\in[\Gamma_i]}\vA_{i,j,2}'=0$; then add the probability that $\max_{j\in[\Gamma_i]}\vA_{i,j,1}'=\max_{j\in[\Gamma_i]}\vA_{i,j,2}'=0$, i.e., both tests negative).}

Finally, to deal with the conditioning we use Bayes' rule:
\begin{align}\label{eqLemma_small_overlap5}
\pr\brk{\cA\mid\cT,\cG}&=\frac{{\pr\brk{\cA\mid\cG}\pr\brk{\cT\mid\cA,\cG}}}{{\pr\brk{\cT\mid\cG}}}.
\end{align}
Since the $(\vA_{i,j}')_{i,j}$ are independent, Stirling's formula yields
\begin{align*}
    \pr\brk{\cT\mid\cG}=\Omega\bc{(\Delta k)^{-3/2}}.  
\end{align*}
\mhk{A short justification can be found in \Sec~\ref{Sec_stirling}.}
Moreover, by definition we have 
$\pr\brk{\cT\mid\cA,\cG}\leq 1$.
Hence, \eqref{XeqLemma_small_overlap1} follows from \eqref{eqLemma_small_overlap3}--\eqref{eqLemma_small_overlap5}.
To complete the proof of the proposition, we claim that
\begin{align}\label{eqProp_small_overlap_0}
\sum_{0\leq\ell\leq \ceil{(1-1/\log n)k}} O\bc{\bc{\Delta k}^{3/2}} \bink k\ell\bink{n-k}{k-\ell}
\prod_{i=1}^m \bc{1-2(1-k/n)^{\Gamma_i}+2\bc{1-2k/n+\ell/n}^{\Gamma_i}}&=o(1).
\end{align}
To prove Equation \eqref{eqProp_small_overlap_0}, let $\alpha=\ell/k$. Using \Lem~\ref{Lemma_GammaMinMax} \mhk{and recalling $m=ck\log(n/k)$ and $\Delta=d \log(n/k)$}, we find
\color{black}
\begin{align}
	\Erw&[Z_{k,l}(\G,\SIGMA)] \leq O\bc{\bc{\Delta k}^{3/2}} \binom{k}{(1-\alpha)k} \binom{n-k}{(1-\alpha)k} \prod_{i=1}^m \bc{1-2\bc{1-\frac kn}^{\Gamma_i} + 2\bc{1-\frac {2k}{n} + \frac{\alpha k}{n}}^{\Gamma_i}} \nonumber\\
	&\leq O\bc{\bc{\Delta k}^{3/2}} \bc{\frac{e}{(1-\alpha)} \frac{en}{(1-\alpha) k}}^{(1-\alpha) k} \bc{1-2\bc{1-\frac kn}^{\Gamma_{\max}} + 2 \bc{1-\frac{2k}{n}+ \frac{\alpha k}{n}}^{\Gamma_{\min}}}^m \nonumber\\
	&\leq O\bc{\bc{\Delta k}^{3/2}} \left(\frac{e}{(1-\alpha)} \frac{en}{(1-\alpha) k} \right)^{(1-\alpha) k} \bigg(1-2\bc{1-\frac kn}^{\frac{n\log 2}{k}\bc{1+n^{-\Omega(1)}}} \\&\qquad  
	\qquad\qquad\qquad\qquad\qquad\qquad\qquad\qquad\qquad+ 2\bc{1-\frac{2k}{n}+\frac{\alpha k}{n}}^{\frac{n\log 2}{k}\bc{1+n^{-\Omega(1)}}}\bigg)^m \nonumber\\
	&\leq O\bc{\bc{\Delta k}^{3/2}} \left(\frac{e}{(1-\alpha)} \frac{en}{(1-\alpha) k} \right)^{(1-\alpha) k} \bc{1 - \bc{1 - 2^{-(1 - \alpha)} } \exp\bc{n^{-\Omega(1)}} }^m \nonumber\\
	&= O\bc{\bc{\Delta k}^{3/2}} \left(\frac{e}{(1-\alpha)} \frac{en}{(1-\alpha) k}(k/n)^{c \log(2)+ n^{-\Omega(1)}}(1+o(1))\right)^{(1-\alpha) k}\nonumber\\
	&= O\bc{\bc{\Delta k}^{3/2}} \left(\frac{e^2 (k/n)^{c \log(2) -1 + n^{-\Omega(1)}}}{(1-\alpha)^2} \right)^{(1-\alpha) k}.\label{Proof_small_first}
\end{align}

\color{black}
By the definition of $m > (1+\eps) \minf$ and $\ell < \ceil{k(1 - \log^{-1} n)}$, we have
\begin{align} \label{Proof_small_second}
    c \log 2 = 1+\eps \qquad \text{and} \qquad (1-\alpha)^2 \geq 1/\log^2 n
\end{align}
%\begin{align}
%	\frac{e^2 (k/n)^{c \log(2) - 1 + n^{-\Omega(1)}}}{(1-\alpha)^2} \nonumber
%	&= \exp(2+\log(k/n) (\bc{1+n^{-\Omega(1)}}\eps \log 2 - \Theta(\log n \sqrt{k/n}) - 2 \log (\Omega(\log^{-1}n))) \nonumber\\
%	&= \exp(2- \Theta(\log n) + O(\log \log n)) = n^{-\Omega(1)} \label{Proof_small_second}
%\end{align}
\color{black}
Moreover, as $\ell < \ceil{k(1 - \log^{-1} n)}$ we have $(1 - \alpha) k = \omega(1)$. Thus \eqref{Proof_small_second} implies that \eqref{Proof_small_first} tends to zero with $n \to \infty$.
Therefore, the proposition follows from Equations \eqref{Proof_small_first}, \eqref{Proof_small_second} and Markov's inequality.
\color{black}

\end{proof}

The argument from \Prop~\ref{Prop_small_overlap} does not extend to large overlaps (close to $k$) because the expression on the r.h.s.\ of \eqref{eqLemma_small_overlap} gets too large.
In other words, merely computing the expected number of solutions with a given overlap does not do the trick.
This `lottery phenomenon' is ubiquitous in random constraint satisfaction problems: for big overlap values rare solution-rich instances drive up the expected number of solutions~\cite{nae,ANP}. Fortunately, we can find a remedy.

\begin{proposition}\label{Prop_big_overlap}
Let $\eps>0$ and $0<\theta<1$ and assume that $m>(1+\eps)\minf(k,\theta)$.
\Whp\ we have $Z_{k,\ell}(\G,\hat\SIGMA)=0$ for all $(1-1/\log n)k\leq\ell<k$.
\end{proposition}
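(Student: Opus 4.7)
The plan is to reduce the claim to a first-moment calculation for the pair $(A,B)$ recording the symmetric difference between $\sigma$ and $\SIGMA$, and to tighten the moment using an appropriate conditioning argument, since the direct approach from \Prop~\ref{Prop_small_overlap} is too lossy in this large-overlap regime (the lottery phenomenon). Any $\sigma$ counted by $Z_{k,\ell}(\G,\hat\SIGMA)$ with $\ell\in[(1-1/\log n)k,k)$ corresponds uniquely to $A\subset\one$, $B\subset\zero$ with $|A|=|B|=s:=k-\ell\in[1,\lceil k/\log n\rceil]$, via $\sigma=\SIGMA\oplus\vecone_A\oplus\vecone_B$; consistency with $\hat\SIGMA$ amounts to (a) $B\subset\zeroplus$, and (b) every test $a$ with $\partial a\cap\one$ nonempty and contained in $A$ satisfies $\partial a\cap B\neq\emptyset$. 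Setting $T_1(x):=\{a\in\partial x:\partial a\cap\one=\{x\}\}$ and $N(B):=\bigcup_{y\in B}\partial y$, condition (b) forces $W:=\bigcup_{x\in A}T_1(x)\subset N(B)$; moreover the sets $T_1(x)$ for $x\in A$ are pairwise disjoint (a test with two infected individuals uniquely witnesses neither), so $|W|=\sum_{x\in A}|T_1(x)|$.

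The key structural statement I would prove first is that \whp\ there exists $c_0=c_0(\eps)>0$ with $|T_1(x)|\geq c_0\Delta$ for every $x\in\one$. For fixed $x$, each test $a\in\partial x$ is uniquely pivotal with probability $(1-k/n)^{\Gamma_a-1}\to 1/2$ by the choice $\Delta=\lceil(m/k)\log 2\rceil$, so $|T_1(x)|$ is close in distribution to $\Bin(\Delta,1/2)$. The Cram\'er rate $I(c_0)=\log 2-H(c_0)$ of the Binomial around zero tends to $\log 2$ as $c_0\downarrow 0$, so $\Pr[|T_1(x)|\leq c_0\Delta]\leq\exp(-\Delta(\log 2-o_{c_0}(1)))$. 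Because $\Delta\log 2=(1+\eps)\theta\log n(1+o(1))$ in the critical regime $m>(1+\eps)\minf$ and $|\one|=n^\theta$, taking $c_0$ small enough makes the union bound $\Pr[\exists x\in\one:|T_1(x)|<c_0\Delta]\leq n^{-\Omega(\eps\theta)}$, which is $o(1)$. Denote this event $\cE$; on $\cE$, $|W|\geq sc_0\Delta$ for every admissible $(A,B)$.

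Next, I would bound $\Erw[Z_{k,\ell}\vecone_\cE]$ by conditioning on $\G$ restricted to $\one$, which determines $M_1$ (with $|M_1|/m=1/2+o(1)$ by \Lem~\ref{Lemma_m0}) and all $T_1(x)$. Because the neighbourhoods of distinct individuals are independent in $\G$, $\Pr[(a)]=(|M_1|/m)^{s\Delta}\leq 2^{-s\Delta(1-o(1))}$. Given (a), each $\partial y$ for $y\in B$ is an iid uniform $\Delta$-multiset from $M_1$, and the elementary inequality $(1-2/|M_1|)^\Delta\leq(1-1/|M_1|)^{2\Delta}$ makes the events $\{a\in N(B)\}$ pairwise negatively correlated across $a\in M_1$; FKG-style induction then yields $\Pr[W\subset N(B)\mid(a)]\leq(s\Delta/|M_1|)^{|W|}\leq(2s\log 2/k)^{sc_0\Delta}$ on $\cE$. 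Combining with $\binom{k}{s}\binom{n-k}{s}\leq(e^2kn/s^2)^s$ and $s/k\leq 1/\log n$,
\begin{align*}
\Erw[Z_{k,\ell}\vecone_\cE]^{1/s}\leq(e^2kn/s^2)\cdot 2^{-\Delta(1-o(1))}\cdot(2\log 2)^{c_0\Delta}(\log n)^{-c_0\Delta}=\exp\bc{O(\log n)-c_0\Delta\log\log n},
\end{align*}
which is $n^{-\omega(1)}$ since $\Delta=\Theta(\log n)$. Summing over $s\in[1,\lceil k/\log n\rceil]$ preserves $o(1)$, and Markov's inequality together with $\Pr[\cE^c]=o(1)$ completes the proof.

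The main obstacle is the uniform lower bound on $|T_1(x)|$: Hoeffding's inequality is insufficient because $\Delta\log 2$ barely exceeds $\log k$ in the critical regime, so one has to exploit the sharper Cram\'er rate $\log 2$ of the Binomial at zero to generate the $\Omega(\eps\theta\log n)$ slack needed to beat the $n^\theta$-sized union bound. A secondary technical burden is to control the $o(1)$ corrections in $|M_1|/m\to 1/2$ and in the multi-graph negative-correlation estimates; both can be handled by the concentration tools already supplied by \Lem~\ref{Lemma_GammaMinMax} and \Lem~\ref{Lemma_m0}.
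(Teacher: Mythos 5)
Your proof follows the same two-step plan as the paper's: (1) establish a ``local rigidity'' lemma showing that, \whp, every infected individual is the unique infected member of $\Omega(\Delta)$ tests (your event $\cE$ is exactly the paper's event $\cR$ in \Lem~\ref{Lemma_rigid}, and your Cram\'er-rate bound $\Delta(\log 2 - o_{c_0}(1)) > (1+\eps)\theta\log n$ matches the paper's hypergeometric-Chernoff estimate \eqref{eqLemma_V1delta_3}--\eqref{eqLemma_V1delta_4}); then (2) a conditional first-moment computation showing $\Erw[Z_{k,\ell}\vecone_{\cE}]=o(1)$. Where you differ is in how step (2) is executed. The paper (in \Lem~\ref{Lemma_big_overlap}) conditions on the test-degree sequence $\cG$ and then runs the $\{0,1\}^2$-labelled balls-into-bins / auxiliary-independent-variables / Bayes argument of \Prop~\ref{Prop_small_overlap}, incurring the $O((\Delta k)^{3/2})$ Stirling correction. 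You instead condition on $\G$ restricted to $\one$, which pins down $M_1$ and all the $T_1(x)$, and then exploit the fact that individuals in $\zero$ choose their $\Delta$ tests independently of $\G|_{\one}$; this avoids the pairing-model conditioning and the Stirling factor altogether and is arguably cleaner. Two small caveats on your step (2): the ``$(1-2/|M_1|)^\Delta\le(1-1/|M_1|)^{2\Delta}$ makes the events pairwise negatively correlated, FKG-style induction yields the product bound'' passage is not quite the right citation---FKG gives the opposite inequality, and pairwise negative correlation does not by itself yield the multivariate product bound; what you actually want is either negative association of balls-into-bins occupancy indicators, or, more elementarily, the direct union bound $\Pr[W\subset N(B)] \le \binom{s\Delta}{|W|}|W|!\,|M_1|^{-|W|} \le (s\Delta/|M_1|)^{|W|}$ obtained by choosing which of the $s\Delta$ edges witnesses each test in $W$. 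Also, the remark that $|T_1(x)|$ is ``close in distribution to $\Bin(\Delta,1/2)$'' hides a genuine dependency among the tests in $\partial x$ (they compete for the $k\Delta$ infected clones); the paper handles this by passing to the hypergeometric distribution via the pairing model, which is the step you'd need to make precise. Neither issue affects the conclusion.
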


In order to cope with this issue we take another leaf out of the random CSP literature~\cite{Barriers,Molloy}.
Namely, we show that the solution $\SIGMA$ is locally rigid.
That is, the expansion properties of the random bipartite graph $\G$ preclude the existence of other solutions that have a big overlap with $\SIGMA$.
The following lemma holds the key to this effect.

\begin{lemma}\label{Lemma_rigid}
For any $\eps>0$ there exists $\delta=\delta(\eps)>0$ such that for all $m>(1+\eps)\minf$ the following is true.
Let $\cR$ be the event that for every $x_i$ with $\SIGMA_{x_i}=1$ there are at least $\delta\Delta$ tests $a\in\partial x_i$ {such that $\partial a\setminus\{x_i\}\subset V_0$.}
Then $\pr\brk\cR=1-o(1)$.
\end{lemma}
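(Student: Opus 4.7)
The plan is a first-moment/Chernoff/union-bound argument. Write $N_i$ for the number of tests $a\in\partial x_i$ with $\partial a\setminus\{x_i\}\subset V_0$. I will show $\Erw[N_i]\sim\Delta/2$, concentrate $N_i$ around its mean, and then union-bound over the $k$ infected individuals. As a preliminary, condition on the $\sigma$-algebra $\cG$ of test degrees under the high-probability event of Lemma~\ref{Lemma_GammaMinMax} so that $\Gamma_a=(1+n^{-\Omega(1)})\Delta n/m$ uniformly; with $\Delta=\lceil(m/k)\log 2\rceil$ this gives $\Gamma_a\cdot k/n=(1+o(1))\log 2$.

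For the first moment, I would adopt the pairing-model representation of $\G$ used in the proof of Proposition~\ref{Prop_small_overlap}. Reveal the $\Delta$ clones of $x_i$ first; then each of the remaining $\Gamma_a-1$ slots of a given $a\in\partial x_i$ is filled by a near-uniform clone from $V\setminus\{x_i\}$, and the probability that none comes from $V_1\setminus\{x_i\}$ equals $(1+o(1))(1-k/n)^{\Gamma_a-1}=(1+o(1))/2$, giving $\Erw[N_i\mid\cG]=(1/2+o(1))\Delta$. For concentration, the $\Delta$ indicators summing to $N_i$ behave as negatively associated Bernoullis under the pairing (an infected clone occupying one test makes the other tests less likely to be hit by infected clones), so the corresponding Chernoff bound yields
\[
\Pr[N_i<\delta\Delta\mid\cG]\leq\exp\bc{-\KL{\delta}{1/2}\Delta+o(\Delta)},
\]
with $\KL{\delta}{1/2}=\log 2+\delta\log\delta+(1-\delta)\log(1-\delta)\to\log 2$ as $\delta\to 0^+$.

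A union bound over $V_1$ then demands $k\cdot\exp(-\KL{\delta}{1/2}\Delta)=o(1)$. A short calculation in both regimes of Theorem~\ref{Thm_inf} reduces this to the single inequality $\KL{\delta}{1/2}>\log 2/(1+\eps)$: in the regime $\theta\leq\log 2/(1+\log 2)$ one uses $\Delta\geq(1+\eps)\log(n/k)$ together with $\theta\leq(1-\theta)\log 2$, while in the regime $\theta>\log 2/(1+\log 2)$ one uses $\Delta\geq(1+\eps)\theta\log(n/k)/((1-\theta)\log 2)$. Either way $\delta=\delta(\eps)>0$ can be chosen small enough that the union bound tends to zero, proving $\Pr[\cR]=1-o(1)$. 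The main obstacle will be step two: rigorously controlling the dependence among the $\Delta$ indicator variables inside the pairing model. The cleanest workaround is the same reduction to independent Bernoullis as in Proposition~\ref{Prop_small_overlap}, carried out per individual; the polynomial prefactor $(\Delta k)^{O(1)}$ is comfortably swamped by $\exp(-\Omega(\log(n/k)))=(n/k)^{-\Omega(1)}$, leaving ample room after the union bound.
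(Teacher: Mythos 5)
Your proposal follows the same high-level template as the paper's own proof — compute a first moment showing each infected individual lands in roughly $\Delta/2$ tests where it is the unique infected member, apply a Chernoff-type lower-tail bound with exponent $\Delta\cdot\KL{\delta}{1/2}$, and union-bound over the $k$ infected individuals; the final check $\KL{\delta}{1/2}>\log 2/(1+\eps)$ in both $\theta$-regimes matches the paper's criterion $\Delta(\delta\log\delta+(1-\delta)\log(1-\delta)+\log 2+o(1))>\log k$ derived from $c>\theta/((1-\theta)\log^2 2)$. The one place where you genuinely diverge, and which you rightly flag as the obstacle, is the concentration mechanism. You work per-individual and appeal to negative association of the $\Delta$ indicators in the pairing model; this is plausible but not trivially established, and your fallback is to redo the $\cE$-conditioning (independent-Bernoullis-conditioned-on-totals) argument once for each $x_i$, incurring a $(\Delta k)^{O(1)}$ prefactor per individual. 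The paper instead computes, once and globally, the number $W$ of tests containing exactly one infected clone (using the $\cE$-conditioning trick to concentrate $W\sim k\Delta/2$ with probability $1-o(n^{-7})$), and then observes that conditional on $W$ the number of "good" slots among $x_i$'s $\Delta$ clones is exactly hypergeometric $\vH(k\Delta,W,\Delta)$ by exchangeability of the pairing. This hypergeometric formulation gives the tail bound $\exp(-\Delta\KL{\delta}{W/(k\Delta)})$ cleanly and with no per-individual prefactor, and in particular sidesteps having to justify negative association. Both routes close the argument; the paper's version is marginally tidier because it pays the conditioning cost only once, and because it replaces the negative-association heuristic with a purely combinatorial exchangeability statement. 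Your proposal would go through after you formalize the per-individual conditional-independence step, but I would recommend switching to the global-$W$-plus-hypergeometric framing, since it is what you would most likely end up writing down anyway once you try to make the negative-association claim rigorous.
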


\begin{proof}
	Let $(\vX_i)_{i\in[m]}$ be a sequence of independent $\Bin(\Gamma_i,k/n)$-variables as in \Sec~\ref{Sec_Groundwork}.
	Also let $W=\sum_{i=1}^m\vecone\cbc{\vY_i=1}$ as in \Sec~\ref{Sec_Groundwork}.
	Proceeding along the lines of the \mhk{proof of \Lem~\ref{Lemma_V_all} (see \eqref{eqLemma_V1plus_10} in \Sec~\ref{subsec_start_prop})}, we obtain
	\begin{align}\label{eqLemma_V1delta_1}
	\pr\brk{W=\bc{1+n^{-\Omega(1)}}k\Delta/2\mid\cG} &=1-o(n^{-7}).
	\end{align}
	Let $T$ be the number of \mhk{infected individuals which only show up less than $\delta \Delta$ of their tests as the only infected individual}, i.e. 
	$$ T = \abs{x\in V_1 : \sum_{a\in\partial x}\vecone\cbc{\partial a\setminus\cbc x\subset V_0}<\delta\Delta}.$$
	Moreover, let $\vH=\vH(N,K,n')$ be a hypergeometric random variable with parameters $N=k\Delta$ \mhk{(total eligible assignments for infected individuals)}, $K=W$ \mhk{(tests with only one infected individual)} and \mhk{$n'=\Delta$ (number of tests per individuals)}.
	Then \mhk{the union bound over $k$ infected individuals yields}
	\begin{align}\label{eqLemma_V1delta_2}
	\Erw\brk{T\mid\cG,W}&\leq k\pr\brk{\vH<\delta\Delta}.
	\end{align}
	%\pl{The inequality \eqref{eqLemma_V1delta_2} follows from the FKG inequality, as the r.h.s. assumes independence while the events that infected individuals are included in less than $\delta \Delta$ tests as the only infected individual are (weakly) negatively correlated.}
	Further, the Chernoff bound for the hypergeometric distribution \mhk{implies}
	\begin{align}\label{eqLemma_V1delta_3}
	\pr\brk{\vH<\delta\Delta}&\leq\exp(-\Delta\KL{\delta}{W/(k\Delta)})
	\end{align}
	\mhk{Recall $\Delta=d \log(n/k)$.} Since $\KL{\delta}{1/2+o(1)}=\delta\log\delta+(1-\delta)\log(1-\delta)+\log2+o(1)$ and $\delta\log\delta+(1-\delta)\log(1-\delta)\nearrow0$ as $\delta\to0$ and $c>\frac{\theta}{(1-\theta)\log^2 2}$, we can choose $\delta>0$ small enough so that
	\begin{align}\label{eqLemma_V1delta_4}
	\Delta(\delta\log\delta+(1-\delta)\log(1-\delta)+\log2+o(1))>\log k
    \end{align}
Finally, the assertion follows from \eqref{eqLemma_V1delta_1}--\eqref{eqLemma_V1delta_4}.
\end{proof}

Hence, \whp\ any infected individual appears in plenty of tests where all the other individuals are uninfected.
This property causes $\SIGMA$ to be locally rigid.
To see why, consider the repercussions of just changing the status of a single individual $x_i$ from infected to uninfected.
Because given $\cR$ the individual $x_i$ appears as the only infected individual in at least $\delta\Delta$ tests, in order to maintain the same tests results we will also need to flip at least one individual in each of these tests from `uninfected' to `infected'.
Since tests typically have relatively few individuals in common, the necessary number of flips from $0$ to $1$ will be $\Omega(\Delta)=\Omega(\log n)$.
But then in order to keep the total number of infected individuals constant $k$, we will need to perform another $\Omega(\Delta)$ flips from $1$ to $0$.
Yet given $\cR$ each of these `second generation' individuals that we flip from infected to uninfected is itself the only infected individual in many tests.
Thus, the single flip that we started from triggers a veritable avalanche of flips, which will stop only after the overlap has dropped significantly.
The next lemma formalises this intuition.
The lemma shows that while the unconditional expectation of $Z_{k,\ell}(\G,\hat\SIGMA)$ is `too big', the conditional expectation of $Z_{k,\ell}(\G,\hat\SIGMA)$ given $\cR$ \mhk{(as defined in \Lem~\ref{Lemma_rigid})} is much smaller.
Let $\mzero=\mzero(\G,\hat\SIGMA)$ be the total number of negative tests.

\begin{lemma}\label{Lemma_big_overlap}
Suppose that $(1-1/\log n) k\leq \ell< k$ and let 
$\Gamma_{\min} = \min_{i \in [m]} \Gamma_i$,
$\Gamma_{\max} = \max_{i \in [m]} \Gamma_i$.
Then
\begin{align}\label{eqLemma_big_overlap}
\Erw[Z_{k,\ell}(\G,\hat\SIGMA)\mid\cG,\cR,\mzero]&\leq O\bc{\bc{\Delta k}^{3/2}} \bink{k}{\ell}\bink{n-k}{k-\ell}
\bc{1-\bc{1-\frac{k-\ell}{n-k}}^{\Gamma_{\max}}}^{\delta \Delta (k-\ell)}\bc{\frac{n-2k+\ell}{n-k}}^{\bc{1+n^{-\Omega(1)}}\Gamma_{\min}\mzero}.
\end{align}
\end{lemma}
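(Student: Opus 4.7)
My plan is to mimic the template of Proposition~\ref{Prop_small_overlap} while superimposing the structural constraints provided by $\cR$ and $\mzero$. By the symmetry of $\SIGMA$ I fix $\SIGMA_{x_i}=\vecone\{i\le k\}$ without loss of generality, and then linearity of expectation together with symmetry among the $\binom{k}{\ell}\binom{n-k}{k-\ell}$ candidate assignments $\sigma$ of overlap $\ell$ with $\SIGMA$ reduces the task to bounding $\Pr[\sigma\in S_k\mid\cG,\cR,\mzero]$ for a single canonical $\sigma$; its ``flipped-down'' set $F_{-}\subset V_1$ and ``flipped-up'' set $F_{+}\subset V_0$ both have size $k-\ell$.

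The key step is to read off from $\cR$ two necessary conditions for $\sigma$ to belong to $S_k$:
\begin{enumerate}[(a)]
\item no clone of any $F_{+}$-individual lands in any of the $\mzero$ negative tests;
\item for each $x\in F_{-}$, every one of the (at least) $\delta\Delta$ critical tests guaranteed by $\cR$ --- in each of which $x$ is the unique infected slot and the other $\Gamma_i-1$ slots lie in $V_0$ --- receives at least one clone of some $F_{+}$-individual.
\end{enumerate}
These two families of tests are pairwise disjoint (a critical test is positive under $\SIGMA$ whereas a negative test is not, and two critical tests belonging to distinct $F_{-}$-individuals cannot coincide since each has a unique infected slot), and together they pin the first-coordinate labels of $\mzero+\delta\Delta(k-\ell)$ disjoint bundles of slots.

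To turn this observation into the advertised bound, I re-use the independent-label surrogate from the proof of Proposition~\ref{Prop_small_overlap}: introduce the $\{0,1\}^2$-valued i.i.d.\ variables $\vA'_{i,j}$ with $\Pr[\vA'_{i,j}=(1,1)]=\ell/n$, $\Pr[\vA'_{i,j}=(0,0)]=(n-2k+\ell)/n$ and $\Pr[\vA'_{i,j}=(1,0)]=\Pr[\vA'_{i,j}=(0,1)]=(k-\ell)/n$, together with the count-matching event $\cT$ of~\eqref{eq_def_T}. Since $\cR$ and $\mzero$ are measurable with respect to the first coordinates of the labels, while the events (a) and (b) are measurable with respect to the second coordinates on slots whose first coordinate is already pinned to $0$, the independence of the $\vA'_{i,j}$ makes the two sets of constraints conditionally independent. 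Hence in the surrogate the probability of (a)$\cap$(b) factorises across tests as
\begin{align*}
\prod_{a_i\text{ negative}}\bc{\frac{n-2k+\ell}{n-k}}^{\Gamma_i}\prod_{a_i\text{ critical}}\bc{1-\bc{1-\frac{k-\ell}{n-k}}^{\Gamma_i-1}}.
\end{align*}
Replacing $\Gamma_i$ by $\Gamma_{\min}$ or $\Gamma_{\max}$ as appropriate --- and absorbing the $\Gamma_i\mapsto\Gamma_i-1$ discrepancy and the fluctuation $\sum\Gamma_i-\mzero\Gamma_{\min}$ into the $n^{-\Omega(1)}$ slack via Lemma~\ref{Lemma_GammaMinMax} --- produces the two rightmost factors of~\eqref{eqLemma_big_overlap}. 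The $O((\Delta k)^{3/2})$ prefactor is then recovered via Bayes' rule from the Stirling estimate $\Pr[\cT\mid\cG,\cR,\mzero]=\Omega((\Delta k)^{-3/2})$, exactly as in Proposition~\ref{Prop_small_overlap}.

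The principal technical obstacle will be the conditional-independence step: in the actual pairing model the clone labels are genuinely dependent, so the factorisation across tests does not hold there directly. I plan to sidestep this by performing the entire probability computation inside the independent surrogate and importing the bound to the pairing model through $\cT$, but one still has to verify that conditioning on $\cR$ and $\mzero$ does not inflate the $\Omega((\Delta k)^{-3/2})$ lower bound on $\Pr[\cT]$; because $\cR$ and $\mzero$ are high-probability events on the first-coordinate marginal, this should follow from essentially the same Stirling calculation as in Proposition~\ref{Prop_small_overlap}.
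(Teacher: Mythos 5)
Your proposal is correct and mirrors the paper's own proof: the paper likewise reduces to bounding $\Pr[\cS\mid\cG,\cR,\mzero]$ for one canonical $\sigma$, passes to the same independent-label surrogate with the count-matching event $\cT$, decomposes the favourable event into the two conditionally independent parts you call (a) and (b) (the paper's $\cS'$ over $\cN$ and $\cS''$ over $\cM$), factorises test-by-test, uniformises $\Gamma_i$ by $\Gamma_{\min}/\Gamma_{\max}$, and recovers the $O((\Delta k)^{3/2})$ prefactor via Bayes and the Stirling lower bound on $\Pr[\cT]$. The final technical point you flag — that conditioning on $\cR$, $\mzero$ (equivalently on $|\cN|\geq(1-n^{-\Omega(1)})\mzero$ and $|\cM|\geq\delta\Delta(k-\ell)$) does not destroy $\Pr[\cT]=\Omega((\Delta k)^{-3/2})$ — is handled in the paper exactly as you propose, by noting the conditioning event has probability $1-o(1)$.
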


\mhk{The proof of \Lem~\ref{Lemma_big_overlap} is somehow subtle as we need to get a handle on the dependencies in $\G$ and is included in \Sec~\ref{sec_big_overlap}.}
\pl{To convey the intuition behind the expression in \Lem~\ref{Lemma_big_overlap}, the term $\bink{k}{\ell}\bink{n-k}{k-\ell}$ accounts for the number of assignments $\tau\in\cbc{0,1}^V$ of Hamming weight $k$ whose overlap with $\SIGMA$ is equal to $\ell$.
The terms thereafter capture the probability that such an assignment $\tau$ exhibits the same test results as the true configuration $\SIGMA$.
The first term provides a necessary condition for a positive test under $\SIGMA$ to stay positive under $\tau$. By \Lem~\ref{Lemma_rigid}, we know that every infected individual shows up in at least $\delta \Delta$ tests as the only infected individual. Now, there are $k-\ell$ infected under $\SIGMA$, but healthy under $\tau$. For any of these $\delta \Delta (k-\ell)$ tests, we need to have at least one individual that is healthy under $\SIGMA$, but infected under $\tau$ included in this test.
Next, we need to ensure that any negative test under $\SIGMA$ stay negative under $\tau$. To this end, every individual included in a negative test under $\SIGMA$ of which we have at least $\Gamma_{\min} \mzero$ must be healthy under $\tau$. The second term captures this probability. }

\begin{proof}[Proof of \Prop~\ref{Prop_big_overlap}]
In order to establish the proposition it suffices to show that there is $\eps' \leq (1 - 1/\log(n))k$ such that
\begin{align}
\sum_{\eps'\leq \ell \leq k} \Erw&[Z_{k,\ell}(\G,\hat\SIGMA) | \cG, \cR, \mzero] = o(1). \label{EqShowBigOverlap}
\end{align}
\color{black}
Starting from the expression in \Lem~\ref{Lemma_big_overlap}, setting $\alpha = \ell/k$ \mhk{and recalling $m=ck\log(n/k)$ and $\Delta=d \log(n/k)$}, we obtain
\begin{align}
\Erw&[Z_{k,\ell}(\G,\hat\SIGMA) | \cG, \cR, \mzero] \nonumber\\
&\leq O\bc{\bc{\Delta k}^{3/2}} \binom{k}{k-\ell} \binom{n-k}{k-\ell}
\bcfr{n-2k+\ell}{n-k}^{\bc{1+n^{-\Omega(1)}}\Gamma_{\min}\mzero} \bc{1-\bc{1-\frac{k-\ell}{n-k}}^{\Gamma_{\max}}}^{\delta\Delta(k-\ell)} \nonumber\\
&\leq O\bc{\bc{\Delta k}^{3/2}} \bc{\frac{e}{1-\alpha}}^{(1-\alpha)k} \bc{\frac{e(n-k)}{(1-\alpha)k}}^{(1-\alpha)k} \bc{1-\frac{(1-\alpha)k}{n-k}}^{\frac{mn \log 2}{2k} \bc{1+n^{-\Omega(1)}}} \bc{1 - 2^{-(1-\alpha) \bc{1+n^{-\Omega(1)}}}}^{\delta \Delta (1-\alpha)k}  \\
& \leq O\bc{\bc{\Delta k}^{3/2}} \left( \frac{e^2 n}{(1-\alpha)^2 k}\right)^{(1 - \alpha)k} \exp \bc{ (1 - \alpha) k \frac{c \log 2}{2} \bc{1+n^{-\Omega(1)}} \log(k/n) } \nonumber\\
& \qquad \qquad \qquad \cdot \exp \left( -c \delta \log (2) \log \left( 1 - 2^{-(1-\alpha) \bc{1+n^{-\Omega(1)}}} \right) \log(k/n) (1 - \alpha) k \right) \nonumber\\
& \leq O\bc{\bc{\Delta k}^{3/2}} \bc{ \frac{e^2 n}{(1-\alpha)^2 k} \exp \bc{ \log(k/n) \bc{1+n^{-\Omega(1)}} \bc{ \frac{c \log 2}{2} - c \delta \log (2) \log \bc{ 1 - 2^{-(1-\alpha)\bc{1+n^{-\Omega(1)}}} }}}}^{(1-\alpha) k}. \label{Proof_big_first}
\end{align}
As long as $1-\alpha=o(1)$, we find
$$(k/n)^{-\log \left(1-2^{-(1-\alpha)}\right)}(1-\alpha)^{-2} \to 0 \qquad \qquad \text{as } n \to \infty.$$
Moreover, $(1-\alpha) k \geq 1$. Thus, the expression \eqref{Proof_big_first} is of order
\begin{align}
O\bc{\bc{\Delta k}^{3/2}} \bc{k/n}^{\omega(1)} = n^{-\omega(1)}. \label{Proof_big_second}
\end{align}
Since \eqref{Proof_big_second} holds for any constant $c>0$ and any value of $\alpha$ s.t. $1-\alpha = o(1)$, it also holds for $\alpha \geq 1 - 1/\log n$. Consequently \eqref{EqShowBigOverlap} is established \whp
\color{black}
\end{proof}

\Prop s~\ref{Prop_small_overlap} and~\ref{Prop_big_overlap} readily imply that $Z_k(\G,\hat\SIGMA)=1$ \whp\ if $m>(1+\eps)\minf(k,\theta)$.
Hence, \Cor~\ref{Cor_Nishi} shows that there exists an inference algorithm that given $\G,\hat\SIGMA$ 
\emph{and $k$} outputs $\SIGMA$ \whp\ 
Up to now, the algorithm relies on exactly knowing the number of infected individuals $k$, which in practice could be rather difficult to learn. Fortunately, this assumption can be removed.
Namely, the following proposition shows that \whp\ there is no assignment $\sigma$ that is compatible with the test results and that has Hamming weight less than $k$.

\begin{proposition}\label{Prop_min}
Let $\eps>0$ and $0<\theta<1$ and assume that $m>(1+\eps)\minf(k,\theta)$.
\Whp\ we have $\sum_{k'<k}Z_{k'}(\G,\hat\SIGMA)=0$.
\end{proposition}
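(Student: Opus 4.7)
The plan is to reduce to Props~\ref{Prop_small_overlap} and~\ref{Prop_big_overlap}, which together assert that \whp\ every $\sigma\in S_k(\G,\hat\SIGMA)$ has $\scal\SIGMA\sigma=k$, i.e.\ $\sigma=\SIGMA$. The idea is to show that any hypothetical satisfying $\sigma$ of Hamming weight $k'<k$ can be deterministically extended to some $\sigma'\in S_k(\G,\hat\SIGMA)$ whose overlap with $\SIGMA$ is strictly smaller than $k$, which would contradict those two propositions.

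The main preparatory observation is that the hypothesis $m>(1+\eps)\minf(k,\theta)$ forces the rigidity condition $c>\theta/((1-\theta)\log^22)$ required by Prop~\ref{Lemma_V_all}(5). A brief case distinction along the two branches of the minimum defining $\minf$ handles this: for $\theta>\log2/(1+\log2)$ the inequality is immediate from the definition of $\minf$, while for $\theta\leq\log2/(1+\log2)$ it reduces to $(1+\eps)(1-\theta)\log2>\theta$, which coincides with the branch condition. Consequently $|\oneplus|=o(1)$, so that \whp\ $\oneplus=\emptyset$; in words, every infected individual $x\in\one$ participates in at least one test where it is the sole infected member.

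Condition on $\oneplus=\emptyset$ as well as on the \whp\ conclusions of Props~\ref{Prop_small_overlap} and~\ref{Prop_big_overlap}, and suppose for contradiction that $\sigma\in\{0,1\}^V$ with $|\sigma|=k'<k$ is compatible with the test results. Since any individual appearing in some test that is negative under $\SIGMA$ is forced to satisfy $\sigma_x=0$ (otherwise that test would flip to positive), we have $\mathrm{supp}(\sigma)\subset\one\cup\zeroplus$. Because $|\mathrm{supp}(\sigma)\cap\one|\leq k'<k$, we may pick $x\in\one\setminus\mathrm{supp}(\sigma)$, and $\oneplus=\emptyset$ furnishes a test $a\in\partial x$ with $\partial a\cap\one=\{x\}$. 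Compatibility forces $\partial a\cap\mathrm{supp}(\sigma)\neq\emptyset$, and since $x\notin\mathrm{supp}(\sigma)$ while $\partial a\setminus\{x\}\subset\zero$, this intersection must lie in $\zeroplus$. In particular $\mathrm{supp}(\sigma)\cap\zeroplus\neq\emptyset$, so that $t\coloneqq|\mathrm{supp}(\sigma)\cap\zeroplus|\geq1$.

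To conclude, form $\sigma'$ from $\sigma$ by flipping to $1$ some $k-k'$ elements of $\one\setminus\mathrm{supp}(\sigma)$ (at least $k-k'$ are available). Then $|\sigma'|=k$ and $\sigma'$ is still compatible with the test results: the only tests affected are those touched by a freshly added element of $\one$, which are positive under $\SIGMA$ by definition of $\one$ and hence already positive under $\sigma$ by compatibility. Computing the overlap, $\scal\SIGMA{\sigma'}=|\mathrm{supp}(\sigma')\cap\one|=(k'-t)+(k-k')=k-t\leq k-1$, so $\sigma'\in S_{k,\ell}(\G,\hat\SIGMA)$ for some $\ell<k$, contradicting Props~\ref{Prop_small_overlap} and~\ref{Prop_big_overlap}. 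The only delicate input is the rigidity statement Prop~\ref{Lemma_V_all}(5) together with the case analysis that activates it across the full range of $\theta\in(0,1)$; once $\oneplus=\emptyset$ is in hand, the extension step is purely combinatorial.
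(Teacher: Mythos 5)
Your proof is correct. It takes a genuinely different route from the paper's, which argues by a two-case dichotomy: the paper first treats the case where $\mathrm{supp}(\sigma)\not\subset V_1$ by flipping elements of $V_0^+$ up to weight $k$ (invoking $|V_0^+|\gg k$), and separately treats the case $\mathrm{supp}(\sigma)\subset V_1$ by observing directly that $V_1^+=\emptyset$ rules it out. You instead use $V_1^+=\emptyset$ not to dispose of a separate case but as the engine that forces $\mathrm{supp}(\sigma)\cap V_0^+\neq\emptyset$ for \emph{every} hypothetical satisfying $\sigma$ of weight $<k$, and you then extend $\sigma$ to weight $k$ by recruiting from $V_1\setminus\mathrm{supp}(\sigma)$ (rather than from $V_0^+$), which always has enough room. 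This unification buys something concrete: the paper's first case hinges on the bound $|V_0^+|\geq k\log n$, which is only established under $c<\log^{-2}2$ (i.e.\ $m<(1-\eps)\malg$) and hence does not obviously cover the full range $m>(1+\eps)\minf$; your extension via $V_1\setminus\mathrm{supp}(\sigma)$ sidesteps any lower bound on $|V_0^+|$ entirely, with the persistence of a $V_0^+$-element in the extended support guaranteeing $\sigma'\neq\SIGMA$ and overlap strictly below $k$. The case check that $m>(1+\eps)\minf$ implies $c>\theta/((1-\theta)\log^2 2)$ is also verified correctly, and the appeal to \Prop~\ref{Lemma_V_all}(5) is legitimate given that the ambient section fixes $\Delta=\lceil\frac{m}{k}\log 2\rceil$, i.e.\ $d=c\log 2(1+o(1))$, which is precisely the choice under which that part of the proposition is proved.
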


\begin{proof}
To get started, suppose that $0<\theta<1$ and $c<\log^{-2}2$. We claim that for any value of $d>0$, 
$\abs{\zeroplus}\geq k\log n$ \whp. 
Indeed, from \Prop~\ref{Lemma_V_all}(\ref{Lemma_V0-}), we know that
$$\abs\zeroplus=\bc{1+n^{-\Omega(1)}}n\bc{1-\exp(-d/c)}^\Delta. $$
\mhk{Recalling $\Delta=d \log(n/k)$,} the expression takes the minimum at $d=c \log 2$. It follows that
$$ \abs\zeroplus \geq \bc{1+n^{-\Omega(1)}}n (k/n)^{c \log^2 2}. $$
If $c=(1-\eps) \log^{-2}2$ for $\epsilon>0$, then
\begin{align}\label{0+_threshold}
   \abs\zeroplus \geq \bc{1+n^{-\Omega(1)}}n (k/n)^{1-\epsilon} = \bc{1+n^{-\Omega(1)}}k n^{(1-\theta)\epsilon} \geq k \log n \qquad \whp 
\end{align}
Now, the following two statements establish that if there does not exist a second satisfying set of Hamming weight $k$, there does also not exist a satisfying set with smaller Hamming weight \whp.

First, we claim that if $m > (1+\epsilon) \minf(k,\theta)$, \whp\ there does not exist a satisfying configuration with Hamming weight smaller than the correct configuration, where the set of infected individuals is not a subset of the true set of infected individuals. To see why, suppose there existed a satisfying configuration with a smaller Hamming weight, whose infected individuals are not a subset of the true infected individuals. By \eqref{0+_threshold}, we know that $\abs{\zeroplus} \gg k$ for $m < (1-\epsilon) \malg$ \whp\ Therefore, we could construct a satisfying configuration of identical Hamming weight as the true configuration by flipping individuals in $\zeroplus$ from healthy to infected. Observe that by the definition of $\zeroplus$, flipping individuals in $\zeroplus$ does not change the test result. Therefore, we would be left with a second satisfying configuration of identical Hamming weight as the true configuration, a contradiction to Propositions \ref{Prop_small_overlap} and \ref{Prop_big_overlap}.

Second, we argue that if $m > (1+\epsilon) \minf(k,\theta)$, \whp\ there does not exist a satisfying configuration with Hamming weight smaller than the correct configuration, where the set of infected individuals is a subset of the true set of infected individuals. Suppose there existed a satisfying configuration with a smaller Hamming weight, whose infected individuals are a subset of the true infected individuals. Then, the true configuration would need to contain individuals in $\oneplus$, which can be flipped from infected to healthy without affecting the test result. However, \Prop~\ref{Lemma_V_all}(\ref{Lemma_V+_reverse}) shows that for $m > (1+\epsilon)\minf$, $\oneplus=\emptyset$ \whp\
\end{proof}

As an immediate consequence of \Prop~\ref{Prop_min} we conclude that for $m>(1+\eps)\minf(k,\theta)$ the problem of inferring $\SIGMA$ boils down to a minimum vertex cover problem, as previously conjectured by 
Aldridge, Baldassini and Johnson~\cite{Aldridge_2014}.
Namely, let $\cP$ be the set of all positive tests, i.e., all tests $a_i$, $i\in[m]$, with $\hat\SIGMA_{a_i}=1$.
Moreover, let $V^+$ be the set of all variables $x_i\in V$ such that $\partial x_i\subset\cP$; 
in words, $x_i$ takes part in positive tests only.
We set up a hypergraph $\vH$ with vertex set $V^+$ and hyperedges $\partial a_i\cap V^+$, $a_i\in\cP$.
Clearly, the set of all individuals $x_i$ with $\SIGMA_{x_i}=1$ provides a valid vertex cover of $\vH$ (as any positive test must feature an infected individual).
Conversely, \Prop s~\ref{Prop_small_overlap} and~\ref{Prop_big_overlap} show that \whp\ this is the unique vertex cover of size $k$, and \Prop~\ref{Prop_min} shows that there is no strictly smaller vertex cover \whp\ 
Therefore, \whp\ we can infer $\SIGMA$ even without prior knowledge of $k$ by way of solving this minimum vertex cover instance.

\section{The information-theoretic lower bound}\label{Sec_InfThLower}

We proceed with the negative statement that \whp\ $\SIGMA$ cannot be inferred if $m<(1-\eps)\minf$.
In light of \Cor~\ref{Cor_Nishi} in order to prove the first part of \Thm~\ref{Thm_inf} we need to show that the number $Z_k(\G,\hat\SIGMA)$ of assignments consistent with the test results $\hat\SIGMA$ is unbounded \whp\ 
The proof of this fact is based on a very simple idea: we just identify a \pl{moderately large} number of individuals whose infection status could be flipped without affecting the test results.
The following lemma yields a bound on $m$ below which the number of such potential false positives ($\abs{V_0^+}$) and negatives ($\abs{V_1^+}$) abound.
%From \Prop~\ref{Lemma_V_all}(\ref{Lemma_V++}), the value of $\Delta$ \pl{that minimizes the number of tests and thereby best facilitates inference} is {$\Delta=\lceil \frac{m}{k}\log 2\rceil$}.

\begin{proposition}\label{Lemma_V+}
Let $\eps>0$ and $0<\theta<1$ and assume that
\begin{align*}
m<\frac{(1-\eps) \theta}{(1-\theta) \log^2 2} n^{\theta} (1-\theta) \log n.
\end{align*}
Then for any choice of $\Delta$ we have $|V_0^+|,|V_1^+|=n^{\Omega(1)}$ \whp
\end{proposition}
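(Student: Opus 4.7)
The plan is to reduce directly to Proposition~\ref{Lemma_V_all}(\ref{Lemma_V++}) in the main regime of $\Delta$ and then mop up the extreme regimes using the results already established. Writing $m = c k\log(n/k)$ and $\Delta = d\log(n/k)$ and using $k\sim n^{\theta}$, so that $k\log(n/k)=(1+o(1))n^{\theta}(1-\theta)\log n$, the hypothesis $m<\frac{(1-\eps)\theta}{(1-\theta)\log^{2}2}\,n^{\theta}(1-\theta)\log n$ is equivalent to
\[
c<\frac{(1-\eps)\theta}{(1-\theta)\log^{2}2}<\frac{\theta}{(1-\theta)\log^{2}2},
\]
which is precisely the hypothesis of Proposition~\ref{Lemma_V_all}(\ref{Lemma_V++}). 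Thus, in the core regime $d=\Theta(1)$, the conclusion $|V_0^{+}|,|V_1^{+}|=n^{\Omega(1)}$ is immediate and no further work is required.

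To cover \emph{any} choice of $\Delta$, I would partition into the two remaining regimes $d=\omega(1)$ and $d=o(1)$ and argue separately. For $d=\omega(1)$, Lemma~\ref{Lemma_GammaMinMax} gives $\Gamma_{\min}=(1+o(1))\Delta n/m=\omega(n/k)$, so the expected number of infected individuals in any single test is $\Gamma k/n=\omega(1)$. A Chernoff bound combined with a union bound over the $m$ tests shows that whp \emph{every} test contains at least one infected individual, and in fact at least two. Then by definition every uninfected individual trivially belongs to $V_0^{+}$ and every infected individual to $V_1^{+}$, yielding $|V_0^{+}|=n-k$ and $|V_1^{+}|=k$, both $n^{\Omega(1)}$.

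For $d=o(1)$, the test degrees $\Gamma_i=o(n/k)$ are small and most tests are negative. Here I would invoke Lemma~\ref{Lemma_Delta_too_small_big}(1), which already produces $Z_k(\G,\hat\SIGMA)=\omega(1)$ whp. To upgrade this to the desired conclusion about $V_0^{+},V_1^{+}$, I would re-examine the proof of that lemma: the extra satisfying configurations there arise by swapping the status of individuals whose test-neighbourhoods are heavily covered by other infected individuals, which is precisely the mechanism defining $V_0^{+}$ and $V_1^{+}$. A first-moment calculation for $|V_1^{+}|$ mirrors the derivation of Proposition~\ref{Lemma_V_all}(\ref{Lemma_V1plus}): each infected individual $x_i$ lies in $V_1^{+}$ with probability $\bigl(1-(1-k/n)^{\Gamma-1}\bigr)^{\Delta}$, which in this regime is bounded below by $n^{-o(1)}$ after optimising over $d$, so $\Erw|V_1^{+}|\geq k\cdot n^{-o(1)}=n^{\Omega(1)}$; a matching variance bound then gives concentration.

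The main obstacle is the $d=o(1)$ regime, since Proposition~\ref{Lemma_V_all} is not stated there and one has to redo the first and second moment estimates carefully while handling the subtle dependencies introduced by the small degrees and potential multi-edges of $\G$. The key technical point is to verify that the lower bounds on $\Erw|V_0^{+}|$ and $\Erw|V_1^{+}|$ do not collapse to subpolynomial orders when $\Delta$ is close to $1$, which amounts to a Poisson-approximation argument showing that a nontrivial fraction of tests still contain two or more infected individuals.
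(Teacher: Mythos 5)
Your core reduction is exactly the paper's: rewriting $m=ck\log(n/k)$ and $k\sim n^\theta$, the hypothesis on $m$ becomes $c<\frac{\theta}{(1-\theta)\log^2 2}$, and in the regime $\Delta=d\log(n/k)$ with $d=\Theta(1)$ Proposition~\ref{Lemma_V_all}(\ref{Lemma_V++}) gives the conclusion directly. Where you diverge is in the treatment of extreme $\Delta$. The paper dispatches those regimes in a single sentence by appealing to Lemma~\ref{Lemma_Delta_too_small_big}: if $\Delta=o(\log(n/k))$ or $\Delta=\omega(\log(n/k))$ the relevant downstream quantity $Z_k(\G,\hat\SIGMA)$ is already $\omega(1)$ by that lemma's counting argument, so the ``for any $\Delta$'' claim is really being used only in the form in which it feeds into Corollary~\ref{Cor_V+}; the paper does not redo a first/second-moment estimate for $|V_0^+|,|V_1^+|$ when $d\not=\Theta(1)$.

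Your attempt to actually establish $|V_0^+|,|V_1^+|=n^{\Omega(1)}$ in the extreme regimes has two genuine problems. For $d=\omega(1)$, the claim that whp \emph{every} test contains at least two infected individuals needs the union bound $m\exp(-\Omega(\Gamma k/n))=m\exp(-\Omega(d/c))=o(1)$, which requires $d=\Omega(\log m)=\Omega(\log n)$, not merely $d=\omega(1)$. In the intermediate window $1\ll d\ll\log n$ a nonvanishing number of negative tests persist and $V_0^+\ne V_0$, $V_1^+\ne V_1$, so your argument breaks; you would instead need the direct first-moment bound $\Erw|V_1^+|\approx k(1-\eul^{-d/c})^\Delta\geq k(n/k)^{-c\log^2 2}=n^{\theta-(1-\theta)c\log^2 2}$, which is $n^{\Omega(1)}$ uniformly over $d$ precisely because $c<\frac{\theta}{(1-\theta)\log^2 2}$. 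For $d=o(1)$, ``re-examining the proof of Lemma~\ref{Lemma_Delta_too_small_big}'' cannot deliver what you want: that proof is a purely counting/pigeonhole argument ($o(m)$ positive tests cannot distinguish $\binom{n}{k}$ configurations) and never exhibits large $V_0^+$ or $V_1^+$; in this regime the extra solutions come from the sparsity of positive tests, not from the flipping mechanism underlying $V_0^+,V_1^+$. Your heuristic $n^{-o(1)}$ lower bound on the per-individual probability is also off: the worst case over $d$ gives $n^{-\Omega(1)}$, namely $(n/k)^{-c\log^2 2}$ at $d=c\log 2$, and it is only because the hypothesis bounds $c$ that $k$ times this is still polynomially large.
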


\begin{proof}
Thanks to \Lem~\ref{Lemma_Delta_too_small_big} we may assume that $\Delta= d(\log(n/k))$, for a constant $d$ as this choice minimizes the number of individuals in $V_1^+$. Then \Prop~\ref{Lemma_V_all}(\ref{Lemma_V++}) guarantees that for every such constant as long as $c < \frac{\theta}{1-\theta}\frac{1}{\log^22}$, there are $n^{\Omega(1)}$ individuals in both $\oneplus$ and $\zeroplus$, which yields to \Prop~\ref{Lemma_V+}.
\end{proof}

As an immediate application we obtain the following information-theoretic lower bound.

\begin{corollary}\label{Cor_V+}
Let $\eps>0$ and $0<\theta<1$ and assume that
\begin{align}\label{eqCor_V+}
m<\frac{(1-\eps)\theta}{(1-\theta) \log^2 2} n^{\theta} (1-\theta) \log n.
\end{align}
Then $Z_k(\G,\hat\SIGMA)=\omega\bc1$ \whp
\end{corollary}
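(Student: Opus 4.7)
The plan is to turn the abundance of potential false positives and false negatives guaranteed by \Prop~\ref{Lemma_V+} into many distinct satisfying assignments of Hamming weight exactly $k$, thereby forcing $Z_k(\G,\hat\SIGMA)=\omega(1)$ and invoking \Cor~\ref{Cor_Nishi}.

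The key combinatorial observation is the following \emph{swap argument}. Fix any $x\in V_0^+$ and any $y\in V_1^+$, and define $\sigma^{(x,y)}\in\{0,1\}^V$ by flipping the entries of $\SIGMA$ at both $x$ and $y$. Then $\sigma^{(x,y)}$ has Hamming weight $k$, and I claim it produces precisely the same test outcomes as $\SIGMA$. Indeed, for any test $a$ containing $x$, test $a$ was positive under $\SIGMA$ because $x\in V_0^+$; after the swap $x$ is infected, so $a$ stays positive regardless of whether $y\in\partial a$. For any test $a$ containing $y$ but not $x$, the defining property of $V_1^+$ supplies another infected individual in $\partial a\setminus\{y\}$, so $a$ remains positive even after $y$ is flipped to healthy. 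Any test containing neither $x$ nor $y$ is obviously unaffected. Finally, no previously negative test can become positive: $x\in V_0^+$ appears only in positive tests, and $y\in V_1\subset V\setminus V_0^-$ cannot appear in a negative test either. Hence $\sigma^{(x,y)}\in S_k(\G,\hat\SIGMA)$.

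The assignments $\sigma^{(x,y)}$ are pairwise distinct as $(x,y)$ ranges over $V_0^+\times V_1^+$, because $\sigma^{(x,y)}$ differs from $\SIGMA$ exactly at the coordinates $\{x,y\}$, so the unordered pair $\{x,y\}$ can be read off from $\sigma^{(x,y)}$ (and $V_0^+$ and $V_1^+$ are disjoint since they consist of healthy and infected individuals respectively). Under the hypothesis \eqref{eqCor_V+}, \Prop~\ref{Lemma_V+} yields $|V_0^+|,|V_1^+|=n^{\Omega(1)}$ \whp\, so
\begin{align*}
Z_k(\G,\hat\SIGMA)\;\geq\;1+|V_0^+|\cdot|V_1^+|\;=\;n^{\Omega(1)}\;=\;\omega(1)\qquad\whp
\end{align*}
Together with the first part of \Cor~\ref{Cor_Nishi}, this is exactly what is required.

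The main conceptual point is the verification that swapping a single $V_0^+$-individual with a single $V_1^+$-individual preserves \emph{every} test outcome (including not inadvertently flipping a negative test to positive); this is immediate from the definitions, so there is no substantive obstacle here and no further probabilistic work is needed beyond what \Prop~\ref{Lemma_V+} already supplies.
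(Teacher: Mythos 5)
Your proof is correct and follows exactly the same swap argument as the paper's own (and its quantitative conclusion via \Prop~\ref{Lemma_V+}); you simply spell out the case analysis for why every test outcome is preserved, which the paper leaves as "by construction." One small remark: the statement to be proved is only that $Z_k(\G,\hat\SIGMA)=\omega(1)$, so the final appeal to \Cor~\ref{Cor_Nishi} is unnecessary here (that step belongs to the overall lower-bound argument, not this corollary).
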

\begin{proof}
We need to exhibit alternative vectors $\SIGMA'\in\cbc{0,1}^V$ with Hamming weight $k$ that render the same test results as $\SIGMA$.
Thus, pick any $x_i\in V_0^+$ and any $x_j\in V_1^+$ and obtain $\SIGMA'$ from $\SIGMA$ by setting $\SIGMA'_{x_i}=1$ and $\SIGMA'_{x_j}=0$.
By construction, $\SIGMA'$ has Hamming weight $k$ and renders the same test results.
Hence, \Prop~\ref{Lemma_V+} shows that $Z_k(\G,\hat\SIGMA)\geq|V_0^+\times V_1^+|=\Omega(n^{2\theta})\gg1$ \whp
\end{proof}
The bound~\eqref{eqCor_V+} matches $\minf$ for $\theta \gtrapprox 0.41$.
A simpler, purely information-theoretic argument covers the remaining $\theta$.

\begin{proposition}\label{Lemma_classic}
Let $\eps>0$, $0<\theta<1$. If $m<\frac{1-\eps}{\log 2} n^{\theta} (1-\theta) \log n$, then
 $Z_k(\G,\hat\SIGMA)=\omega\bc1$ \whp
\end{proposition}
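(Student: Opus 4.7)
The plan is to exploit the elementary counting bound: with only $m$ tests at our disposal there are at most $2^m$ possible outcome vectors in $\cbc{0,1}^m$, whereas the number of weight-$k$ configurations is $\bink{n}{k}$. Under the hypothesis, $\bink{n}{k}/2^m = n^{\omega(1)}$, and pigeonhole forces many configurations to collide on their induced test outcomes, so that typically $Z_k(\G,\hat\SIGMA)$ is large. This is conceptually the same counting bound that yields the adaptive lower bound $\minf^{\mathrm{adapt}}$ quoted in \Sec~\ref{Sec_related}; since the non-adaptive regime is only more restrictive, the same bound applies to $\G,\hat\SIGMA$ here.

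Concretely, I would first condition on $\G$ and, for each $\hat\tau\in\cbc{0,1}^m$, let $X_{\hat\tau}(\G)$ denote the number of $\sigma\in\cbc{0,1}^V$ of Hamming weight $k$ whose induced test outcome under $\G$ equals $\hat\tau$. These counts satisfy $\sum_{\hat\tau} X_{\hat\tau}(\G) = \bink{n}{k}$. Because $\SIGMA$ is uniform over weight-$k$ vectors and independent of $\G$, we have $\pr\brk{\hat\SIGMA = \hat\tau\mid\G} = X_{\hat\tau}(\G)/\bink{n}{k}$ and, by definition, $Z_k(\G,\hat\SIGMA) = X_{\hat\SIGMA}(\G)$.

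Setting $M = \sqrt{\bink{n}{k}/2^m}$, I would then estimate
\begin{align*}
\pr\brk{Z_k(\G,\hat\SIGMA)\leq M\mid\G}
 = \sum_{\hat\tau\,:\,X_{\hat\tau}(\G)\leq M}\frac{X_{\hat\tau}(\G)}{\bink{n}{k}}
 \leq \frac{2^m\cdot M}{\bink{n}{k}}
 = \sqrt{\frac{2^m}{\bink{n}{k}}},
\end{align*}
since at most $2^m$ outcomes contribute to the sum. The bound is deterministic in $\G$, so it passes directly to the unconditional probability.

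It remains to plug in parameters. Using $k\sim n^\theta$ and Stirling, $\log\bink{n}{k} = (1+o(1))k\log(n/k) = (1+o(1))n^\theta(1-\theta)\log n$, while the hypothesis gives $m\log 2 \leq (1-\eps)(1+o(1))n^\theta(1-\theta)\log n$. Hence $\log\bc{\bink{n}{k}/2^m} \geq (\eps/2)\,n^\theta(1-\theta)\log n$ for $n$ large, so $M = n^{\omega(1)}$ and $\sqrt{2^m/\bink{n}{k}} = o(1)$, yielding $Z_k(\G,\hat\SIGMA) = \omega(1)$ \whp\ I do not foresee any serious obstacle: this is the classical information-theoretic converse, and the only mild subtlety is that the pigeonhole estimate must be uniform in $\G$, which the argument handles automatically because the bound on $\pr\brk{Z_k \leq M \mid \G}$ does not depend on $\G$.
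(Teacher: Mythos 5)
Your proposal is correct and follows the same elementary counting route as the paper's proof, which also appeals to the classical bound that $m$ tests admit at most $2^m$ outcome vectors while there are $\bink{n}{k}$ candidate configurations. In fact your write-up is the more careful one: the paper simply asserts the conclusion after the count, whereas you make explicit the key step that turns pigeonhole into a "with high probability" statement—conditioning on $\G$, using $\pr\brk{\hat\SIGMA=\hat\tau\mid\G}=X_{\hat\tau}(\G)/\bink{n}{k}$, and bounding $\pr\brk{Z_k(\G,\hat\SIGMA)\le M\mid\G}\le 2^m M/\bink{n}{k}$ so that a uniformly random $\SIGMA$ is unlikely to land in a small outcome class.
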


\begin{proof}
This Lemma follows from the classical information-theoretic lower bound for the group testing problem. Namely, $m$ tests allow for $2^m$ possible test results.
Hence, if $$m<\frac{(1-\eps)}{\log 2} n^{\theta} (1-\theta) \log n,$$ then the number of possible test results is far smaller than the number of vectors $\SIGMA\in\cbc{0,1}^V$ with Hamming weight $k$. Therefore, \whp\ there exists an unbounded number of vectors of Hamming weight $k$ that render the same test results as $\SIGMA$.
\end{proof}

We thus conclude that for all $0<\theta<1$, \whp\ $Z_k(\G,\hat\SIGMA)=\omega(1)$ if $m<(1-\eps)\minf$.
Therefore, the desired information-theoretic lower bound follows from \Cor~\ref{Cor_Nishi}.

\section{The {\tt SCOMP} algorithm} \label{Sec_alg}

\color{black}
For $\theta \geq 1/2$ we have $\malg=\minf$ and thus \Thm~\ref{Thm_inf} implies that {\tt SCOMP} \pl{as described in \Sec~\ref{sec_sub_algo}} \whp\ fails to infer $\SIGMA$ for $m<(1-\eps)\malg$.
Therefore, we are left to establish \Thm~\ref{Thm_alg} for $\theta<1/2$, in which case
\begin{align}\label{eqmalg_SCOMP}
\malg= \frac{k \log(n/k)}{\log^2 2}.
\end{align}

The proof of \Thm~\ref{Thm_alg} for $\theta<1/2$ hinges on two propositions.
First we show that below $\malg$, the set $V_1^{--}$ of infected individuals that the second step of {\tt SCOMP} identifies correctly is empty.
Formally, with $V_0^-$ from~\eqref{eqV0-V1-}, let 
\begin{align*}
V_1^{--}&=\cbc{x\in V_1:\exists a\in\partial x:\partial a\setminus\cbc x\subset V_0^-}.
\end{align*}
%\pl{Thus, {\tt SCOMP} enters its final stage just having identified the $V_0^-$, i.e. the set of uninfected individuals that are assigned to at least one negative test. Thus, we are left with $V_1$ and $V_0^+$. {\tt SCOMP} next selects the individual that appears in the largest number of yet unexplained tests. Since none of the positive tests was explained in the prior two rounds, it simply picks the individual assigned to the largest number of tests. On the one hand, there are at most $k$ infected individuals included in $\Delta$ different tests (likely fewer since some individuals will be assigned to the same test twice). On the other hand, we show in \Prop~\ref{Lemma_+} that there are $\Theta(k \log n)$ healthy individuals appearing in $\Delta$ tests. Breaking ties arbitrarily, {\tt SCOMP} will in the first step classify a healthy individual as infected.}

\begin{proposition}\label{Lemma_--}
Suppose that $0<\theta<1/2$ and $\eps>0$.
If $m<(1-\eps)\malg$, then for all $\Delta>0$ we have $V^{--}_1(\G,\hat\SIGMA^*)=\emptyset$ \whp
\end{proposition}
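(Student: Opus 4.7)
The plan is to prove the stronger statement $\Erw\brk{|V_1^{--}|}=o(1)$, from which $V_1^{--}=\emptyset$ whp follows by Markov's inequality. Thanks to Lemma \ref{Lemma_Delta_too_small_big}, it suffices to focus on $\Delta=d\log(n/k)$ with $d>0$ a constant; the extreme regimes are already handled because then $Z_k(\G,\hat\SIGMA)=\omega(1)$ whp and the inference problem is unsolvable anyway. Setting $m=ck\log(n/k)$, the hypothesis $m<(1-\eps)\malg$ translates to $c<(1-\eps)/\log^2 2$.

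By linearity of expectation and a union bound over the $\Delta$ tests incident to a given infected individual,
\begin{align*}
\Erw\brk{|V_1^{--}|}\leq k\Delta\cdot\Pr\brk{\partial a\setminus\cbc{x}\subset V_0^-\mid x\in V_1,\,a\in\partial x}.
\end{align*}
To estimate the right-hand probability we condition on the test degrees $(\Gamma_i)_{i\in[m]}$ obeying the bounds of Lemma \ref{Lemma_GammaMinMax} and unroll the pairing model as in the proof of Proposition \ref{Prop_small_overlap}. The $\Gamma_a-1$ co-occupants of test $a$ other than $x$ are (approximately) independent uniform clones; each must be (i) a healthy individual $y$ and (ii) such that at least one of $y$'s remaining $\Delta-1$ tests is negative. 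Since Lemma \ref{Lemma_m0} identifies $q:=\exp(-d/c)$ as the limiting fraction of negative tests, each slot contributes a factor of roughly $(1-k/n)\bc{1-(1-q)^{\Delta-1}}$, giving
\begin{align*}
\Erw\brk{|V_1^{--}|}\leq O(1)\cdot k\Delta\cdot q\cdot\bc{1-(1-q)^{\Delta-1}}^{\Gamma-1}.
\end{align*}

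The crux is to show that this bound is $o(1)$ uniformly in $d>0$. Introduce $\alpha(d):=-d\log(1-e^{-d/c})$, so that $(1-q)^{\Delta}=(k/n)^{\alpha(d)}$ and consequently $\Gamma(1-q)^\Delta=(d/c)\,n^{(1-\theta)(1-\alpha(d))}$. A short variational computation (substituting $p=1-e^{-d/c}$ reduces the first-order condition to $p\log p=(1-p)\log(1-p)$, whose unique solution on $(0,1)$ is $p=1/2$) shows that $\alpha$ attains its maximum at $d=c\log 2$ with value $c\log^2 2$. Hence $c<(1-\eps)/\log^2 2$ forces $\alpha(d)\leq c\log^2 2\leq 1-\eps$ uniformly in $d$, so $\Gamma(1-q)^{\Delta-1}\geq (d/c)\,n^{(1-\theta)\eps}=n^{\Omega(1)}$ and therefore
\begin{align*}
\bc{1-(1-q)^{\Delta-1}}^{\Gamma-1}\leq\exp\bc{-n^{\Omega(1)}}.
\end{align*}
This super-polynomial decay dominates the polynomial prefactor $k\Delta q=O(n^\theta\log n)$, yielding $\Erw\brk{|V_1^{--}|}=o(1)$.

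The main technical obstacle is justifying the independence approximation rigorously. The events ``the $j$-th co-occupant $y_j$ of test $a$ is healthy and possesses an additional negative test'' are correlated across $j$, since two co-occupants can share a further test and since conditioning on healthiness reshapes the residual graph. This is overcome by the same decoupling trick that drives Proposition \ref{Prop_small_overlap}: replace the pairing-model labels by independent multinomial random variables with matching expected marginals, compute the resulting unconditional product probability in closed form, and restore the original conditioning via Bayes' rule at the cost of a factor $O((\Delta k)^{3/2})$ supplied by Stirling. This polynomial overhead is dwarfed by the super-polynomial decay established above, and combining everything completes the proof.
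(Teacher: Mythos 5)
Your high-level plan — bound $\Erw[|V_1^{--}|]$ via union bound over pairs $(x,a)$, optimise over the degree parameter $d$ using the first-order condition that forces $d=c\log 2$, and observe that the resulting $\exp(-n^{\Omega(1)})$ decay swamps the $k\Delta$ prefactor — is the same strategy as the paper's, and your variational computation of the critical exponent $\alpha(d)$ and its maximum $c\log^2 2$ is correct. The conclusion also matches the paper's final estimate $\frac{k\Delta}{2}\exp(-(\log 2)(n/k)^{1-c\log^2 2})$ up to an inconsequential constant in the exponent.

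However, there is a genuine gap in the step you relegate to ``the same decoupling trick that drives Proposition~\ref{Prop_small_overlap}.'' That trick works because the event in \Prop~\ref{Prop_small_overlap} is determined entirely by clone labels $(\SIGMA_{x_j},\sigma_{x_j})$, which are fixed attributes of individuals independent of the graph; the balls-and-bins replacement by independent multinomials then goes through because ``what goes into bin $i$'' is a pure occupancy question. Your event is different: whether a co-occupant $y$ of $a$ lies in $V_0^-$ depends on whether one of $y$'s other $\Delta-1$ tests is negative, which in turn depends on the infection status of \emph{other} individuals occupying \emph{those} tests. So ``$y\in V_0^-$'' is not a label of $y$ but a global, graph-dependent attribute, and the two-type pairing model of \Prop~\ref{Prop_small_overlap} cannot be invoked verbatim. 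The paper resolves this by a distinct, heavier argument: it first pins down $|V_0^+|$ via \Prop~\ref{Lemma_V_all}(1) and conditions on it (event $\cW$), which by exchangeability freezes the partition of $V_0$ into $V_0^+$ and $V_0^-$ into a fixed three-way typing; only then does it set up a three-type pairing model on the positive tests, using a $\Mult_{\geq(1,0,0)}(\Gamma_i,p,q,1-p-q)$ distribution (conditioned on at least one infected per positive test) and a fixed-point definition of the multinomial parameters $p,q$. That conditioning on $|V_0^+|$ is exactly what converts the non-local event into a local one, and it is a non-trivial extra layer you would need to supply. Your sketch correctly identifies the obstacle but does not actually overcome it, and the reference to \Prop~\ref{Prop_small_overlap} is not sufficient on its own.
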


\mhk{The proofs of \Prop s~\ref{Lemma_--} and~\ref{Lemma_+} are based on moment calculations that turn out to be mildly subtle due to the potentially very large degrees of the underlying graph $\G$. The technical workout in included in \Sec~\ref{Proof_of_Lemma_--} and \ref{Proof_of_Lemma_+}.}

With the second step of {\tt SCOMP} failing to `explain' (viz.\ cover) any positive tests, the greedy vertex cover algorithm takes over.
This algorithm is applied to the hypergraph whose vertices are the as yet unclassified individuals and whose edges are the neighbourhoods of the positive tests.
%is applied to an instance which includes all positive tests, all infected individuals and all elements of $\zeroplus$. 
Our second lemma shows that the set $V^{+,\Delta}$ of potententially false positive individuals $x\in\zeroplus$
that participate in the maximum number $\Delta$ of different tests is far greater than the actual number $k$ of infected individuals.
%As a consequence, {\tt SCOMP} will likely erroneously classify an element of $\zeroplus{}$ as positive.
Formally, let
\begin{align*}
V_0^{+,\Delta}&=\cbc{x \in V_0^+: \abs{\partial x}=\Delta}.
\end{align*}

%$V_0^{+,\Delta}(\G,\hat\SIGMA^*)$ be the set of all individuals $x_i$ such that $|\partial x_i|=\Delta$ and such that all tests $a\in\partial x_i$ in which $x_i$ participates are positive.
\begin{proposition}\label{Lemma_+}
Suppose that $0<\theta<1/2$ and $\eps>0$.
If $m<(1-\eps)\malg$, then for \mhk{$\Delta=d \log \bc{n/k}$ for all constant $d$} we have $\abs{V_0^{+,\Delta}}\geq k\log n$ \whp
\end{proposition}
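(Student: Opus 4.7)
The plan is to derive $|V_0^{+,\Delta}|$ from the asymptotics of $|\zeroplus|$, which are already supplied by \Prop~\ref{Lemma_V_all}(\ref{Lemma_V0-}), by showing that the additional requirement that the $\Delta$ test assignments of $x$ be pairwise distinct (i.e.\ that $x$ carries no multi-edge in $\G$) costs at most a $1-o(1)$ factor. Since the hypothesis $m<(1-\eps)\malg$ translates to $c<(1-\eps)/\log^2 2$, the resulting lower bound $n(1-e^{-d/c})^\Delta$ will comfortably exceed $k\log n$ for \emph{every} constant $d>0$.

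Concretely, I would first invoke \Prop~\ref{Lemma_V_all}(\ref{Lemma_V0-}) to obtain $|\zeroplus|=(1+n^{-\Omega(1)})n(1-\exp(-d/c))^\Delta$ \whp\ Let $V_0^{\mathrm{nm}}$ denote the set of uninfected individuals whose $\Delta$ test assignments are pairwise distinct (no multi-edge), so that $V_0^{+,\Delta}=\zeroplus\cap V_0^{\mathrm{nm}}$. It suffices to prove
\[
|\zeroplus\setminus V_0^{\mathrm{nm}}|=o(|\zeroplus|)\qquad\text{\whp}
\]

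The key---and hardest---step is a \emph{conditional} estimate: for a uniform $x\in\zero$,
\[
\pr\brk{x\text{ carries a multi-edge}\mid x\in\zeroplus}=o(1).
\]
A naive union bound over the $\binom{\Delta}{2}$ pairs of slots of $x$ yields only $\pr\brk{\text{multi-edge}}=O(\Delta^2/m)=O(\log n/k)$, hence $O(n^{1-\theta}\log n)$ multi-edge individuals overall, which for $\theta$ close to $1/2$ may exceed $|\zeroplus|$ itself. The remedy is Bayes' rule: conditioning on a fixed pair of slots colliding leaves $x$ with $\Delta-1$ distinct tests, which can shrink $\pr\brk{x\in\zeroplus}$ by at most the constant factor $(1-e^{-d/c})^{-1}$. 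Hence $\pr\brk{\text{multi-edge}\mid\zeroplus}=O(\Delta^2/m)\cdot O(1)=o(1)$. Taking expectations gives $\Erw|\zeroplus\setminus V_0^{\mathrm{nm}}|=o(\Erw|\zeroplus|)$; Markov's inequality (or a second-moment computation mirroring that of \Prop~\ref{Lemma_V_all}(\ref{Lemma_V0-}) via the pairing model and conditioning on $\cG$) upgrades this to $|\zeroplus\setminus V_0^{\mathrm{nm}}|=o(|\zeroplus|)$ \whp

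Finally I would verify the quantitative bound. Substituting $\Delta=d\log(n/k)$ and $k=n^\theta$ and taking logarithms, $n(1-e^{-d/c})^\Delta\geq k\log n$ reduces to $d\log\bc{1/(1-e^{-d/c})}\leq 1-o(1)$. A short calculus exercise (set $u=d/c$ and $p=e^{-u}$; the critical point solves $(1-p)\log(1-p)=p\log p$, i.e.\ $p=1/2$) shows that the maximum over $d>0$ of the left-hand side equals $c\log^2 2$, attained at $d=c\log 2$. Since the hypothesis $m<(1-\eps)\malg$ delivers $c\log^2 2<1-\eps$, the inequality $d\log(1/(1-e^{-d/c}))\leq 1-\eps$ holds uniformly in $d>0$. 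Combined with the previous paragraph this yields $|V_0^{+,\Delta}|\geq n^{\theta+\eps(1-\theta)-o(1)}\gg k\log n$ \whp, completing the proof.
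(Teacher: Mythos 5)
Your proof is correct, and the two load-bearing ideas coincide with the paper's: (i) the calculus fact that $\max_{d>0} d\log\bc{1/(1-e^{-d/c})}=c\log^2 2$ at $d=c\log 2$, so $|\zeroplus|\geq n^{\theta+\eps(1-\theta)-o(1)}\gg k\log n$ for every constant $d$ once $c\log^2 2<1-\eps$ (the paper derives this in the proof of \Prop~\ref{Prop_min} and then cites it); and (ii) the observation that conditioning on $x\in\zeroplus$ distorts the multi-edge probability only by a bounded factor. The organization differs. The paper first proves a universal degree cap—\whp\ every $x\in V$ is adjacent to at least $\Delta-2/\theta^2$ distinct tests—and then argues by chaining ratios between adjacent degree levels $V_0^{+,\Delta-i}$ and $V_0^{+,\Delta-i+1}$, each controlled by the $\Theta(1)$ factor $1-(1-k/n)^{\Gamma_{\min}}\approx 1-e^{-d/c}$, over the $O(1/\theta^2)$ possible levels, concluding that a constant fraction of $\zeroplus$ has full degree $\Delta$. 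You instead bound $\pr\brk{x\text{ has a multi-edge}\mid x\in\zeroplus}$ directly via Bayes; the geometric series $\sum_{j\geq 1}\bc{\Delta^2/m}^j(1-e^{-d/c})^{-j}=O(\Delta^2/m)=o(1)$ over the number of lost distinct tests replaces the level-by-level ratio chain and makes the auxiliary degree-tail lemma unnecessary, yielding the somewhat stronger conclusion $|V_0^{+,\Delta}|=(1-o(1))|\zeroplus|$ rather than just $\Omega(|\zeroplus|)$. One cosmetic slip: a collision \emph{inflates} $\pr\brk{x\in\zeroplus}$ by at most $(1-e^{-d/c})^{-1}$ (you wrote ``shrink''), but the Bayes computation that follows is the right one, and you correctly flag that the naive unconditional union bound $O(n^{1-\theta}\log n)$ can overwhelm $|\zeroplus|$ when $\theta$ is near $1/2$—this is exactly why the conditional estimate is needed.
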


We complete the proof of \Thm~\ref{Thm_alg} as follows.

\begin{proof}[Proof of \Thm~\ref{Thm_alg}]
The first step of {\tt SCOMP} (correctly) marks all individuals that appear in negative tests as healthy.
Moreover, \Prop~\ref{Lemma_--} implies that the second step of {\tt SCOMP} is void \whp, because there is no single infected individual that appears in {a} test whose other individuals have already been identified as healthy by the first step.
Consequently, {\tt SCOMP} simply applies the greedy vertex cover algorithm.
% to the instance whose universe is the set of all remaining unlabelled individuals and whose sets are the neighbourhoods of the positive tests.
Now, thanks to \Prop~\ref{Lemma_+} it suffices to prove that {\tt SCOMP} will fail \whp\ if $\abs{V_0^{+,\Delta}}=\omega\bc k$.
Because they belong to positive tests only, all the individuals of $V_0^{+,\Delta}$ are present in the vertex cover instance that {\tt SCOMP} attempts to solve.
Moreover, in the hypergraph no vertex has degree greater than $\Delta$, because the degrees of $x_1,\ldots,x_n$ in $\G$ are equal to $\Delta$.
(Some of the hypergraph degrees may be strictly smaller than $\Delta$ because $\G$ is a multi-graph.)
Therefore, %in its first step the greedy vertex cover algorithm simply picks a random element and classifies that individual as infected.
since $|V_0^{+,\Delta}|\geq k\log n$ while the actual set of infected individuals only has size $k$, \whp\ the individual classified as infected by the very first step of the greedy set cover algorithm belongs to $V^+_0$.
Hence, this individual is not actually infected, i.e., {\tt SCOMP} errs \whp\ 
\end{proof}

Since the success probability of the {\tt SCOMP} algorithm is at least as high as of the {\tt DD} algorithm, we can prove the conjecture of \cite{Johnson_2019} regarding the upper bound of the {\tt DD} algorithm. 
%\aco{Is a conjecture for $\Theta < 0.5$}

\begin{corollary}
If $m < (1-\eps) \malg$, the {\tt DD} algorithm will fail to retrieve the correct set of infected individuals \whp.
\end{corollary}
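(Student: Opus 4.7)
The plan is to derive this corollary as a direct consequence of the analysis of \texttt{SCOMP} performed in the proof of \Thm~\ref{Thm_alg}, by exploiting the simple observation that \texttt{DD} can be viewed as an even less informed procedure than \texttt{SCOMP}. Concretely, I would argue that the success probability of \texttt{SCOMP} dominates that of \texttt{DD}, so that the failure of \texttt{SCOMP} already established for $m<(1-\eps)\malg$ forces \texttt{DD} to fail as well.

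The coupling is structural. The first step of \texttt{DD} marks every individual appearing in a negative test as healthy; this coincides exactly with the first step of \texttt{SCOMP}. In the second step \texttt{DD} declares infected every as-yet-unclassified individual that is the unique unclassified member of some positive test, which again coincides with the second step of \texttt{SCOMP}. After these two steps, \texttt{DD} simply classifies all remaining individuals as healthy. Therefore, if \texttt{DD} correctly returns $\SIGMA$, then after the first two shared steps the set of individuals labelled `infected' already equals $V_1$, which means that every positive test already contains a marked infected individual and the \texttt{while}-loop of \texttt{SCOMP} (Algorithm~\ref{SCOMP_algorithm}) is vacuous. Consequently \texttt{SCOMP} outputs the same (correct) configuration as \texttt{DD}, giving
\[
\Pr\brk{\texttt{SCOMP}(\G,\hat\SIGMA,k)=\SIGMA}\ \geq\ \Pr\brk{\texttt{DD}(\G,\hat\SIGMA,k)=\SIGMA}.
\]

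Taking the contrapositive and invoking \Thm~\ref{Thm_alg} for \texttt{SCOMP}, which in the regime $m<(1-\eps)\malg$ was established via \Prop s~\ref{Lemma_--} and~\ref{Lemma_+}, yields $\Pr\brk{\texttt{DD}(\G,\hat\SIGMA,k)=\SIGMA}=o(1)$, as desired. There is no genuine obstacle here: the entire content is the deterministic inclusion of event $\{\texttt{DD}\text{ succeeds}\}\subseteq\{\texttt{SCOMP}\text{ succeeds}\}$ supplied by the step-by-step matching of the two algorithms, so once \Thm~\ref{Thm_alg} is in hand the corollary follows with no further probabilistic computation.
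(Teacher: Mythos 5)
Your proposal is correct and takes essentially the same approach as the paper. The paper's proof is the one-sentence observation that $\Pr\brk{\texttt{SCOMP}\text{ succeeds}}\geq\Pr\brk{\texttt{DD}\text{ succeeds}}$ together with \Thm~\ref{Thm_alg}, and your step-by-step coupling of the two algorithms (shared first two steps, vacuous while-loop when \texttt{DD} succeeds) is precisely the standard justification of that inequality.
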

\color{black}

\subsection*{Acknowledgment}
We thank Arya Mazumdar for bringing the group testing problem to our attention.

\newpage
\begin{appendices}

\section{Notation}

\begin{center}
\begin{longtable}[h]{||p{2.1cm}|p{6.1cm}|p{6cm}||}
\hline
\textbf{Notation} & \textbf{Definition \& Properties} & \textbf{Description} \\
\hline\hline
$n$ & & population size\\
\hline
$k$ & $k\sim n^{\theta}$ for $\theta\in(0,1)$ & number of infected individuals\\
\hline
$m$& $m=c k \log (n/k)$ &number of tests\\
\hline
$x_1,\dots,x_n$ & & variable nodes\\
\hline
$V=V_n$ & $\cbc{x_i, \dots, x_n}$ & set of all individuals \\
\hline
$a_1,\dots,a_n$ & & factor nodes\\
\hline
$F=F_m$ & $\cbc{a_i, \dots, a_m}$ & set of all tests \\
\hline
$\Delta$& $\Delta=d \log(n/k)$&tests per individual, variable node degree\\
\hline
$\Gamma_1, \dots, \Gamma_m$ & $\bc{\sum_{i=1}^m \Gamma_i}/m = dn/(ck)$ & individuals per test, factor node degree\\ 
\hline
$\cG$& $(\Gamma_i)_{i \in [m]}$ & $\sigma$-algebra generated by the random variables $(\Gamma_i)_{i\in[m]}$\\
\hline
$\SIGMA \in \{0,1\}^V$ & $\sum_{i=1}^{n} \SIGMA_i=k$ &  $n$-dimensional vector of Hamming weight $k$ indicating the individuals' infection status \\
\hline
$\G=\G(n,m,\Delta)$ & & random bipartite graph on $n$ variable nodes, $m$ factor nodes and variable degree $\Delta$\\
\hline
$\partial x_i = \partial_{\G} x_i$ $\quad$ for $i \in [n]$ & $\partial x_i \subset F, \abs{\partial x_i} = \Delta$ & set of tests that individual $x_i$ participates in under $\G$\\
\hline
$\partial a_i=\partial_{\G} a_i$ $\quad$ for $i \in [m]$ & $\partial a_i \subset V, \abs{\partial a_i} = \Gamma_i$ & set of individuals in test $a_i$ under $\G$\\
\hline
$\hat\SIGMA \in \cbc{0,1}^F$ & $\hat\SIGMA_i= \vecone\cbc{\exists x\in \partial a_{i}:\SIGMA_x=1}$ & $m$-dimensional vector indicating the test outcomes \\
\hline
$\mone, \mzero$ & $\mone = \abs{\cbc{a \in F: \hat\SIGMA_a=1}}, \mzero = m - \mone$ & number of positive and negative tests \\
\hline
$V_0$& $V_0=\cbc{x\in V: \SIGMA_{x}=0}$ & set of healthy individuals \\
\hline
$V_1$& $V_1=V\setminus V_0, \abs{V}=k$ & set of infected individuals\\
\hline
$V_0^+$& $\cbc{x \in V_0:\forall a \in \partial x: \hat\SIGMA_a=1}$ & set of healthy individuals only included in positive tests\\
\hline
$V_0^-$ & $V_0^- = V_0 \setminus V_0^+$ & set of healthy individuals included in at least one negative test \\
\hline
$V_1^+$&$\cbc{x \in V_1: \forall a \in \partial x: \exists y \in \partial a \setminus \cbc{x}:\SIGMA_y=1}$  & set of infected individuals that have another infected individual in all their tests\\
\hline
$V_1^{--}$& $\cbc{x \in V_1:\exists a\in \partial x: \partial a \setminus \cbc{x} \subseteq V_0^-}$& Set of infected individuals that occur in at least one test with only healthy individuals\\
\hline
$\Gamma_{\min}, \Gamma_{\max}$& $\Gamma_{\min}=\min_{i\in[m]}\Gamma_i, \Gamma_{\max}=\max_{i\in[m]}\Gamma_i$& minimum and maximum test degree \\
\hline
$S_k(\G,\hat\SIGMA)$&  $S_k(\G,\hat\SIGMA)=\big\{\sigma\in \cbc{0,1}^V:$ $\qquad\qquad\qquad\qquad$ $\forall a_i\in[m]: \hat\SIGMA_{a_i}=\vecone\cbc{\exists x\in \partial a_i: \sigma_x=1}\big\}$& set of configurations consistent with the test results under $\G$\\
\hline
$Z_k(\G,\hat\SIGMA)$& $Z_k(\G,\hat\SIGMA)=\abs{S_k(\G,\hat\SIGMA)}$& number of configurations consistent with the test results\\
\hline
$Z_{k,\ell}(\G,\hat\SIGMA)$ & $Z_{k,\ell}(\G,\hat\SIGMA)=\abs{\cbc{\sigma\in S_k(\G,\hat\SIGMA): \scal{\SIGMA}\sigma=\ell}}$ & number of configuration consistent with the test results and with overlap $\ell$ with $\SIGMA$ \\
\hline
$\vY_i$ for $i \in [m]$ & $\vY_i= \abs{\cbc{x \in \partial a_i: \SIGMA_x = 1}}$ & number of edges that connect test $a_i$ with an infected individual \\
\hline
$\vX_i$ for $i \in [m]$ & $\vX_i \sim \Bin \bc{\Gamma_i, k/n}$ & binomially-distributed random variable with parameters $\Gamma_i$ and $k/n$ \\
\hline
$W, W'$ & $W=\sum_{i=1}^m \vecone \cbc{\vY_i = 1}, W'=\sum_{i=1}^m \vecone \cbc{\vX_i = 1}$ & $W$ is the number of tests containing a single infected individual, $W'$ is a random variable depending on $(\vX_i)_{i \in [m]}$ \\
\hline
$U$ & $U=\abs{\cbc{x \in V_1: \forall a_i \in \partial x: \vY_i>1}}$ & number of infected individuals not adjacent not any test with precisely one infected individual \\
\hline
$T$ & $\abs{\cbc{x\in V_1: \sum_{a\in\partial x}\vecone\cbc{\partial a\setminus\cbc x\subset V_0}<\delta\Delta}}$ & number of infected individuals who appear in less than $\delta\Delta$ tests as the only infected individual for some constant $\delta>0$ \\
\hline
$R$ & $R=| \big\{x \in V_1: \exists a_i \in \partial x:$ $\qquad \qquad \qquad \qquad$ $\vY_i>1 \text{ and } \partial a \setminus \cbc{x} \subset V_0 \big\}|$ & number of infected individual adjacent to some test multiple times with no other infected individual besides themselves \\
\hline
$\vA'_{i,j}, \vA'_{i,j,k}$ & & auxiliary random variables, defined in proof of \Prop~\ref{Prop_small_overlap} \\
\hline
$\cA$ & $\cA=\big\{\forall i\in[m]:$ $\qquad\qquad\qquad\qquad\qquad\qquad$ $\max_{j\in[\Gamma_i]}\vA_{i,j,1}'=\max_{j\in[\Gamma_i]}\vA'_{i,j,2}\big\}$ & event that every test under the balls-and-bins experiment features the same test result \\
\hline
$\cE$& $\cE=\cbc{\sum_{i\in[m]} \vX_i=k\Delta}$& event that the sum of $\vX_i$ is exactly $k\Delta$\\
\hline
$\cM$ & & set of all indices $i\in[m]$ for which there exists precisely one $g_i\in[\Gamma_i]$ such that $\vA_{i,g_i,1}'=1$\\
\hline
$\cN$ & & set of indices $i\in[m]$ such that $\max_{j\in[\Gamma_i]}\vA_{i,j,1}'=0$ \\
\hline
$\cR$& $\cR=\cbc{\forall x \in V_1: \abs{\cbc{a \in \partial x: \partial a \setminus \cbc{x} \subset V_0}} \geq \delta\Delta}$  & event that for every $x \in V_1$ there are at least $\delta\Delta$ tests $a\in\partial x$ for some $\delta>0$ such that $\partial a\setminus \cbc{x}\subset V_0$.\\
\hline
$\cS$ & & event that one specific $\sigma$ that has overlap $\ell$ with $\SIGMA$ belongs to $S_k(\G,\hat\SIGMA)$ \\
\hline
$\cT$ & & event that sum of independent random variable is equal to specific value, defined in \eqref{eq_def_T} \\
\hline
$\cV$ & $ \cV = \cbc{\mone = \frac{m}{2}(1 + o(1))}$ & event that around half of the tests are positive \\
\hline
$\cW$ & $\cW=$ $\qquad\qquad\qquad\qquad\qquad\qquad\qquad\qquad\qquad$ $\cbc{ |V_0^+| = (1 + o(1))(n - k)( 1 - \exp(-d/c) )^\Delta }$ & event that the size of $V_0^+$ is concentrated around its mean \\
\hline
$o(1), \omega(1)$ & & $o(1)$ [$\omega(1)$] denotes a term that vanishes [diverges] in the limit of large $n$ \\
\hline
\whp & & probability of $1-o(1)$ as $n \to \infty$ \\
\hline
\end{longtable}
\end{center}

The following sections contain the proofs of the lemmas omitted so far.

\section{Preliminaries}\label{Sec_sub_proof}
\subsection{Preliminaries}\label{Sec_stirling}
We start by stating the Chernoff bound as applied in this paper.
\begin{lemma}[Chernoff bound, \cite{JLR} (Section 2.1)] \label{lemma_chernoff}
Let $\vX \sim \Bin(n,p)$ be a binomially-distributed random variable with $\lambda = \Erw[\vX]$. Further, let
\begin{align*}
    \varphi: (-1,\infty) \to \RR_{\geq 0}, x \mapsto (1+x) \log (1+x) - x
\end{align*}
Then for some $t \geq 0$,
\begin{align*}
    \Pr \bc{\abs{\vX - \lambda} \geq t} \leq \exp \bc{- \lambda \varphi \bc{t/\lambda} - (n-\lambda) \varphi \bc{-t/(n-\lambda)}}
\end{align*}
\end{lemma}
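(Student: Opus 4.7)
The plan is to follow the classical exponential moment (Cramér--Chernoff) argument, treating the upper and lower tails separately and recognising the KL-divergence $\KL{q}{p} = q\log(q/p) + (1-q)\log((1-q)/(1-p))$ as the natural rate function, which can then be recast in the $\varphi$-form via the identity $\KL{p + s}{p} = p\,\varphi(s/p) + (1-p)\,\varphi(-s/(1-p))$.

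First I would handle the upper tail $\Pr[\vX \ge \lambda + t]$. By Markov's inequality applied to $\mathrm{e}^{s\vX}$ with $s > 0$,
\begin{align*}
\Pr[\vX \ge \lambda + t] \le \frac{\Erw[\mathrm{e}^{s\vX}]}{\mathrm{e}^{s(\lambda+t)}} = \bigl(1 - p + p\mathrm{e}^{s}\bigr)^{n}\,\mathrm{e}^{-s(\lambda+t)}.
\end{align*}
Taking the logarithm and minimising in $s$ yields the optimal choice $\mathrm{e}^{s^{*}} = \frac{(\lambda+t)(1-p)}{p(n-\lambda-t)}$, and substituting back gives exactly $-n\,\KL{(\lambda+t)/n}{p}$. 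The symmetric argument with $s < 0$ produces the same bound for $\Pr[\vX \le \lambda - t]$, so that $\Pr[|\vX-\lambda| \ge t] \le 2\exp(-n\,\KL{(\lambda\pm t)/n}{p})$.

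Second, I would translate the KL-divergence into the $\varphi$-form stated in the lemma. Writing $q = (\lambda+t)/n = p + t/n$, a direct computation gives
\begin{align*}
n\,\KL{q}{p} = \lambda\,\varphi(t/\lambda) + (n-\lambda)\,\varphi(-t/(n-\lambda)),
\end{align*}
which can be verified by expanding both sides using $\lambda = np$. The same identity, with $t$ replaced by $-t$ and using $\varphi(-x) \ge \varphi(x)$ for $x \in (0,1)$ after a brief monotonicity check, takes care of the lower tail and explains why the two $\varphi$-terms appear additively in the final bound.

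The only mildly delicate step is the algebraic verification of the KL-to-$\varphi$ identity and, for the lower tail, confirming that the bound with the $\varphi(-t/\lambda), \varphi(t/(n-\lambda))$ arguments is dominated by the one written in the statement (so that a single uniform expression suffices for $|\vX - \lambda| \ge t$). Since this is a textbook result, the cleanest write-up is simply to cite \cite{JLR} (Section 2.1) and record the identity above for later use in the applications.
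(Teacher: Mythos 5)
The paper does not prove this lemma at all: it is stated as a cited textbook result (\cite{JLR}, \S 2.1), and the paper proceeds directly to its applications. So there is no ``paper proof'' to compare your sketch against. For the record, your exponential-moment / Cram\'er--Chernoff route is indeed the standard way this bound is derived, and your algebraic identity
\begin{align*}
n\,\KL{(\lambda+t)/n}{p}=\lambda\,\varphi(t/\lambda)+(n-\lambda)\,\varphi\bc{-t/(n-\lambda)}
\end{align*}
is correct and is the crux of the translation into $\varphi$-form. That part of the argument is sound and recoverable.

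There is, however, a soft spot in your treatment of the lower tail. Carrying out the same optimisation for $\Pr[\vX\le\lambda-t]$ gives $n\,\KL{(\lambda-t)/n}{p}=\lambda\,\varphi(-t/\lambda)+(n-\lambda)\,\varphi(t/(n-\lambda))$, i.e.\ the signs of \emph{both} arguments flip, with the $\varphi(\cdot)$ at a negative argument now paired with the coefficient $\lambda$ instead of $n-\lambda$. Your appeal to the pointwise inequality $\varphi(-x)\ge\varphi(x)$ therefore does not by itself show that the lower-tail exponent dominates (or is dominated by) the one written in the lemma: which of the two is larger depends on whether $\lambda$ or $n-\lambda$ is larger (they coincide only when $p=1/2$). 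Moreover, summing the two one-sided bounds gives $\Pr[|\vX-\lambda|\ge t]\le 2\exp(-\min\{\cdot,\cdot\})$, so some bookkeeping is needed to absorb the factor $2$ and the $\min$ into a single expression. In the regime of the paper ($p=k/n\to 0$, so $\lambda\ll n-\lambda$) the expression in the statement is in fact the smaller of the two exponents, hence the binding one, and the extra factor $2$ is irrelevant at the polynomial scales used; but a self-contained proof should say this explicitly rather than rely on $\varphi(-x)\ge\varphi(x)$. Since the lemma is merely cited, the cleanest course — as you yourself suggest — is to leave the citation and, if desired, record the KL-to-$\varphi$ identity as a remark.
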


As an application, we readily find
\begin{align*}
    \Pr \bc{\abs{\vX - \lambda} \geq \sqrt{n} \log n} \leq n^{-\omega(1)}
\end{align*}

\color{black}
	
Next, we justify that the Stirling approximation of Section \ref{Sec_InfThUpper} is accurate. Namely, let $\vA_{i,j}'=(\vA_{i,j,1}',\vA_{i,j,2}')$ be $\{0,1\}^2$-valued random variables such that $(\vA_{i,j}')_{i\in[m],j\in[\Gamma_i]}$ are mutually independent and such that
\begin{align*}
\pr\brk{\vA_{i,j}'=(1,1)}&=\ell/n,&
\pr\brk{\vA_{i,j}'=(0,1)}&=\pr\brk{\vA_{i,j}'=(1,0)}=(k-\ell)/n,\\
\pr\brk{\vA_{i,j}'=(0,0)}&=(n-2k+\ell)/n
\end{align*}
for all $i,j$.
As before, we denote by $\cT$ the event that 
\begin{align*}
\sum_{i=1}^m\sum_{j=1}^{\Gamma_i}\vecone\cbc{\vA_{i,j}'=(1,1)}&=\ell\Delta,
\qquad \sum_{i=1}^m\sum_{j=1}^{\Gamma_i}\vecone\cbc{\vA_{i,j}'=(0,0)}=(n-2k+\ell)\Delta,\\
\sum_{i=1}^m\sum_{j=1}^{\Gamma_i}\vecone\cbc{\vA_{i,j}'=(1,0)}&=
\sum_{i=1}^m\sum_{j=1}^{\Gamma_i}\vecone\cbc{\vA_{i,j}'=(0,1)}=(k-\ell)\Delta,
\end{align*}
i..e, that all of the sums on the l.h.s.\ are {\em precisely} equal to their expected values.
Since the $(\vA_{i,j}')_{i,j}$ are independent, Stirling's formula yields
\begin{align}
    \pr\brk{\cT}=\Omega \bc{(\Delta k)}^{-3/2}.  \label{eqLLTnew1}
\end{align}

	This can be seen as follows. For the sake of brevity, define  $$p_{00} = (n-2k+\ell)/n, \quad p_{11} = \ell /n, \quad \text{and} \quad p_{10} = p_{01} = (k-\ell)/n.$$
	As $\vA_{i,j}'$ is a family of independent multinomial variables $$\vA_{i,j}' \sim \Mult \bc{ 1, (p_{11}, p_{00}, p_{10}, p_{01}) },$$ we find $$\vX \sim \sum_{i=1}^m\sum_{j=1}^{\Gamma_i} \vA_{i,j}' \sim \Mult \bc{ n \Delta, (p_{11}, p_{00}, p_{10}, p_{01}) }.$$ Hence, the probability of event $\cT$ occurring is the probability, that $\vX$ hits its expectation.
	Thus, using the very basic approximation $n! = \Theta \bc{ \sqrt{n}} (n/\eul{})^n$ we find 
	\begin{align}
	    \notag \Pr \bc{ \cT } &= \frac{(n \Delta)! (\ell/n)^{\ell \Delta} ((n-2k + \ell)/n)^{(n-2k+\ell) \Delta} ((k-\ell)/n)^{2 (k - \ell) \Delta}}{ (\ell \Delta)! ( (n-2k + \ell)\Delta )! ( (k-\ell)\Delta )! ( (k-\ell)\Delta )!} \\
	    \notag &= \Theta \bc{ \frac{ \sqrt{n \Delta} }{ \sqrt{ \ell (n - 2k \ell) (k - \ell)^2 \Delta^4  } } } \bc{ \frac{ (n \Delta)^n (\ell/n)^\ell ((n - 2k + \ell)/n)^{n-2k + \ell} ((k-\ell)/n)^{2(k - \ell)}} { \ell^\ell (n - 2k + \ell)^{n - 2k + \ell} (k -\ell)^{2(k - \ell)}} }^{\Delta} \\
	    \notag &= (1 + O(1/n)) \Theta \bc{ \frac{ \sqrt{n} }{ \sqrt{n} \sqrt{ \ell k^2 - 2 \ell^2 k + \ell^3 - k \bc{ 2 \ell k^2 / n - 2 k \ell^2 / n } + \ell^4/n } } } \\
	    &= \Omega \bc{ \sqrt{\Delta^{-3} (\ell k^2 + \ell^2 k + \ell^3)^{-1} } } = \Omega \bc{ (\Delta k)^{-3/2} }, \label{eq_explain_omega}
	\end{align}
	where \eqref{eq_explain_omega} follows immediately from $\ell \leq k = o(n)$ and directly implies \eqref{eqLLTnew1}. In due course we apply similar calculations often, some calculations involve conditional probabilities. These conditions are only restricting $\Gamma_i$ to take specific (common) values and clearly the above argument is totally invariant under different values of $\Gamma_i$, as long as $\sum_i^m \Gamma_i = n \Delta$. 
	
\color{black}

\subsection{Getting started}
In the next step, recall that neighbourhoods of different tests in the random multi-graph seizably intersect. To
cope with the ensuing correlations, we introduce a new family of random variables that, as we will see, are closely
related to the statistics of the appearances of infected/uninfected individuals in the various tests.
Specifically, recalling that $\Gamma_i$ signifies the degree of test $a_i$ \mhk{and that $\sum_{i=1}^m \Gamma_i = n \Delta$}, let $(\vX_i)_{i \in [m]}$ be a sequence of independent $\Bin(\Gamma_i, k/n)$-variables. Moreover, let
\begin{align*}
	\cE&=\cbc{\sum_{i\in[m]}\vX_i=k\Delta}.
\end{align*}
Because the $\vX_i$ are mutually independent, \pl{Stirling's formula} shows that
\begin{align}\label{eqEprob}
	\pr\brk\cE&=\Omega(1/\sqrt{\Delta k}),
\end{align}
which follows along the lines of \Sec~\ref{Sec_stirling}.
Additionally, let $\vY_i$ be the number of edges that connect test $a_i$ with an infected individual.
(Since $\G$ is a multi-graph, it is possible that an infected individual contributes more than one to $\vY_i$.) Further, let $\cG$ be the $\sigma$-algebra generated by the random variables $(\Gamma_i)_{i\in[m]}$. \mhk{Whenever we condition on $\Gamma$, we assume that the bounds from \Lem~\ref{Lemma_GammaMinMax} and \ref{Lemma_m0} hold.}

\begin{lemma}\label{Lemma_Elemma}
	Given $\cG$, the vectors $(\vY_1,\ldots,\vY_{m})$ and $(\vX_1,\ldots,\vX_m)$ given $\cE$ are identically distributed.
\end{lemma}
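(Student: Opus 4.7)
My plan is to show that both $(\vY_1,\ldots,\vY_m) \mid \cG$ and $(\vX_1,\ldots,\vX_m) \mid \cG, \cE$ coincide with the multivariate hypergeometric distribution with parameters $(n\Delta, k\Delta;\Gamma_1,\ldots,\Gamma_m)$, and then conclude by uniqueness of the joint law.

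First I would use the pairing-model description of $\G$ conditioned on $\cG$ that has already been invoked in the proof of \Prop~\ref{Prop_small_overlap}. Given $\cG$, one may generate $\G$ by taking a uniformly random perfect matching between $\bigcup_{i=1}^{n}\{x_i\}\times[\Delta]$ and $\bigcup_{i=1}^{m}\{a_i\}\times[\Gamma_i]$. Under this coupling $\vY_i$ is precisely the number of clones of infected individuals (of which there are $k\Delta$ in total) matched to one of the $\Gamma_i$ clones of test $a_i$. Counting the number of perfect matchings consistent with a prescribed vector $(y_1,\ldots,y_m)$ satisfying $\sum_i y_i = k\Delta$ gives
\begin{align*}
\Pr[(\vY_1,\ldots,\vY_m)=(y_1,\ldots,y_m)\mid\cG]
=\frac{\prod_{i=1}^{m}\binom{\Gamma_i}{y_i}(k\Delta)!((n-k)\Delta)!}{(n\Delta)!}
=\frac{\prod_{i=1}^{m}\binom{\Gamma_i}{y_i}}{\binom{n\Delta}{k\Delta}},
\end{align*}
which is exactly the multivariate hypergeometric mass function.

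Second, I would compute the conditional law of $(\vX_1,\ldots,\vX_m)$ given $\cE$ directly from independence. Since $\sum_i \Gamma_i = n\Delta$ (which follows from the definition of $\Gamma_i$ and the fact that each individual has degree $\Delta$), the sum $\sum_i \vX_i$ is $\Bin(n\Delta, k/n)$-distributed, so $\Pr[\cE] = \binom{n\Delta}{k\Delta}(k/n)^{k\Delta}(1-k/n)^{(n-k)\Delta}$. For any $(y_1,\ldots,y_m)$ with $\sum_i y_i = k\Delta$,
\begin{align*}
\Pr[(\vX_1,\ldots,\vX_m)=(y_1,\ldots,y_m)\mid\cG,\cE]
&=\frac{\prod_{i=1}^{m}\binom{\Gamma_i}{y_i}(k/n)^{y_i}(1-k/n)^{\Gamma_i-y_i}}{\Pr[\cE]}
=\frac{\prod_{i=1}^{m}\binom{\Gamma_i}{y_i}}{\binom{n\Delta}{k\Delta}},
\end{align*}
where the $(k/n)$ and $(1-k/n)$ factors cancel because $\sum_i y_i = k\Delta$ and $\sum_i(\Gamma_i-y_i) = (n-k)\Delta$.

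Comparing the two displays finishes the proof. There is essentially no technical obstacle here; the only mild care required is to verify that the pairing-model construction indeed realises the with-replacement sampling prescribed for $\G$, and that the cancellation of the binomial probability factors uses both constraints $\sum y_i = k\Delta$ and $\sum\Gamma_i = n\Delta$ simultaneously. Everything else is bookkeeping.
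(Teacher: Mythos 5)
Your proof is correct and takes essentially the same route as the paper: both hinge on the pairing-model description of $\G$ given $\cG$ for the $\vY$-side and on independence plus the constraint $\sum_i\Gamma_i=n\Delta$ for the $\vX$-side. The only cosmetic difference is that the paper compares probability ratios $\pr[\vY=y]/\pr[\vY=y']$ and $\pr[\vX=y\mid\cE]/\pr[\vX=y'\mid\cE]$ to avoid writing out the normalising constant, whereas you compute both mass functions explicitly and identify them with the multivariate hypergeometric law; both are valid and amount to the same calculation.
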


\begin{proof}
	For any integer sequence $(y_i)_{i\in[m]}$ with $y_i\geq0$ and $\sum_{i\in[m]}y_i=k\Delta$ we have
	\begin{align*}
		\pr\brk{\forall i\in[m]:\vY_i=y_i\mid\cG}&=\frac{\binom{k\Delta}{y_1,\ldots,y_m}\binom{(n-k)\Delta}{\Gamma_1-y_1,\ldots,\Gamma_m-y_m}}{\binom{n\Delta}{\Gamma_1,\ldots,\Gamma_m}}=\mhk{\frac{\prod_{i=1}^m \frac{\Gamma_i!}{y_i! (\Gamma_i-y_i)!}}{\frac{(n\Delta)!}{(k\Delta)! ((n-k)\Delta)!}}}=\binom{n\Delta}{k\Delta}^{-1}\prod_{i=1}^m\binom{\Gamma_i}{y_i}.
	\end{align*}
	Hence, for any sequences $(y_i),(y_i')$ we obtain
	\begin{align*}
		\frac{\pr\brk{\forall i\in[m]:\vY_i=y_i\mid\cG}}{\pr\brk{\forall i\in[m]:\vY_i=y_i'\mid\cG}}
		&=\prod_{i=1}^m\frac{\binom{\Gamma_i}{y_i}}{\binom{\Gamma_i}{y_i'}}
		=\frac{\pr\brk{\forall i\in[m]:\vX_i=y_i\mid\cG,\cE}}{\pr\brk{\forall i\in[m]:\vX_i=y_i'\mid\cG,\cE}},
	\end{align*}
	as claimed.
\end{proof}

\subsection{Proof of \Lem~\ref{Lemma_GammaMinMax}}
Since each variable draws a sequence of $\Delta$ tests uniformly at random, for every $i\in[m]$ the degree $\Gamma_i$ has distribution $\Bin(n\Delta,1/m)$.
Therefore, the assertion follows from the Chernoff bound.

\subsection{Proof of \Lem~\ref{Lemma_m0}}
Let $\mzero'=\sum_{i=1}^m\vecone\cbc{\vX_i=0}$.
Then $\Erw[\mzero']=\sum_{i=1}^m\pr\brk{\Bin(\Gamma_i,k/n))=0}=\sum_{i=1}^m(1-k/n)^{\Gamma_i}$.
Hence, \Lem~\ref{Lemma_GammaMinMax} shows that with probability $1-o(n^{-2})$,
\begin{align}\label{eqLemma_m0_1}
	\Erw[\mzero'\mid\cG]&\geq m(1-k/n)^{\Gamma_{\max}}=m\exp\bc{(\Delta n/m+O(\sqrt{\Delta n/m}\log n))\log(1-k/n)}\\
	&=m\bc{\exp(-d/c)+O(\sqrt{k/n}\log n)},\\
	\Erw[\mzero'\mid\cG]&\leq m(1-k/n)^{\Gamma_{\min}}=m\bc{\exp(-d/c)+O(\sqrt{k/n}\log n)}.
\end{align}
Because the $\vX_i$ are mutually independent, $\mzero'$ is a binomial variable.
Therefore, the Chernoff bound \pl{(e.g. \Lem~\ref{lemma_chernoff})} shows that
\begin{align}\label{eqLemma_m0_2}
	\pr\brk{\abs{\mzero'-\Erw[\mzero'\mid\cG]}>\sqrt{m}\log n\mid\cG}=o(n^{-10}).
\end{align}
Finally, the assertion follows from \eqref{eqEprob}, \eqref{eqLemma_m0_1}--\eqref{eqLemma_m0_2} and \Lem~\ref{Lemma_Elemma}.

\subsection{Proof of \Lem~\ref{Lemma_Delta_too_small_big}}
The expected degree of a test $a_i$ equals $\Delta n/m$.
Therefore, if $\Delta=o(\log(n/k))$, then by \Lem~\ref{Lemma_m0}, $\mone=o(m)$ \whp{}
To exploit this fact, call $\sigma\in\cbc{0,1}^V$ of Hamming weight $k$ {\em bad} for $\G$ if given $\SIGMA=\sigma$ we indeed have $\mone=o(m)$.
Let $B(\G)$ be the set of all such bad $\sigma$.
Then \whp\ $\G$ has the property that $|B(\G)|\sim\binom nk$, \pl{i.e. asymptotically most configurations will have few positive tests}.
Now, condition on the event that $|B(\G)|\sim\binom nk$ and let $\cB$ be the set of all subsets of $[m]$ of size $o(m)$.
Further, let $f_{\G}:B(\G)\to\cB$ map $\sigma\in\cbc{0,1}^V$ to the corresponding set of positive tests.
Finally, let $B'(\G)$ be the set of all $\sigma\in B(\G)$ such that $|f_{\G}^{-1}(f_{\G}(\sigma))|<n$, \pl{i.e. the set of all configurations for which there are less than $n$ other configurations rendering the same test results}.
Then
\begin{align*}
	|B'(\G)|&\leq n|\cB|\leq n\binom{m}{o(m)}=\exp(o(m))=o\bc{\binom nk}.
\end{align*}
Consequently, \whp\ over the choice of $\G$ and $\SIGMA$ we have $Z_k(\G,\hat\SIGMA)\geq n$.
The same argument applies for $\log(n/k)=o(\Delta)$ with the term `positive test' replaced by `negative test'.

\subsection{Proof of \Prop~\ref{Lemma_V_all}} \label{subsec_start_prop}
We start by proving part (\ref{Lemma_V0-}) \mhk{using a straightforward second-moment calculation}. \mhk{Recall $\Delta=d \log(n/k)$ and $m=ck\log(n/k)$.}
	\Lem~\ref{Lemma_GammaMinMax} and \Lem~\ref{Lemma_m0} show that with probability at least $1-o(n^{-2})$ the total degree of the negative tests comes to
	\begin{align*}
		\sum_{i=1}^m\vecone\cbc{\partial a_i\subset V_0}\Gamma_i&=\Delta n\exp(-d/c)+O\bc{\sqrt{m}\log^2 (n)\Delta n/m+m\sqrt{\Delta n/m}\log n}\\
		&=\Delta n\exp(-d/c)+O\bc{\bc{\sqrt{nk}+n/\sqrt{k}}\log^3n}=\Delta n\bc{\exp(-d/c)+n^{-\Omega(1)}}.
	\end{align*}
	Consequently, with probability at least $1-o(n^{-2})$ the total number of edges between $V_0$ and the set of positive tests is $\Delta n\bc{1-\exp(-d/c)+n^{-\Omega(1)}}$. \pl{Moreover, the total number of edges between $V_0$ and all tests comes down to $\Delta(n-k)$}.
	Given these events \pl{and since each individual is assigned to tests uniformly at random with replacement}, the probability that a given $x\in V_0$ belongs to $V_0^+$ comes out as
	\begin{align*}
		\binom{\Delta n\bc{1-\exp(-d/c)+n^{-\Omega(1)}}}{\Delta}\binom{\Delta(n-k)}{\Delta}^{-1}&%=\prod_{j=0}^{\Delta-1}\frac{\Delta n\bc{\exp(-d/c)+O(n^{-\Omega(1)})}-j}{\Delta(n-k)-j}\\&
		=\bc{1+n^{-\Omega(1)}}\bc{1-\exp(-d/c)}^\Delta.
	\end{align*}
	\pl{Next}, we estimate the probability that $x,x'\in V_0$ both belong to $V_0^+$:
	\begin{align*}
		\binom{\Delta n\bc{1-\exp(-d/c)+n^{-\Omega(1)}}}{2\Delta}\binom{\Delta(n-k)}{2\Delta}^{-1}&%=\prod_{j=0}^{\Delta-1}\frac{\Delta n\bc{\exp(-d/c)+O(n^{-\Omega(1)})}-j}{\Delta(n-k)-j}\\&
		=\bc{1+n^{-\Omega(1)}}\bc{1-\exp(-d/c)}^{2\Delta},
	\end{align*}
	Hence, $\Erw[|\zeroplus{}|^2\mid\cG]-\Erw[|\zeroplus{}|\mid\cG]^2=O(n^{2-\Omega(1)})$.
	Therefore, the assertion follows from Chebyshev's inequality.

Proceeding with part (\ref{Lemma_V1plus}), let \pl{the number of tests containing a single infected individual be}
	\begin{align*}
		W&=\sum_{i=1}^m\vecone\cbc{\vY_i=1},&W'&=\sum_{i=1}^m\vecone\cbc{\vX_i=1}.
	\end{align*}
	Then \Lem~\ref{Lemma_GammaMinMax} shows that \whp\ 
	\begin{align*}
		\Erw[W']&=\sum_{i=1}^m\frac{\Gamma_i k}{n}\bc{1-k/n}^{\Gamma_i-1}
		\leq \frac{\Gamma_{\max} km}{n}\bc{1-k/n}^{\Gamma_{\min}-1}\\
		&=\bc{1+n^{-\Omega(1)}}k\Delta(1-k/n)^{\Delta n/m}=\bc{1+n^{-\Omega(1)}}k\Delta\exp(-d/c)
	\end{align*}
	Analogously,
	\begin{align*}
		\Erw[W']&\geq \frac{\Gamma_{\min} km}{n}\bc{1-k/n}^{\Gamma_{\max}}
		=\bc{1+n^{-\Omega(1)}}k\Delta\exp(-d/c).
	\end{align*}
	Hence, because $W'$ is a binomial random variable, the Chernoff bound \pl{(e.g. \Lem~\ref{lemma_chernoff})} shows that
	$$\pr\brk{W'=\bc{1+n^{-\Omega(1)}}k\Delta\exp(-d/c)\mid\cG}=1-o(n^{-9}).$$
	Therefore, \eqref{eqEprob} yields
	\begin{align}\label{eqLemma_V1plus_10}
		\pr\brk{W=\bc{1+n^{-\Omega(1)}}k\Delta\exp(-d/c)\mid\cG} &=1-o(n^{-7}).
	\end{align}
	Now, let $U$ be the number of $x\in V_1$ that are not adjacent to any test with precisely one positive individual. \mhk{An individual $x \in V_1$ counts towards $U$, if out of all possible assignment $k\Delta$, it is only assigned to those tests where it is not the only infected individual (there are a total of $k\Delta-W$ such assignments).}
	\mhk{Using the notation $n^{\underline{k}} = n (n-1) \dots (n-k+1) $} and recalling $\Delta=\Theta(\log n)$, the bound on $W$ yields
	\color{black}
	\begin{align*}
		\Erw[U\mid\cG,W]&=k\binom{k\Delta-W}\Delta\binom{k\Delta}\Delta^{-1} = k \frac{(k\Delta-W)^{\underline{\Delta}}}{(k\Delta)^{\underline{\Delta}}} = \bc{1+n^{-\Omega(1)}} k \bc{\frac{k\Delta - W}{k \Delta}}^\Delta\\
		&=\bc{1+n^{-\Omega(1)}}k(1-W/k\Delta)^\Delta=\bc{1+n^{-\Omega(1)}}k(1-\exp(-d/c))^\Delta.
	\end{align*}
	\color{black}
	By a similar token we obtain
	\begin{align*}
		\Erw[U^2\mid\cG,W]&=k^2\binom{k\Delta-W}{2\Delta}\binom{k\Delta}{2\Delta}^{-1}=\bc{1+n^{-\Omega(1)}}\Erw[U\mid\cG,W]^2.
	\end{align*}
	Therefore, Chebyshev's inequality shows that \whp{}
	\begin{align}\label{eqLemma_V1plus_1}
		U&=\bc{1+n^{-\Omega(1)}}k(1-\exp(-d/c))^\Delta.
	\end{align}	
	To complete the proof we need to compare $U$ and $\abs\oneplus{}$.
	Clearly, $U\geq\abs\oneplus{}$.
	But the inequality may be strict because $U$ includes positive individuals that appear twice in the same test. \pl{To be precise, an individual might be assigned to one test twice as the only infected individual. Such an individual should not be in $V_1^+$, but it shows up in $U$.}
	Indeed, letting $R$ be the number of such individuals, we obtain $\abs\oneplus{}\geq U-R$.
	Hence, we are left to estimate $R$.
	\pl{To this end, we observe that the probability that an individual appears in a specific test twice is upper-bounded by $\bc{\Delta/m}^2$. \mhk{Recall $m=ck\log(n/k)$ and $\Delta=d \log(n/k)$.} Consequently, taking the union bound over all tests and infected individuals we yield
	$$\Erw[R\mid\cG]\leq k m \bc{\frac{\Delta}{m}}^2 = O(\log n).$$}
	Since by assumption the r.h.s.\ of \eqref{eqLemma_V1plus_1} is $n^{\Omega(1)}$, we conclude that 
	$\abs\oneplus{}\geq U-R=n^{\Omega(1)}$ \whp, as claimed.

Next, we consider (\ref{Lemma_V1plus_reverse}). Define $U$ as in the proof of \Prop~\ref{Lemma_V_all}(\ref{Lemma_V1plus}).
	Then we know that $U\geq\abs\oneplus{}$.
	Hence, if $k(1-\exp(-d/c))^\Delta=o(1)$ then $\abs{V_1^+}=o(1)$ due to \eqref{eqLemma_V1plus_1}.

For part (\ref{Lemma_V++}), we observe for a given $c$ that $\min_d (1-\exp(-d/c))^\Delta$ is attained at $d=c\log 2$. To see this, consider the function $f(d)=(1-\exp(-d/c))^\Delta=n^{(1-\theta)d \log (1-\exp(-d/c))}$ and observe that the minimum of $f(d)$ coincides with the minimum of $g(d)=d \log (1-\exp(-d/c))$. Letting $x=d/c$, the derivatives read as
	\begin{align*}
	g(x) &= cx \log (1-\exp(-x)) \\
	g'(x) &= c \bc{\log (1-\exp(-x)) + \frac{x \exp(-x)}{1-\exp(-x)}} \\
	g''(x) &= c \bc{-\frac{(x-2) \exp(x)+2}{(\exp(x)-1)^2}}
	\end{align*}		
For \pl{$d>0$}, the unique maximum is attained at $x=\log 2$ and accordingly, $d=c \log 2$. Furthermore, it is the case that $k(1-\exp(-\log 2))^{c \log 2 \log(n/k)} \geq n^{\Omega(1)}$ and therefore by \Prop~\ref{Lemma_V_all}(\ref{Lemma_V1plus}), $\abs\oneplus{} = n^{\Omega(1)}$. By a similar token by \Prop~\ref{Lemma_V_all}(\ref{Lemma_V0-}), $\abs\zeroplus{} = n^{\Omega(1)}$.
 
Finally, for part (\ref{Lemma_V+_reverse}), setting $d=c\log 2$, we see that $k(1-\exp(-\log 2))^{c \log 2 \log(n/k)} =o(1) $ and therefore by \Prop~\ref{Lemma_V_all}(\ref{Lemma_V1plus_reverse}), $\abs\oneplus{} = o(1)$.

\section{The information-theoretic upper bound}

\subsection{Proof of \Lem~\ref{Lemma_big_overlap}} \label{sec_big_overlap}
	The term $\bink{k}{\ell}\bink{n-k}{k-\ell}$ accounts for the number of assignments $\sigma\in\cbc{0,1}^V$ of Hamming weight $k$ whose overlap with $\SIGMA$ is equal to $\ell$.
	Hence, with $\cS$ {being} the event that one specific $\sigma\in\{0,1\}^{V}$ that has overlap $\ell$ with $\SIGMA$ belongs to $S_{k,\ell}(\G,\hat\SIGMA)$, we need to show that
	\begin{align}\label{eqLemma_small_overlap1}
	\pr\brk{\cS\mid\cG,\cR,\mzero}&\leq
	O\bc{\bc{\Delta k}^{3/2}} \cdot\bc{1-\bc{1-\frac{k-\ell}{n-k}}^{\Gamma_{\max}}}^{\delta \Delta (k-\ell)}\bc{\frac{n-2k+\ell}{n-k}}^{\Gamma_{\min}\mzero}
	\end{align}
	Due to symmetry we may assume that $\SIGMA_{x_i}=\vecone\{i\leq k\}$ and that $\sigma_{x_i}=\vecone\{i\leq\ell\}+\vecone\{k<i\leq 2k-\ell\}$.
	
	Proceeding as in the proof of \Prop~\ref{Prop_small_overlap}, we think of each test $a_i$ as a bin of capacity $\Gamma_i$ and of each clone $(x_i,h)$, $h\in[\Delta]$, of an individual as a ball labelled $(\SIGMA_{x_i},\sigma_{x_i})\in\{0,1\}^2$.
	We toss the $\Delta n$ balls randomly into the bins.
	For $i\in[m]$ and for $j\in[\Gamma_i]$ we let $\vA_{i,j}=(\vA_{i,j,1},\vA_{i,j,2})\in\{0,1\}^2$ be the label of the $j$th ball that ends up in bin number $i$.
	To cope with this experiment we introduce a new set $\{0,1\}^2$-valued random variables $\vA_{i,j}'=(\vA_{i,j,1}',\vA_{i,j,2}')$ such that $(\vA_{i,j}')_{i\in[m],j\in[\Gamma_i]}$ are mutually independent and
	\begin{align*}
	\pr\brk{\vA_{i,j}'=(1,1)}&=\ell/n,&
	\pr\brk{\vA_{i,j}'=(0,1)}&=\pr\brk{\vA_{i,j}'=(1,0)}=(k-\ell)/n,\\
	\pr\brk{\vA_{i,j}'=(0,0)}&=(n-2k+\ell)/n
	\end{align*}
	for all $i,j$.
	With $\cT$ being the event that 
	\begin{align} \label{eq_def_U}
	\sum_{i=1}^m\sum_{j=1}^{\Gamma_i}\vecone\cbc{\vA_{i,j}'=(1,1)}&=\ell\Delta,
	\qquad \sum_{i=1}^m\sum_{j=1}^{\Gamma_i}\vecone\cbc{\vA_{i,j}'=(0,0)}=(n-2k+\ell)\Delta,\\
	\sum_{i=1}^m\sum_{j=1}^{\Gamma_i}\vecone\cbc{\vA_{i,j}'=(1,0)}&=
	\sum_{i=1}^m\sum_{j=1}^{\Gamma_i}\vecone\cbc{\vA_{i,j}'=(0,1)}=(k-\ell)\Delta,
	\end{align}
	the vector $\vA'=(\vA_{i,j}')_{i,j}$ given $\cT$ is distributed as $\vA=(\vA_{i,j})_{i,j}$ given $\cG$.
	\mhk{Moreover, with similar arguments as in \Sec~\ref{Sec_stirling},} \pl{Stirling's formula} yields
	\begin{align}\label{eqLLT8}
	\pr\brk{\cT}=\Omega((\Delta k)^{-3/2}).
	\end{align}
	\color{black}
	Let $\cN$ be the set of indices $i\in[m]$ such that $\max_{j\in[\Gamma_i]}\vA_{i,j,1}'=0$.
	Moreover, let $\cM$ be the set of all indices $i\in[m]$ for which there exists precisely one $g_i\in[\Gamma_i]$ such that $\vA_{i,g_i,1}'=1$
	and such that for this index we have $\vA_{i,g_i,2}'=0$.
	Further, let
	\begin{align*}
	\cS'&=\cbc{\forall i\in\cN:\max_{j\in[\Gamma_i]}\vA_{i,j,2}'=0},&
	\cS''&=\cbc{\forall i\in\cM:\max_{j\in[\Gamma_i]}\vA_{i,j,2}'=1}.
	\end{align*}
	Then
	\begin{align*}
	\cA=\cbc{\forall i\in[m]:\max_{j\in[k]}\vA_{i,j,1}'=\max_{j\in[k]}\vA'_{i,j,2}}\subset\cS' \cap\cS''.
	\end{align*}
	Furthermore, given $\cN,\cM$ the events $\cS',\cS''$ are independent and
	\begin{align*}
	\pr\brk{\cS'\mid\cN}&= \prod_{i \in \cN}\bcfr{n-2k+\ell}{n-k}^{\Gamma_i} \leq \bcfr{n-2k+\ell}{n-k}^{\Gamma_{\min}\abs\cN},\\
	\pr\brk{\cS''\mid\cM}&=\prod_{i\in \cM} \bc{1-\bc{1-\frac{k-\ell}{n-k}}^{\Gamma_{i}-1}}\leq
	\bc{1-\bc{1-\frac{k-\ell}{n-k}}^{\Gamma_{\max}}}^{\abs\cM}.
	\end{align*}

\pl{For an intuitive explanation of the above expressions, please refer to the section immediately following the statement of the \Lem~\ref{Lemma_big_overlap}}.
%To convey the intuition of probabilities, we explain the expressions in terms of the original equations. This comparison is meant to be purely illustrative and does not claim complete accuracy, as introducing the set of independent variables $\vA'$ deals with dependencies in $\G$. The event $\cS'$ describes that a negative test under $\SIGMA$ remains negative under an alternative configuration $\tau$. This event occurs, if for every of the $\abs{\cN}$ negative tests every included individual which is healthy under $\SIGMA$ stays healthy under $\tau$. The probability of a single test $a_i$ satisfying this event is $$ \bcfr{n-2k+\ell}{n-k}^{\Gamma_i}. $$
%Similarly, $\cS''$ describes the event that any positive test $a$ with precisely one infected individual must remain positive under $\tau$. There are $k-\ell$ individuals which are classified as infected under $\SIGMA$ and healthy under $\tau$ and that each of them shows up in at least $\delta \Delta$ tests as the only infected individual. Moreover, the probability that a healthy individual under $\SIGMA$ is infected under $\tau$ is given by $\bc{k-\ell}/\bc{n-k}$. Thus, the probability that test $a_i$ containing a single infected individual whose infection status is changed under $\tau$ stays positive is given by $$1-\bc{1-\frac{k-\ell}{n-k}}^{\Gamma_{i}}. $$.
	Given $\abs\cN\geq \bc{1-n^{-\Omega(1)}}\mzero$ and $\abs\cM\geq\delta\Delta(k-\ell)$, we obtain
	\begin{align}\label{eqLLT9}
	\pr\brk{\cA\mid \abs\cN\geq \bc{1-n^{-\Omega(1)}}\mzero,\abs\cM\geq\delta\Delta(k-\ell)}&\leq 
	\bcfr{n-2k+\ell}{n-k}^{\Gamma_{\min}\mzero}
	\bc{1-\bc{1-\frac{k-\ell}{n-k}}^{\Gamma_{\max}}}^{\delta\Delta(k-\ell)}.
	\end{align}

%	Finally, we claim that \whp{} 
%	\begin{align}\label{eqLLT10}
%	\pr\brk{\cT\mid \cS,\cS',\abs\cN=\mzero,\abs\cM\geq\delta\Delta(k-\ell)}&=O((\Delta n)^{-3/2}).
%	\end{align}
%	Indeed, \Lem~\ref{Lemma_m0} ensures $\mzero\sim m/2$ \whp.
%	Further, since $k-\ell=o(k)$, we have $\abs\cM=o(m)$.
%	Therefore, $(1/2+o(1))m$ indices $i\in[m]$ remain unaffected by the events $\cS',\cS''$.
%	Thus, \eqref{eqLLT10} follows from the Local Limit Theorem for sums of independent random variables (\Lem~\ref{lemma_llt}).
	\color{black}
	Moreover, we find by \ref{Lemma_rigid}, the concentration of $\abs \cN$ and the fact that $\Erw\brk{\abs{\cN}}=\Erw\brk{\mzero}=m/2$
	\begin{align*}
	    \Pr\bc{\abs\cN\geq \bc{1-n^{-\Omega(1)}}\mzero,\abs\cM\geq\delta\Delta(k-\ell)} = 1-o(1)
	\end{align*}
	and thus 
	\begin{align*}
	    \pr\brk{\cT | \abs\cN\geq \bc{1-n^{-\Omega(1)}}\mzero,\abs\cM\geq\delta\Delta(k-\ell)}=\Omega((\Delta k)^{-3/2}).
	\end{align*}
	\color{black}
	Combining \eqref{eqLLT8}--\eqref{eqLLT9} and using the trivial bound
	\begin{align}\label{eqLLT10}
	\pr\brk{\cT\mid \cS,\cS',\abs\cN\geq \bc{1-n^{-\Omega(1)}}\mzero,\abs\cM\geq\delta\Delta(k-\ell)}&\leq 1,
	\end{align}
	we obtain \mhk{by Bayes Theorem}
	\begin{align}\label{eqLLT11}
	\pr\brk{\cA\mid\cT,\abs\cN\geq \bc{1-n^{-\Omega(1)}}\mzero,\abs\cM\geq\delta\Delta(k-\ell)}
	&\leq O\bc{\bc{\Delta k}^{3/2}} \bcfr{n-2k+\ell}{n-k}^{\bc{1-n^{-\Omega(1)}} \Gamma_{\min}\mzero} \bc{1-\bc{1-\frac{k-\ell}{n-k}}^{\Gamma_{\max}}}^{\delta\Delta(k-\ell)}.
	\end{align}
	Because $\vA'=(\vA_{i,j}')_{i,j}$ given $\cT$ is distributed as $\vA=(\vA_{i,j})_{i,j}$ given $\cG$, \eqref{eqLemma_small_overlap1} follows from \eqref{eqLLT11}.

\section{The {\tt SCOMP} algorithm}

%\subsection{Proof outline}
%\color{black}
%The basic idea behind the proof relies on the fact that once $m<(1-\eps)\malg$, we find \textit{a lot of} (potentially) false positives that cannot be distinguished from truly infected individuals by {\tt SCOMP}. In a first step we show that the number of potentially false positives is highly concentrated around its mean. Afterwards, we introduce a random variable $\vW$ that counts the number of tests that contain exactly one infected individual and no potentially false positive (thus, those infected individuals that can easily be identified). The purpose of \Sec~\ref{Proof_of_Lemma_--} is to show, that whenever we apply less than $\malg$ tests, $\vW \to \zero$ \whp, such that there will be no infected individuals that are easy to identify.

%Then, \Sec~\ref{Proof_of_Lemma_+} is devoted to proof that in this case there is indeed no chance for {\tt SCOMP} to identify \textit{any} infected individual correctly. We figure out, that for $m < (1 - \eps)\malg$, there are at least $k \log n$ potentially false positive individuals, such that each of them appears in $\Delta$ different tests. Thus, the identification property of (real) infected individuals used in the second step of {\tt SCOMP} fails with high probability and we are left with $k$ infected individuals and at least $k \log n$ potentially false positives out of which the third step of {\tt SCOMP} choses uniformly.

\color{black}
\subsection{Proof of \Prop~\ref{Lemma_--}} \label{Proof_of_Lemma_--}
\color{black}
The proof of \Prop~\ref{Lemma_--} proceeds in three steps. First, we show that $\abs{V_0^+}$ is concentrated around its expectation. $\cW$ denotes the corresponding event. Second, we need to get a handle on the subtle dependencies in $\G$. To this end, we introduce a set of independent multinomial random variables indexed over the tests. Whereas $\vY^i_1,\vY^i_{0+},\vY^i_{0-}$ denotes the number of infected, potentially false positive and definitively healthy individuals in test $a_i$, respectively, the triple $(\vX^i_1,\vX^i_{0+},\vX^i_{0-})$ denote the corresponding multinomial random variable. We will show that conditioned on the sum of $\vX^i_1,\vX^i_{0+},\vX^i_{0-}$ hitting the total number of individuals of the three types, $(\vX^i_1,\vX^i_{0+},\vX^i_{0-})$ is distributed like $\vY^i_1,\vY^i_{0+},\vY^i_{0-}$. The technical workout is delicate, but is based on standard results from balls-into-bins experiments. Third, we show that for $m<(1-\eps) \malg$, the number of tests $W$ for which $\vX^i_1=1$ and $\vX^i_{0+}=0$ decays exponentially in $n$, which implies that $\oneminusminus = \emptyset$ \whp

\color{black}

\begin{proof}
\Lem~\ref{Lemma_Delta_too_small_big} implies that the optimal choice for the variable degree is $\Delta = d \log(n/k)$ for a constant $d$.
%Then \whp\ the following two statements are true.
%\begin{itemize}
%\item If $c>\max\cbc{1,\theta/(1-\theta)}\log^{-2}2$ then $V_1^{--}=V_1$.
%\item If $c<\log^{-2}2$ then $V_1^{--}=\emptyset$.
%\end{itemize}
Let $\mone$ be the amount of positive tests and, w.l.o.g. assume that $a_1 ... a_{\mone}$ are the positive tests and define 
\begin{align*}
%\cV &= \cbc{\mone = \frac{m}{2}(1 + o(1))}, \\
\cW &= \cbc{ |V_0^+| = (1 + o(1))(n - k)( 1 - \exp(-d/c) )^\Delta }.
\end{align*}
as the event that the number of `potential false positives' $\abs{V_0^+}$ is highly concentrated around its mean.
Then by 
%\Lem~\ref{Lemma_m0} and 
\Prop~\ref{Lemma_V_all}(\ref{Lemma_V0-}), we find
\begin{align}
%\Pr[ \cV ] \geq 1-o(1) \label{Pr_cM_cV_Poly} \qquad \text{and} \qquad 
\Pr[ \cW ] \geq 1 - o(1)
\end{align}
Similarly as before, we introduce a family of independent random variables corresponding to the tests.

Let $\vY_1^1,\ldots,\vY_1^{\mone}$ be the number of ones \mhk{in the tests corresponding to} $a_1,\ldots,a_{\mone}$ \mhk{respectively}.
Let $\vY_{0+}^1,\ldots,\vY_{0+}^{\mone}$ count the $\zeroplus$ occurrences in $a_1,\ldots,a_{\mone}$.
Let $\vY_{0-}^1,\ldots,\vY_{0-}^{\mone}$ count the $\zerominus$ occurrences in $a_1,\ldots,a_{\mone}$. By definition we find $\vY_{0-}^i = \Gamma_i -\vY_{0+}^i - \vY_{1}^i$. We introduce auxiliary variables $\vX_1^1,\ldots,\vX^{\mone}_{1}$, $\vX^1_{0+},\ldots,\vX^{\mone}_{0+}, \vX^1_{0-},\ldots,\vX^{\mone}_{0-}$ such that $(\vX^i_1, \vX^i_{0+}, \vX^i_{0-})$ have distribution $$\Mult_{\geq (1,0,0)}(\Gamma_i, p, q, 1-p-q),$$ a multinomial distribution conditioned on the first variable being at least one. The triples $\left( (X^i_1, X^i_{0^+}, X^i_{0-}) \right)_{i \in \mone}$ are mutually independent. We \mhk{seek a choice of $p$ satisfying the equation} 
\begin{align*}
p := \frac{k \Delta}{\sum_{i=1}^{\mone} \frac{\Gamma_i}{1-(1-p)^{\Gamma_i}}} \qquad \qquad \text{and} \qquad \qquad q := \frac{\abs{\zeroplus} \Delta}{\sum_{i=1}^{\mone} \frac{\Gamma_i}{1-(1-p)^{\Gamma_i}}}.
\end{align*}
\mhk{and will show following equation \eqref{W_second} that such a choice exists.}
Define 
\begin{align*}
\cE = \cbc{\sum_{i=1}^{\mone}\vX^i_1=k\Delta,\sum_{i=1}^{\mone}\vX^i_{0+}=|\zeroplus|\Delta}.
\end{align*}
\mhk{Along the lines of \Sec~\ref{Sec_stirling} ,} \pl{Stirling's formula} implies
\begin{align}
\pr\brk{\cE}=\Omega(1/n). \label{Pr_cE_Poly}
\end{align}
Moreover, $(\vY^1_1,\vY^1_{0+},\vY^1_{0-},\ldots,\vY^{\mone}_1,\vY^{\mone}_{0+}, \vY^{\mone}_{0-})$ and $(\vX^1_1,\vX^1_{0+},\vX^1_{0-},\ldots,\vX^{\mone}_1,\vX^{\mone}_{0+}, \vX^{\mone}_{0-})$ given $\cE$ are identically distributed. This can be seen as follows: 
\begin{align*}
\notag \Pr & \left[ \forall i \in [\mone]: (\vY^{i}_1,\vY^{i}_{0+}, \vY^{i}_{0-}) = (y_i, y'_i, y''_i) \mid \cG, |V_0^+|, \mone \right] \\
& = \frac{\binom{k \Delta}{y_1 \dots y_{\mone}} \binom{|V_0^+| \Delta}{y_1' \dots y_{\mone}'} \binom{\sum_{i=1}^{\mone} \Gamma_i- \bc{k+|V_0^+|}\Delta}{\Gamma_1 - y_1 - y_1', \dots, \Gamma_{\mone} - y_{\mone} - y'_{\mone}}}{\binom{ \sum_{i=1}^{\mone} \Gamma_i }{ \Gamma_1, \dots , \Gamma_{\mone} }} \ind{\forall i \in [\mone]: y''_i = \Gamma_i - y_i - y'_i} \\
& = \binom{\sum_{i=1}^{\mone} \Gamma_i}{k\Delta, |V_0^+|\Delta, \sum_{i=1}^{\mone} \Gamma_i-\bc{k+|V_0^+|}\Delta} \prod_{i=1}^{\mone} \binom{\Gamma_i}{y_i, y'_i, \Gamma - y_i -y'_i} \ind{\forall i \in [\mone]: y''_i = \Gamma_i - y_i - y'_i}.
\end{align*}
Thus, given $y''_i = \Gamma_i - y_i - y'_i$ and $\tilde{y}''_i = \Gamma_i - \tilde{y}_i - \tilde{y}'_i$ for all $i \in [\mone]$, we find
\begin{align}
& \frac{\Pr \left[ \forall i \in [\mone]: (\vY^1_1,\vY^1_{0+},\vY^1_{0-})
= (y_i, y'_i, y'') \mid \cG, |V_0|^+, \mone \right]}{\Pr \left[ \forall i \in [\mone]: (\vY^1_1,\vY^1_{0+},\vY^1_{0-}) = (\tilde{y}_i, \tilde{y}'_i, \tilde{y}''_i) \mid \cG, |V_0|^+, \mone \right]} = \prod_{i=1}^{\mone} \frac{\binom{\Gamma_i}{y_i, y'_i, \Gamma - y_i -y'_i}}{\binom{\Gamma_i}{\tilde{y}_i, \tilde{y}'_i, \Gamma - \tilde{y}_i -\tilde{y}'_i}}. \label{Eq_XiYiSame_1}
\end{align}
Given $x_i'' = \Gamma_i - x_i - x'_i$, we find:
\begin{align*}
\Pr & \left[ \forall i \in [\mone]: (\vX^1_1,\vX^1_{0+},\vX^1_{0-}) = (x_i, x'_i, x_i'') \mid \cE, \cG, |V_0|^+, \mone \right] \\
&= \prod_{i=1}^{\mone} \binom{\Gamma_i}{x_i, x_i', x_i''} p^{x_i} q^{x_i'} (1-p-q)^{x''_i} \frac{1}{1 - (1-p)^{\Gamma_i}} \\ 
& = p^{k\Delta} q^{|V_0^+|\Delta} (1-p-q)^{\sum_{i=1}^{\mone} \Gamma_i-\Delta(k+|V_0^+|)} \prod_{i=1}^{\mone} \frac{1}{1 - (1-p)^{\Gamma_i}} \binom{\Gamma_i}{x_i, x_i', x_i''} 
\end{align*}
where the last equality follows from the fact that we conditioned on $\cE$. Since the first terms are independent of $x_i, x'_i, x''_i$, we find
\begin{align*}
\frac{\Pr \left[ \forall i \in [\mone]: (\vX^i_1,\vX^i_{0+},\vX^i_{0-}) = (x_i, x'_i, x_i'') \mid \cE, \cG, |V_0|^+, \mone \right]}{\Pr \left[ \forall i \in [\mone]: (\vX^i_1,\vX^i_{0+},\vX^i_{0-}) = (\tilde{x}_i, \tilde{x}_i', \tilde{x}_i'') \mid \cE, \cG, |V_0|^+, \mone \right]} = \prod_{i=1}^{\mone} \frac{\binom{\Gamma_i}{x_i, x'_i, \Gamma - x_i -x'_i}}{\binom{\Gamma_i}{\tilde{x}_i, \tilde{x}'_i, \Gamma - \tilde{x}_i -\tilde{x}'_i}}.
\end{align*}
Therefore, given $\Gamma_i = x_i + x'_i + x_i'' = \tilde{x}_i + \tilde{x}'_i + \tilde{x}''_i,$ we have by comparison with \eqref{Eq_XiYiSame_1},
\begin{align*}
&\frac{\Pr \left[ \forall i \in [\mone]: (\vX^i_1,\vX^i_{0+},\vX^i_{0-}) = (x_i, x'_i, x_i'') \mid \cE, \cG, |V_0|^+, \mone \right]}{\Pr \left[ \forall i \in [\mone]: (\vX^i_1,\vX^i_{0+},\vX^i_{0-}) = (\tilde{x}_i, \tilde{x}_i', \tilde{x}_i'') \mid \cE, \cG, |V_0|^+, \mone \right]} \\
= &\frac{\Pr \left[ \forall i \in [\mone]: (\vY^i_1,\vY^i_{0+},\vY^i_{0-})
= (x_i, x'_i, x_i'') \mid \cG, \mone \right]}{\Pr \left[ \forall i \in [\mone]: (\vY^i_1,\vY^i_{0+},\vY^i_{0-}) = (\tilde{x}_i, \tilde{x}'_i, \tilde{x}''_i) \mid \cG, \mone \right]},
\end{align*}
which yields the claim. Let
\begin{align*}
W&=\sum_{i=1}^{\mone}\vecone\cbc{\vX_1^i+\vX_{0+}^i=1}.
\end{align*}
\pl{be the number of positive tests that contain exactly one infected individual and no healthy individuals in $V_0^+$. Note that this split is the only possibility for the test to be positive.}
Then
\begin{align*}
\Erw[W | \cG, \cE, \abs{\zeroplus}, \mone] = \sum_{i=1}^{\mone} \Pr[\vX_{1}^i=1, \vX_{0+}^i=0, \vX_{0-}^i=\Gamma_i-1]=\sum_{i=1}^{\mone}\frac{\Gamma_i p (1-p-q)^{\Gamma_i-1}}{1-(1-p)^{\Gamma_i}}.
\end{align*}
By \Lem~\ref{Lemma_m0} we readily find for any choice of $c,d=\Theta(1)$ that
\begin{align}
    \sum_{i=1}^{\mone} \frac{\Gamma_i p (1-p-q)^{\Gamma_i-1}}{1-(1-p)^{\Gamma_i}} = (1+o(1)) \sum_{i=1}^m \Gamma_i p (1-p-q)^{\Gamma_i-1} \label{eq_mone}
\end{align}
Hence,
\begin{align*}
m \Gamma_{\min}p(1-p-q)^{\Gamma_{\max}} \leq \Erw[W \mid \cG, \cE, |V_0^+|, \mone] \leq m \Gamma_{\max}p(1-p-q)^{\Gamma_{\min}-1}.
\end{align*}
Moreover, since $W$ is a binomial random variable, the Chernoff bound \pl{(e.g. \Lem~\ref{lemma_chernoff})} shows that
\begin{align*}
\pr\brk{\abs{W-\Erw[W \mid \cG, \cE, |V_0^+|]}>\sqrt{m}\log n}&\leq O(n^{-2}).
\end{align*}
Further, \Lem~\ref{Lemma_GammaMinMax} yields approximations for $\Gamma_{\min}$ and $\Gamma_{\max}$.
% First, assume that $c>\max\cbc{1,\theta/(1-\theta)}\log^{-2}2$ \pl{which in particular implies $c>\log^{-2}2$}.
% Then, for this choice of $c$ we find $$\left(1-(k/n)^{c \log^2 2}\right)^{\log(2) n/k} = 1-o(1), $$ and with $d \coloneqq c \log 2$ the following holds:
% \color{black}
% \begin{align}
% \Erw & [W| \cG, \cE, \cW] \nonumber \\
% &= (1+o(1))m \bc{\log(2) n/k} \bc{k/n} \bc{\bc{1+n^{-\Omega(1)}} \bc{1-k/n} \bc{1-(1-\exp(-d/c))}^{\Delta}}^{\bc{1+n^{-\Omega(1)}}\log(2)n/k} \label{W_first} \\
% % &= \exp\left(n/k \log 2  (1+o(1)) \log \left((1-k/n)\bc{1+n^{-\Omega(1)}}(1-(1-\exp(-d/c)))^{\Delta}\right) \right) \\
% &= (1+o(1))m \log (2) \bc{\bc{1+n^{-\Omega(1)}} \bc{1-k/n} \bc{1-(1-\exp(-d/c))}^{\Delta}}^{\bc{1+n^{-\Omega(1)}}\log(2)n/k} \nonumber \\
% &= \frac{(1+o(1))m \log 2}{2} \sim \frac{k \Delta}{2} \nonumber.
% \end{align}
% \color{black}
% As \eqref{Pr_cM_cV_Poly} and \eqref{Pr_cE_Poly} show, we condition on events \pl{that are polynomial in $n$}. Thus, we find \whp\ $W=k\Delta/2+O(\sqrt{k\Delta}\log n)$. Hence, the probability that a given $x\in V_1$ does not belong to $V_1^{--}$ equals
% \begin{align*}
% \binom{k\Delta-W}{\Delta}\binom{k\Delta}{\Delta}^{-1}&\sim 2^{-\Delta}.
% \end{align*}
% \pl{Thus, the expected number of \pl{infected individuals not in $V_1^{--}$} is $(1+o(1))|V_1|2^{-\Delta}=o(1)$ as long as $c>\frac{\theta}{1-\theta}\frac{1}{\log^2 2}$, which holds by assumption.}
Now assume that $c<\log^{-2}2$. Using a similar reformulation as in \eqref{eq_mone}, we find that $p=(1+o(1)) k/n$. Thus, we have
\begin{align}
\Erw & [W| \cG, \cE, \cW] \nonumber \\
&= (1+o(1))m \frac{dn}{ck} \frac{k}{n} \exp \bc{(1+o(1)) \frac{dn}{ck} \log \bc{(1-k/n)\bc{1+n^{-\Omega(1)}}(1-(1-\exp(-d/c)))^{\Delta}}} \nonumber \\
&= (1+o(1))m \exp\bc{-d/c} \frac dc \left(1-(k/n)^{-d \log(1-\exp(-d/c))}\right)^{dn/(ck)} \label{W_second}
\end{align}

As \Lem~\ref{Lemma_Delta_too_small_big} shows, the optimal value of $d$ is a constant. For a fixed $c$ the same $d$ that maximizes $-d/c \log(1-\exp(-d/c))$ in \eqref{W_second}, also maximizes $\Erw [W| \cG, \cE, \abs{\zeroplus}]$. This maximum is attained at $d=c\log 2$. Consequently $p=o(q)$ and
\begin{align*}
q&\sim\bcfr kn^{c\log^22}.
\end{align*}

Hence,
\begin{align*}
\Erw[W \mid \cG, \cE, \cW]&\sim\frac{k\Delta}{2}\exp\bc{-(\log 2)\bcfr nk^{1-c\log^22}}=\exp(-n^{\Omega(1)}).
\end{align*}
As before, we find $\Erw[W] \to 0$ \whp\ \mhk{since $\Pr(\cW) = 1-o(1)$ and $\Pr(\cE) = \Omega(1/\sqrt{\Delta k})$} and Markov's inequality leads to $V_1^{--}=\emptyset$.
\Prop~\ref{Lemma_--} follows.
\color{black}
\end{proof}

\subsection{Proof of \Prop~\ref{Lemma_+}} \label{Proof_of_Lemma_+}

\mhk{By \Lem~\ref{Lemma_V_all}, we have $\abs{\zeroplus} \geq k \log n$ for $m<(1-\eps) \malg$. To prove \Prop~\ref{Lemma_+}, we need to show that for such $m$, we also have $\abs{V_0^{+,\Delta}} \geq k \log n$. We proceed in two steps. First, we show that every individual $x \in V$ is assigned to at least $\Delta - O(1)$ distinct tests. Second, we show that a constant fraction of individuals $x \in \zeroplus$ are assigned to exactly $\Delta$ tests establishing \Prop~\ref{Lemma_+}.}

\begin{proof}
Let $d^\star(x)$ be the number of distinct neighbors of a vertex $x$. We claim that \whp\ the following statements are true.
\begin{align*}
\min_{x\in V}d^\star(x)&\geq\Delta-2/\theta^2.
%\abs{\cbc{x\in V_0:d^\star(x)=\Delta}}&=(n-k)(1-n^{-\Omega(1)}),\\
%\abs{\cbc{x\in V_1:d^\star(x)=\Delta}}&= k(1-n^{-\Omega(1)}).
\end{align*}
The probability that a given $x\in V$ appears $\ell\ge2$ times in the same test is upper-bounded by
\begin{align*}
\binom\Delta\ell m^{1-\ell}&\leq\frac m{\ell!}\bcfr{d}{ck}^\ell=\frac{ck\log(n/k)}{\ell!}\bcfr{d}{ck}^\ell
\leq \frac{c (d/c)^\ell}{\ell!} n^{(1-\ell)\theta+o(1)}=o(1/n),
\end{align*}
provided that $\ell>1+1/\theta$.
\mhk{Moreover, the probability that $x$ appears in one test twice is upper-bounded by $\Delta \dot \Delta/m$.}
Thus, the probability that $x$ appears in at least $\ell$ tests at least twice is upper-bounded by
\color{black}
\begin{align*}
\sum_{i=\ell}^{\floor{\Delta/2}}\bc{\frac {\Delta^2} {m}}^i = (1+o(1)) \bc{\frac {\Delta^2} {m}}^\ell \leq (1+o(1)) \bcfr{O(\log^2 n)}{ck\log(n/k)}^\ell=n^{-\theta\ell+o(1)}=o(1/n),
\end{align*}
\color{black}
provided that $\ell>1/\theta$ \mhk{and since $m=ck\log(n/k)$ and $\Delta=d \log(n/k)$}.
The bound follows.
%Similar estimates establish the other two statements.

\mhk{By \Lem~\ref{Lemma_V_all}, we know that for $m<(1-\eps) \malg$, $\abs{\zeroplus} \geq k \log n$ \whp. Since the {\tt SCOMP} algorithm in its third stage selects the individual with the highest number of adjacent unexplained tests, we are left to show that also $\abs{{\zeroplus}^{,\Delta}} \geq k \log n$, which implies that \whp\ we erroneously classify a healthy individual as infected. The prior bounds} ensure that each individual is in at least $\Delta - O(1)$ tests. The question remains which fraction of individuals in $V_0^+$ are in ${\zeroplus}^{,\Delta}$.
In principle, it could be the case that most \mhk{potentially false positive} individuals of $V_0^+$ appear in less than $\Delta$ different tests. Indeed, it is more likely for such an individual in $V_0^+$ to be in fewer than $\Delta$ different tests since each additional test increases the probability for such an individual to be assigned to a negative test. However, we claim that a constant fraction of all \mhk{potentially false positive} individuals in $V_0^+$ \mhk{will have degree $\Delta$, thus} be in ${\zeroplus}^{,\Delta}$. To see this, let $p$ be the \mhk{maximum proportion of $\abs{{\zeroplus}^{,\Delta-i}}$ and $\abs{{\zeroplus}^{,\Delta-i+1}}$ for $i \in [2/\theta^2]$, i.e.}
\color{black}
\begin{align*}
    p = \max_{i \in [2/\theta^2]} \frac{\abs{{\zeroplus}^{,\Delta-i}}}{\abs{{\zeroplus}^{,\Delta-i+1}}}
\end{align*}
\color{black}
By conditioning on a test degree sequence $\Gamma_1, \dots, \Gamma_m$, we find
$$ p \geq (1-(1-(k/n))^{\Gamma_{\min}}) = \Theta(1), $$
%Further, the distribution of different tests one individual participates in is by construction independent from being infected or healthy. However, observe that the test distribution must not be identical between individuals in $\zeroplus$ and $\zerominus$. Indeed, it is to be expected that individuals in $\zerominus$ participate in more tests than individuals in $\zeroplus$, since each test increases the probability for a healthy individual to be in $\zerominus$. In order to analyse the test distribution among individuals in $\zeroplus$, consider two random individuals in $\zeroplus$ that participate in $l$ and $l+1$ different tests. Using a pairing argument, 
as long as $c,d=\Theta(1)$, which by \Lem~\ref{Lemma_Delta_too_small_big} we can safely assume. Since each individual in $\zeroplus$ is in at least $\Delta-O(1)$ different tests and the probability of being in any number of different tests $\Delta, \Delta-1 \dots$ is constant, a constant fraction of individuals in $\zeroplus$ will be in exactly $\Delta$ tests. Since $\abs{\zeroplus} = \Omega(k \log n)$, the claim follows.
\end{proof}

\end{appendices}

\newpage

\end{document}